\newif\ifTR
  \let\@authorsaddresses\@empty
\newcommand{\ol}[1]{\textcolor{blue}{\ifmmode \text{[OL: #1]}\else [OL: #1] \fi}}
\newcommand{\sg}[1]{\textcolor{cyan}{\ifmmode \text{[SG: #1]}\else [SG: #1] \fi}}
\newcommand{\vh}[1]{\textcolor{green!50!black}{\ifmmode \text{[VH: #1]}\else [VH: #1] \fi}}
\newcommand{\jv}[1]{\textcolor{magenta}{\ifmmode \text{[JV: #1]}\else [JV: #1] \fi}}
\newcommand{\lh}[1]{\textcolor{pink}{\ifmmode \text{[LH: #1]}\else [LH: #1] \fi}}
\newcommand{\regexex}[0]{R_{\texttt{\textbackslash 3\textbackslash 2\textbackslash 1}}}
\newcommand{\lnref}[1]{Line~\ref{#1}}
\newtheorem{fact}{Fact}
\newtheorem{remark}{Remark}
\newtheorem{claim}{Claim}
\newtheorem{claimproof}{ClaimProof}
\newcommand{\claimqed}[0]{\hfill \ensuremath{\vartriangleleft}}
\newcommand{\hlbox}[1]{\fcolorbox{blue!15}{blue!15}{#1}}
\newcommand{\hlmathbox}[1]{\hlbox{\ensuremath{#1}}}
\newcommand{\nat}[0]{\mathbb{N}}
\newcommand{\natzero}[0]{\nat_{0}}
\newcommand{\natbot}[0]{\nat_\bot}
\newcommand{\set}[1]{\{#1\}}
\newcommand{\partialto}[0]{\mathrel{\rightharpoonup}}
\newcommand{\initof}[1]{[#1]}
\newcommand{\datadom}[0]{\mathbb{D}}
\newcommand{\lenof}[1]{|#1|}
\newcommand{\word}[0]{w}
\newcommand{\wordof}[1]{\word_{#1}}
\newcommand{\projsigmaof}[1]{\Sigma[#1]}
\newcommand{\projdataof}[1]{\datadom[#1]}
\newcommand{\projof}[2]{#1[#2]}
\newcommand{\img}[0]{\mathit{img}}
\newcommand{\dom}[0]{\mathit{dom}}
\newcommand{\liff}[0]{\mathrel{\Leftrightarrow}}
\newcommand{\confs}[0]{\mathcal{C}}
\newcommand{\confsenc}[0]{\mathsf{Confs}}
\newcommand{\sngconf}[0]{\mathbf{S}}
\newcommand{\post}[0]{\mathsf{Post}}
\newcommand{\minskyinc}[0]{\mathtt{inc}}
\newcommand{\minskydec}[0]{\mathtt{dec}}
\newcommand{\minskyifzero}[0]{\mathtt{ifzero}}
\newcommand{\minskyincof}[1]{\minskyinc_{#1}}
\newcommand{\minskydecof}[1]{\minskydec_{#1}}
\newcommand{\minskyifzeroof}[1]{\minskyifzero_{#1}}
\newcommand{\minsky}[0]{\mathcal{M}}
\newcommand{\pair}[2]{\langle #1, #2\rangle}
\newcommand{\aut}[0]{\mathcal{A}}
\newcommand{\autof}[1]{\aut_{#1}}
\newcommand{\regs}[0]{\mathbf{R}}
\newcommand{\regsof}[1]{\regs[#1]}
\newcommand{\inp}[0]{\mathit{in}}
\newcommand{\guard}[0]{g}
\newcommand{\guardeq}[0]{g^{=}}
\newcommand{\guardneq}[0]{g^{\neq}}
\newcommand{\guardin}[0]{g^{\in}}
\newcommand{\guardnotin}[0]{g^{\notin}}
\newcommand{\guardempty}[0]{g^{=\emptyset}}
\newcommand{\update}[0]{\mathit{up}}
\newcommand{\updateof}[1]{\update(#1)}
\newcommand{\remove}[0]{\mathit{rm}}
\newcommand{\step}[0]{\vdash}
\newcommand{\stepofusing}[2]{\mathrel{\step^{#1}_{#2}}}
\newcommand{\lang}[0]{\mathcal{L}}
\newcommand{\langof}[1]{\lang(#1)}
\newcommand{\reflang}[0]{\lang_{\mathtt{ref}}}
\newcommand{\reflangof}[1]{\reflang(#1)}
\newcommand{\Deltaempty}[0]{\Delta^{=\emptyset}}
\newcommand{\Deltarem}[0]{\Delta^{\remove}}
\newcommand{\Deltaremempty}[0]{\Delta^{\remove}_{=\emptyset}}
\newcommand{\Deltaeqreg}[0]{\Delta^{\mathit{=r}}}
\newcommand{\vectof}[1]{\mathbf{v}_{#1}}
\newcommand{\guess}[0]{\mathsf{guess}}
\newcommand{\spread}[0]{\mathsf{spread}}
\newcommand{\dra}[0]{\mathrm{DRA}}
\newcommand{\draof}[1]{\dra_{#1}}
\newcommand{\nra}[0]{\mathrm{NRA}}
\newcommand{\nraof}[1]{\nra_{#1}}
\newcommand{\ura}[0]{\mathrm{URA}}
\newcommand{\uraof}[1]{\ura_{#1}}
\newcommand{\nraeq}[0]{\nra^{=}}
\newcommand{\uraeq}[0]{\ura^{=}}
\newcommand{\draeq}[0]{\dra^{=}}
\newcommand{\nraeqof}[1]{\nraeq_{#1}}
\newcommand{\uraeqof}[1]{\uraeq_{#1}}
\newcommand{\draeqof}[1]{\draeq_{#1}}
\newcommand{\ara}[0]{\mathrm{ARA}}
\newcommand{\araof}[1]{\ara_{#1}}
\newcommand{\araguessspread}[0]{\ara(\guess, \spread)}
\newcommand{\araguessspreadshort}[0]{\ara(\mathsf{g,s})}
\newcommand{\araguessspreadshortof}[1]{\araof{#1}(\mathsf{g,s})}
\newcommand{\rsa}[0]{\mathrm{RSA}}
\newcommand{\rsaof}[1]{\rsa_{#1}}
\newcommand{\drsa}[0]{\mathrm{DRSA}}
\newcommand{\drsaof}[1]{\drsa_{#1}}
\newcommand{\rsaempty}[0]{\rsa^{=\emptyset}}
\newcommand{\rsaeqreg}[0]{\rsa^{=r}}
\newcommand{\rsaremempty}[0]{\rsa^{\mathit{rm}}_{=\emptyset}}
\newcommand{\rsarem}[0]{\rsa^{\mathit{rm}}}
\newcommand{\booleanof}[1]{\mathcal{B}(#1)}
\newcommand{\pa}[0]{\mathrm{PA}}
\newcommand{\fomega}[0]{\mathbf{F}_{\omega}}
\newcommand{\pspace}[0]{\textsc{Pspace}}
\newcommand{\np}[0]{\textsc{NP}}
\newcommand{\rep}[0]{\mathit{repeat}}
\newcommand{\langexrep}[0]{L_{\exists \rep}}
\newcommand{\langnegexrep}[0]{L_{\neg\exists \rep}}
\newcommand{\langexnegexrep}[0]{L_{\exists,\neg\exists \rep}}
\newcommand{\langallrep}[0]{L_{\forall \rep}}
\newcommand{\langnegallrep}[0]{L_{\neg\forall \rep}}
\newcommand{\langdisjoint}[0]{L_{\mathit{disjoint}}}
\newcommand{\langallaexb}[0]{L_{\forall a\exists b}}
\newcommand{\langexanob}[0]{L_{\exists\mathit{a\text{-}no\text{-}b}}}
\newcommand{\langnegexanob}[0]{L_{\neg\exists\mathit{a\text{-}no\text{-}b}}}
\newcommand{\langtransact}[0]{\lang_\textit{transact}}
\newcommand{\ltl}[0]{\mathrm{LTL}}
\newcommand{\ltlfreeze}[0]{\ltl^{\downarrow}}
\newcommand{\ltlfreezeofn}[2]{\ltlfreeze_{#1}(#2)}
\newcommand{\ltlnext}[0]{\mathtt{X}}
\newcommand{\ltluntil}[0]{\mathtt{U}}
\newcommand{\expl}[1]{\lbag#1\rbag}
\newcommand{\cmark}{\ding{51}}
\newcommand{\xmark}{\ding{55}}
\newcommand{\tabyes}[0]{\textcolor{green!70!black}{\cmark}}
\newcommand{\tabno}[0]{\textcolor{red!70!black}{\xmark}}
\newcommand{\net}[0]{\mathcal{N}}
\newcommand{\netof}[1]{\net_{#1}}
\newcommand{\tuple}[1]{\langle #1 \rangle}
\newcommand{\intrans}[0]{\mathit{In}}
\newcommand{\outrans}[0]{\mathit{Out}}
\newcommand{\transfer}[0]{\mathit{Transfer}}
\newcommand{\unitof}[1]{\mathbf{u}_{#1}}
\newcommand{\identity}[0]{\mathbf{id}}
\newcommand{\fire}[1]{[#1\rangle}
\newcommand{\region}[0]{\mathit{rgn}}
\newcommand{\regionof}[1]{\region_{#1}}
\newcommand{\cmplof}[1]{\overline{#1\vphantom{t}}}
\newcommand{\dst}[0]{\mathit{dst}}
\newcommand{\posit}[0]{\mathit{pst}}
\newcommand{\positsop}[0]{\posit_\Sigma}
\newcommand{\positpos}[0]{\posit_\Pi}
\newcommand{\negat}[0]{\mathit{ngt}}
\newcommand{\krugel}[0]{\coprod}
\newcommand{\init}[0]{\mathit{init}}
\newcommand{\qinit}[0]{q_{\init}}
\newcommand{\qmain}[0]{q_{\mathit{main}}}
\newcommand{\fin}[0]{\mathit{fin}}
\newcommand{\qfin}[0]{q_{\fin}}
\newcommand{\lossyremgadget}[0]{\textsc{LossyRm}}
\newcommand{\lossyremgadgetof}[1]{\lossyremgadget(#1)}
\newcommand{\movegadget}[0]{\textsc{Move}}
\newcommand{\movegadgetof}[1]{\movegadget(#1)}
\newcommand{\newtokengadget}[0]{\textsc{NewToken}}
\newcommand{\newtokengadgetof}[1]{\newtokengadget(#1)}
\newcommand{\regof}[1]{r_{#1}}
\newcommand{\tmp}[0]{\mathit{tmp}}
\newcommand{\concat}[0]{\mathbin{\cdot}}
\newcommand{\iterof}[2]{#1^{[#2]}}
\newcommand{\unprimeof}[1]{\mathit{unprime}(#1)}
\newcommand{\emptygadget}[0]{\textsc{Empty}}
\newcommand{\emptygadgetof}[1]{\emptygadget(#1)}
\newcommand{\tokenremgadget}[0]{\textsc{TokenRem}}
\newcommand{\tokenremgadgetof}[1]{\tokenremgadget(#1)}
\newcommand{\emptyeqgadget}[0]{\textsc{EmptyEq}}
\newcommand{\emptyeqgadgetof}[1]{\emptyeqgadget(#1)}
\newcommand{\nonlossyrmgadget}[0]{\textsc{NonLossyRm}}
\newcommand{\nonlossyrmgadgetof}[1]{\nonlossyrmgadget(#1)}
\newcommand{\greycell}[0]{\cellcolor{black!15}}
\newcommand{\pictrans}[3]{\footnotesize\begin{array}{c} \framebox{$#1$} \begin{array}{l} #2 \end{array}\\ \begin{array}{l}\hline #3 \end{array}\end{array}}
\tikzstyle{state}=[draw,circle,inner sep=1mm,minimum width=6mm]
\DeclareRobustCommand{\shortto}{%
  \mathrel{\mathpalette\short@to\relax}%
}
\DeclareRobustCommand{\shortminus}{%
  \mathrel{\mathpalette\short@minus\relax}%
}
\newcommand{\short@to}[2]{%
  \mkern2mu
  \clipbox{{.5\width} 0 0 0}{$\m@th#1\vphantom{+}{\rightarrow}$}%
}
\newcommand{\short@minus}[2]{%
  \mkern2mu
  \clipbox{{.5\width} 0 0 0}{$\m@th#1\vphantom{+}{-}$}%
}
\newcommand{\transeps}[2]{\trans{#1} a {\emptyset, \emptyset, \emptyset}{#2}}
\tikzset{st/.style={font=\ttfamily,shape=rectangle,rounded corners=.5em,draw=black,fill=gray!10,inner xsep=.3em,inner ysep=0em,text height=2.3ex,text depth=1.0ex}}
\newcommand{\translab}[1]{\text{\small {\protect\tikz[baseline,node distance=2mm]{%
  \protect\node[st] (nd) at (0,.64ex) {\hspace{.1em}\texttt{\upshape\strut$#1$}\hspace{.1em}\strut};%
  \protect\node[inner sep=0mm, left=of nd.west] (lhs){};%
  \protect\node[inner sep=0mm, right=of nd.east] (rhs){};%
  \protect\draw (lhs) edge (nd);%
  \protect\draw[>=stealth',->] (nd) edge (rhs);%
  }}}}
\newcommand{\trans}[4]{#1 \translab{#2 \mid #3} #4}
\newcommand{\transsimple}[3]{#1 \translab{#2} #3}
\newcommand\tool{\textsf{rsamatch}\xspace}
\newcommand\grep{\textsf{grep}\xspace}
\newcommand\re{\textsf{re}\xspace}
\newcommand{\retwo}{\textsf{RE2}\xspace}
\newcommand\pcre{\textsf{pcre2}\xspace}
\newcommand\js{\textsf{js}\xspace}
\newcommand\java{\textsf{java}\xspace}
\newcommand\nettool{\textsf{.net}\xspace}
\newcommand{\hyperscan}{\textsf{HyperScan}\xspace}
\newcommand{\rengar}[0]{\textsc{Rengar}\xspace}
\newcommand{\RGBcircle}[1]{\ensuremath{{\color[RGB]{#1}\bullet}}}
\newcommand{\redef}{\mathcal{R}}
\newcommand{\backref}[1]{\textbackslash#1}
\newcommand{\capgroup}[1]{(#1)}
\newcommand{\semof}[1]{\llbracket #1 \rrbracket}
\newcommand{\bproj}[0]{\mathsf{interp}}
\newcommand{\capg}[0]{\mathsf{match}}
\newcommand{\symin}[2]{\langle #1, #2, \mathtt{in} \rangle}
\newcommand{\symref}[1]{\langle #1, \mathtt{ref} \rangle}
\newcommand{\Sigmaref}[0]{\Sigma_\mathtt{ref}}
\newcommand{\regexra}[0]{\mathsf{rebr2ra}}
\newcommand{\dwlang}[1]{#1^{\bullet}}
\newcommand{\dword}[0]{\mathsf{ord}}
\newcommand{\deriv}[0]{\partial}
\newcommand{\derivof}[2]{\deriv_{#1}(#2)}
\newcommand{\dertuple}[3]{\langle #1, #2, #3\rangle}
\newcommand{\rewb}[0]{REBR\xspace}
\newcommand{\rewbs}[0]{REBRs\xspace}
\newcommand{\regex}[0]{\mathit{RE}}
\newcommand{\nulla}[0]{\mathit{nullable}}
\newcommand{\nullaof}[1]{\nulla(#1)}
\newcommand{\mytrue}[0]{\mathit{true}}
\newcommand{\myfalse}[0]{\mathit{false}}
\newcommand{\bigO}[0]{\mathcal{O}}
\newcommand{\subst}[2]{[#1 \mapsto #2]}
\keywords{
regular expression matching,
backreferences,
ReDoS,
determinization,
Antimirov's derivatives,
register automata,
register set automata
}
\begin{document}

\ifTR
\title{Towards Efficient Matching of Regexes with Backreferences using Register
  Set Automata (Technical Report)}
\else
\title{\cbstart Towards Efficient Matching of Regexes with Backreferences using
  Register Set Automata\cbend}
\fi

\author{Vojtěch Havlena}
\orcid{0000-0003-4375-7954}
\affiliation{%
  \institution{Brno University of Technology}
  \city{Brno}
  \country{Czech Republic}
}
\email{ihavlena@fit.vutbr.cz}

\author{Lukáš Holík}
\orcid{0000-0001-6957-1651}
\affiliation{%
  \institution{Brno University of Technology}
  \city{Brno}
  \country{Czech Republic}
}
\affiliation{%
  \institution{Aalborg University}
  \city{Aalborg}
  \country{Denmark}
}
\email{holik@fit.vutbr.cz}

\author{Ondřej Lengál}
\orcid{0000-0002-3038-5875}
\affiliation{%
  \institution{Brno University of Technology}
  \city{Brno}
  \country{Czech Republic}
}
\email{lengal@fit.vutbr.cz}

\author{Jan Vašák}
\orcid{0009-0006-4410-0398}
\affiliation{%
  \institution{Brno University of Technology}
  \city{Brno}
  \country{Czech Republic}
}
\email{xvasak01@stud.fit.vutbr.cz}

\author{Sabína Gulčíková}
\orcid{0009-0002-1783-9390}
\affiliation{%
  \institution{Brno University of Technology}
  \city{Brno}
  \country{Czech Republic}
}
\email{xgulci00@stud.fit.vutbr.cz}

\begin{abstract}
  Matching regexes (regular expressions) is a~common problem in many areas of
  computer science, with requirements on high speed and robust performance.
  Regexes with backreferences allow one to express certain patterns (even beyond regular) concisely,
  however, since the matching is usually done by backtracking, the matching
  speed can degrade to a degree that constitutes a service failure or a security threat.
  To facilitate high-speed matching of such regexes,
  we propose \emph{register set automata} ($\rsa$s), an extension of register automata
  where registers can contain \emph{sets} of symbols (from a~potentially infinite alphabet) and the
  following operations are supported:
  adding input values to registers, merging or clearing registers, and testing
  whether a~register contains a~value.
  We show that a~large class of \emph{register automata} can be transformed
  into \emph{deterministic} $\rsa$s, which can serve as a~basis for
  fast matching of a~family of regexes with single-letter
  capture groups and backreferences.
  We also give a~derivative-based algorithm for transforming a~large class of
  regexes with backreferences to register automata and show that the time complexity of
  matching is linear and quadratic to the length of the input for finite and
  infinite alphabets respectively.
  Our prototype implementation of a~regex matcher shows that our approach can
  significantly improve the robustness of state-of-the-art regex matchers on
  regexes with backreferences.
  We also study the theoretical properties of the model and
  show that the emptiness problem for $\rsa$s is decidable and complete for the
  $\fomega$ class and that
  $\rsa$s are incomparable in expressive power to other popular automata models over data
  words.
\end{abstract}

\maketitle

\vspace{-0.0mm}
\section{Introduction}\label{sec:label}
\vspace{-0.0mm}

\emph{Regular expression} (regex) matching is a task performed routinely in
computer programs, such as in searching, data validation, parsing,
finding and replacing, data leak detection, or syntax highlighting.
Studies show that 30--40\,\% of Java, JavaScript, and Python software use regex
matching~\cite{Davis19}.
In many of these applications, performance of regex matching is crucial, such
as in various online services, where underperforming regex matching over user
input can cause the server to become unresponsive and unusable.
If such a~situation is caused intentionally by a~malicious user, we talk about
the so-called \emph{regular expression denial of service} (ReDoS)
attack~\cite{OwaspReDoS}.
ReDoS is a~real-world threat; it caused, for instance,
the 2016 outage of StackOverflow~\cite{stackoutage} or rendered
vulnerable websites that were using the Express.js framework~\cite{expressjsoutage}.

While there are matchers that have a~reasonably robust performance on basic
regexes---such as \grep~\cite{grep}, \retwo~\cite{re2}, or
\hyperscan~\cite{WangHCPLHZ19}---their performance on \emph{extended regexes}
can quickly deteriorate~\cite{TuronovaHHLVV22} (or they do not support the particular extensions at
all~\cite{re2-add-backref,hyperscan-unsupported-backref}).
Examples of such extensions are, e.g., bounded
counters~\cite{TuronovaHLSVV20,TuronovaHHLVV22}, lookaheads and lookbehinds,
or capture groups and backreferences, which are the topic of this paper.
%
%
The performance degrades mainly due to the inability of matchers to use the fastest automata-based algorithms that have constant or at most linear per-character cost.
These algorithms are based on the automata determinization,%
\footnote{In
practice, the matchers are based on Thompson's algorithm~\cite{Thompson1968},
which does not build the (possibly prohibitively large) DFA
\emph{a~priori} but, instead, \emph{on the fly}, while using cache to store already
constructed parts of the DFA.
}
which cannot easily accommodate the syntactic features of extended regexes
(the language of a~regex with backreferences may even not be regular).
\begin{changebar}
The matchers then usually fall back to \emph{backtracking}, which is much
slower on average and exponential at worst. 
\end{changebar}

\emph{Pattern matching using regular expressions with
backreferences} is performed ubiquitously, e.g., in validating user inputs on web
pages, processing text using the \texttt{grep} and \texttt{sed} tools,
transforming XML documents, or detecting network
incidents~\cite{Snort,BecchiC2008}.
Consider, for instance, the (extended) regex
\begin{center}
$\regexex = {}$
\texttt{/(.).*;.*(.).*;.*(.).*\textbackslash 3\textbackslash 2\textbackslash
1/},
\end{center}
where we use
the wildcard~\texttt{"."} to denote any symbol except the semicolon~\texttt{";"}
(i.e., it stands for the character class \texttt{"[\textasciicircum;]"}),
parentheses \texttt{"(}\ldots\texttt{)"} to denote the so-called
\emph{capture groups}, and \texttt{"\textbackslash x"} to denote
a~\emph{backreference} to the string captured by the \texttt{x}'th capture
group
\begin{changebar}
(the semicolon~\texttt{";"} serves in the regexs as a~delimiter).
\end{changebar}
Intuitively, the regex matches input strings~$w$ that can be seen as
a~concatenation of three strings $w = uvz$ such that~$u$ has the structure
$u = u_1\,\mathtt{;}\,u_2\,\mathtt{;}\,u_3$ with $u_1, u_2, u_3 \in (\Sigma
\setminus \{\mathtt{;}\})^+$, $v$ is a~string of three characters $a_3 a_2 a_1$
such that $a_i \in u_i$ for $i \in \{1,2,3\}$, and $z \in (\Sigma \setminus
\{\mathtt{;}\})^*$ (such a~regex can describe, e.g., a~simplified version of the
match rule of some XML transformation).
Trying (unsuccessfully) to find a~match of the regex in the randomly generated 42-character-long string
\begin{center}
\texttt{"ah;jk2367ash;la5akv45lwkjb9f.dj5fqkbxsfyrf"}
\end{center}
using the state-of-the-art PCRE2 regex matcher available at
\href{https://regex101.com/r/6ZwM2t/1/}{regex101.com}~\cite{regex101} takes
10,374 steps before reporting \emph{no match}.\footnote{%
Finding a~match of the regex
\texttt{/(.).*(.).*(.).*\textbackslash 3\textbackslash
2\textbackslash 1/} in the same text took 169,379 steps before \emph{no match} was
reported.
This regex is more challenging than $\regexex$---it does not use delimiters (semicolons in
$\regexex$).\label{ftn:hard_regex}}
Ideally, the matcher should take only 42 steps, one step for every
character in the string.
This $\frac{10,374}{42} = 247\times$ slowdown is caused by the so-called
\emph{catastrophic backtracking}---the PCRE2 matcher is based on backtracking
and, since the regex is nondeterministic, the backtracking algorithm needs to
try all possibilities of placing the three capture groups before concluding that
there is no match.
From a~theoretical point of view, this inadequate performance is hardly
a~surprise, since matching of regexes with backreferences is an \np-hard
problem~\cite{Aho90}.
This theoretical obstacle, however, does not need to be a~show-stopper, as for
many regexes appearing in practice, there is still hope that a~matching
algorithm with a~time complexity linear (or at most quadratic) to the size of
the input is possible.

As mentioned above, practical means of avoiding backtracking in such cases are not available 
since the automata determinization does not support backreferences.
Indeed, neither of the two currently most advanced regex
matchers in the industry, \retwo~\cite{re2} and \hyperscan~\cite{WangHCPLHZ19},
supports backreferences, due to a~missing efficient algorithm~\cite{hyperscan-unsupported-backref,re2-add-backref}.
The main obstacle is a~lack of a suitable deterministic automata model with a fast membership test that would support backreferences.

In this paper, we develop such a~formal model, particularly suited for fast matching of a~class of
these regexes where capture groups are single-letter (the single-letter backreferences constitute a~large portion of backreferences in our dataset collected for real-world applications).
The formal model are
\emph{register set
automata} (RSAs), 
an extension of \emph{register
automata}~\cite{KaminskiF94,DemriL09} (RAs) where registers can contain
sets of symbols instead of just single symbols (as for RAs).
Deterministic $\rsa$s ($\drsa$s) can be simulated in an efficient and robust regex-matching algorithm in time linear to the length of the input text, for finite alphabets, or
quadratic to the size of the input for infinite alphabets.
Our key result is a~\emph{RA to $\drsa$ determinization \mbox{(semi-)}algorithm}; we also propose a~partial-derivative-based algorithm for compiling a~regex with
single-letter backreferences to an RA that
completes the matching workflow.
%
We implemented a prototype matcher based on these results and   
experimented with a~sample of regexes with single-letter backreferences
extracted from a comprehensive real-world benchmark set. 
Although our determinisation is not complete (it may fail), 
the experiments show that it is quite reliable, 
and our matcher indeed significantly improves predictability of
matching by state-of-the-art matchers and reduces the danger of ReDoS.

\vspace{-2mm}


\paragraph{Theory of Register Set Automata.}
We also deliver a number of positive theoretical results related to our new
$\rsa$ automata model, which place it into the landscape of automata over infinite
alphabets.
First, the model strictly generalizes RAs~\cite{KaminskiF94,DemriL09} and is incomparable to
(one-way) alternating RAs, another popular powerful generalization of RAs~\cite{DemriL09}.
Besides the aforementioned RA to $\rsa$ determinization, 
the core property of $\rsa$s is that their emptiness problem is decidable,
although for a~higher price than for RAs---the complexity blows from
$\pspace$-complete for RAs to $\fomega$-complete (i.e., Ackermannian) for
$\rsa$s.

Importantly, since we can now determinize RAs to RSAs and test emptiness of RSAs, we can use them to test language inclusion of RAs by the standard approach:
by determinizing, complementing, and testing emptiness of the intersection
(complementing a~$\drsa$ is done easily by swapping final and non-final states).
Testing RA inclusion is an essential problem in many of their applications, such as
in their minimization,
learning~\cite{BolligHLM13,IsbernerHS2014,GarhewalVHSLS2020},
checking for fixpoint in regular model checking~\cite{ChenHLT20},
checking XML schema subsumption~\cite{TovaDV00},
verification of parameterized concurrent programs with shared
memory~\cite{IosifX19}, and is an essential component of the RA toolkit.
Unfortunately, inclusion of RAs is in general an undecidable problem~\cite{NevenSV04}, which
has forced researchers to either find ways to approximate the inclusion
test---e.g., via several membership
tests~\cite{IsbernerHS2014,GarhewalVHSLS2020} or by an abstraction refinement
semi-algorithm using interpolation~\cite{IosifX19}---or restrict themselves to
other models with decidable inclusion problem, such as deterministic RAs (whose
expressive power is quite limited)~\cite{MurawskiRT2018}
or session automata~\cite{BolligHLM13}.
We note that inclusion of languages described by regexes with backreferences is
also undecidable~\cite{Freydenberger13}.

\paragraph{Contribution.}
Let us summarise the main contributions of the paper:
%

\begin{enumerate}
  \item  Introduction of the model of register set automata and its theoretical analysis: 
    Showing its closure properties, placing the $\rsa$ model into the landscape of automata-with-registers
    models, and proving $\fomega$-completeness of the emptiness problem for $\rsa$s by
    showing interreducibility with the coverability problem for transfer Petri
    nets, discussing the power of several extensions of the model.
  \item  Designing a~(semi-)algorithm that can determinize an~RA into
    a~$\drsa$ and showing that it is complete for the class of languages that can
    be obtained from RAs with one register and no disequality tests by
    Boolean operations (i.e., union, intersection, and complement).
  \item  Developing a~partial-derivative based algorithm for converting a~large
    class of regexes with backreferences to RAs.
  \item  Showing that DRSA-based regex matching can be done in linear or
    quadratic time, depending on whether the size of the alphabet is finite or
    infinite respectively. 
  \item Experimentally confirming that simulation of the deterministic RSA obtained by our algorithm is a practical matching algorithm for regexes with single-letter backreferences, and that it is significantly more resilient against ReDoS than state-of-the-art matchers.
\end{enumerate}

\vspace{-0.0mm}
\section{Preliminaries}\label{sec:prelims}
\vspace{-0.0mm}

\paragraph{Sets and Functions.}
We use~$\nat$ to denote the set of natural numbers without~0, $\natzero$ to denote
$\nat \cup \{0\}$, and $\initof n$ for $n \in \nat$ to denote the set $\set{1,
\ldots, n}$ (we note that $\initof 0 = \emptyset$).
We sometimes use~``$\cdot$'' to denote an \emph{ellipsis}, i.e., a~value that
can be ignored.
\begin{changebar}
For a~(partial) function $f\colon A \partialto B$, a~set $C \subseteq B$, and
an element $d \in B$,
we use $f\subst{C}{d}$ to denote the function $f\subst{C}{d}(x) = d$ if
$f(x) \in C$ and $f\subst{C}{d}(x) = f(x)$ otherwise.
\end{changebar}
Furthermore, for a set $X\subseteq A$, we use $f_{|X}$ to denote the
restriction of $f$ to $X$ defined as $f_{|X} = f \cap (X\times B)$.

\paragraph{Data Words.}
Let us fix a~finite nonempty \emph{alphabet}~$\Sigma$ and an infinite
\emph{data domain}~$\datadom$. 
A~(finite) \emph{data word} of \emph{length}~$n$ is a~function $\word \colon \initof
n \to (\Sigma \times \datadom)$; we use $\lenof \word = n$ to denote its length
and $\wordof 1, \ldots, \wordof n$ to denote its symbols.
The \emph{empty word} of length~0 is denoted~$\epsilon$.
We use $\projsigmaof \word$ and $\projdataof \word$ to denote the
\emph{projection} of~$\word$ onto the respective domain (e.g., if $\word =
\pair a 1 \pair b 2 \pair b 3$, then $\projsigmaof \word = abc$ and
$\projdataof \word = 123$) and, given $a \in \Sigma$, we use $\projof a {\wordof i}$
as a~shortcut for $\projsigmaof{\wordof i} = a$.

\paragraph{Register Automata on Data Words
\begin{changebar}
\cite{KaminskiF94,DemriL09}.
\end{changebar}
}
A~(nondeterministic one-way) \emph{register automaton} (on data words),
abbreviated as (N)RA, is a~tuple $\aut = (Q, \regs, \Delta, I, F)$ where
$Q$~is a~finite set of \emph{states},
$\regs$~is a~finite set of \emph{registers},
$I \subseteq Q$ is a~set of \emph{initial states},
$F \subseteq Q$ is a~set of \emph{final states}, and
$\Delta \subseteq Q \times \Sigma \times 2^\regs \times 2^\regs \times (\regs \to \regs \cup
\set{\inp, \bot}) \times Q$ is a~\emph{transition relation} such that if
$\tau \colon (q, a, \guardeq, \guardneq, \update, s) \in \Delta$, then $\guardeq \cap \guardneq
= \emptyset$.
We use $\trans q a {\guardeq, \guardneq, \update}{s}$ to denote~$\tau$ (and often
drop from $\update$ mappings $r \mapsto r$ for $r \in \regs$, which we treat as
implicit).
The semantics of~$\tau$ is that~$\aut$ can move from state~$q$ to state~$s$ if the
$\Sigma$-symbol at the current position of the input word is~$a$ and the
$\datadom$-value at the current position is equal to all registers
from~$\guardeq$ and not equal to any register from~$\guardneq$; the content of
the registers is updated so that $r_i \gets \update(r_i)$ (i.e., $r_i$~can be
assigned the value of some other register, the current $\datadom$-symbol,
denoted by~$\inp$, or it can be cleared by being assigned~$\bot$).

A~\emph{configuration} of~$\aut$ is a~pair $c \in Q \times (\regs \to
\datadom \cup \set{\bot})$, i.e., it consists of a~state and an~assignment of data
to registers.
\begin{changebar}
An \emph{initial configuration} of~$\aut$ is a~pair $c_{\init} = (q_{\init},
\{r \mapsto \bot \mid r \in \regs\})$ with $q_{\init} \in I$.
\end{changebar}
Suppose $c_1 = (q_1, f_1)$ and $c_2 = (q_2, f_2)$ are two configurations of~$\aut$.
We say that~$c_1$ can make a~\emph{step} to~$c_2$ over $\pair a d \in
\Sigma\times \datadom$ using transition $\tau \colon \trans q a {\guardeq, \guardneq,
\update} s \in \Delta$, \mbox{denoted as
$c_1 \stepofusing{\pair a d} \tau c_2$, iff}
\begin{enumerate}
  \item  $d = f_1(r)$ for all $r \in \guardeq$,
  \item  $d \neq f_1(r)$ for all $r \in \guardneq$, and
    \vspace{-6mm}
  \item  for all $r \in \regs$, we have
$        f_2(r) = \begin{cases}
          f_1(r') & \text{if } \update(r) = r' \in \regs,\\
          d & \text{if } \update(r) = \inp, \text{ and} \\
          \bot & \text{if } \update(r) = \bot .
        \end{cases}$
\end{enumerate}
%
A~\emph{run}~$\rho$ of~$\aut$ over the word $w = \pair{a_1}{d_1} \ldots
\pair{a_n}{d_n}$ from a~configuration~$c$ is a~sequence of
alternating configurations and transitions $\rho = c_0 \tau_1 c_1 \tau_2 \ldots \tau_n c_n$ such that $\forall 1 \leq i \leq
n \colon c_{i-1} \stepofusing{\pair{a_i}{d_i}}{\tau_i}~c_i$ and $c_0~=~c$.
We say that~$\rho$ is \emph{accepting} if~$c$ is an initial configuration, $c_n =
(s,f)$, and $s~\in~F$.
The \emph{language} $\langof \aut$ accepted by $\aut$ is defined as
$\langof \aut = \set{w \in (\Sigma \times \datadom)^* \mid \aut \text{ has an
accepting run over } w}$.

%

We say that~$\aut$ is a~\emph{deterministic RA} ($\dra$) if for all states $q
\in Q$ and all $a \in \Sigma$, it holds that for any two distinct transitions
$\trans q a {\guardeq_1, \guardneq_1, \update_1} s_1,
\trans q a {\guardeq_2, \guardneq_2, \update_2} s_2 \in \Delta$
we have that $\guardeq_1 \cap \guardneq_2 \neq \emptyset$ or $\guardeq_2 \cap
\guardneq_1 \neq \emptyset$.
$\aut$ is \emph{complete} if for all states $q \in Q$, symbols $a \in \Sigma$,
and $\guard \subseteq \regs$, there is a~transition $\trans q a {\guardeq,
\guardneq, \update} s$ such that $\guardeq \subseteq \guard$ and $\guard
\cap \guardneq = \emptyset$.
%

\paragraph{Universal RAs.}
A~\emph{universal RA} ($\ura$) $\aut_U$ is defined exactly as an $\nra$ with
the exception of its language.
The language of~$\aut_U$ is the set $\langof{\aut_U} = \{w \in (\Sigma
\times \datadom)^* \mid $ every run of $\aut_U$ on~$w$ is
accepting$\}$ (we emphasize that if a~run cannot continue from some state over
the current input symbol, then it is not accepting).
\begin{changebar}
The concept of \emph{universality} in the name is linked to the use of universal
quantification over runs in deciding acceptance.
\end{changebar}

\medskip

There is indeed duality between $\nra$s and $\ura$s, as stated by the following
fact.

\begin{fact}\label{lem:complement_ura_nra}
For every $\nra$ $\aut_N$, there is a~$\ura$ accepting the complement
of~$\langof{\aut_N}$.\linebreak
Conversely, for every $\ura$ $\aut_U$, there is an~$\nra$ accepting the
  complement of~$\langof{\aut_U}$.
\end{fact}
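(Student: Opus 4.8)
The plan is to exploit the fact that NRAs and URAs share exactly the same syntax and differ only in the acceptance condition: an NRA accepts $w$ iff \emph{some} run on $w$ is accepting, whereas a URA accepts $w$ iff \emph{every} run on $w$ is accepting (and a run that gets stuck is not accepting). The key observation is that "accepting" at the level of a single run is a local, syntactic notion that does not depend on whether we read the automaton as existential or universal, so complementation should reduce to flipping the final states. First I would take an NRA $\aut_N = (Q, \regs, \Delta, I, F)$ and define $\aut_U$ to be the \emph{same} tuple but with final states $Q \setminus F$, now interpreted as a URA. Then $w \in \langof{\aut_U}$ iff every run of the (syntactically identical) automaton on $w$ ends in $Q \setminus F$, i.e., iff \emph{no} run ends in $F$ and \emph{no} run gets stuck before consuming all of $w$. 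The second conjunct is the subtlety: a stuck run is non-accepting for both the NRA and the URA reading, so I must make sure that "$\aut_N$ has no accepting run" is equivalent to "every run of $\aut_U$ ends in a non-final state." To handle this cleanly I would first \emph{complete} $\aut_N$ in the sense defined above (adding transitions to a fresh non-final sink state so that from every state, symbol, and register-equality pattern there is at least one transition); completing an NRA is a routine construction that does not change its language, and it guarantees that every run can be extended to a full run over $w$. After completion, $\aut_N$ rejects $w$ iff every run over $w$ ends outside $F$, which is exactly $\langof{\aut_U}$ with final set $Q \setminus F$.

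The converse direction is symmetric: given a URA $\aut_U = (Q, \regs, \Delta, I, F)$, first complete it so that runs never get stuck, then read the same underlying tuple as an NRA $\aut_N$ with final states $Q \setminus F$. On the completed automaton, $w \in \langof{\aut_U}$ iff every run ends in $F$, so $w \notin \langof{\aut_U}$ iff some run ends outside $F$, i.e., iff $\aut_N$ (with final set $Q \setminus F$) accepts $w$. Hence $\langof{\aut_N} = \overline{\langof{\aut_U}}$.

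The only real obstacle is the bookkeeping around stuck runs, and that is dispatched by the completion step; everything else is a direct unfolding of the two acceptance conditions. I would present the argument as: (1) recall that completing an RA (or URA) by redirecting missing transitions to a fresh non-final sink preserves the language; (2) observe that on a complete automaton every run over $w$ consumes all of $w$; (3) dualize the set of final states and read off the equivalence from the definitions of $\langof{\cdot}$ for NRAs and URAs. One mild point worth a sentence in the write-up is that in the URA-to-NRA direction, completing $\aut_U$ must preserve its \emph{URA} language, which it does because the only runs it adds are ones that were previously stuck (hence already non-accepting for $\aut_U$) and now end in a non-final sink (still non-accepting), so no word moves into or out of $\langof{\aut_U}$.
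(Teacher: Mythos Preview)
Your proposal is correct and follows essentially the same approach as the paper: complete the automaton by adding a non-final sink state and transitions for all missing symbol/guard combinations, then swap final and non-final states and read the result under the dual semantics. Your write-up is in fact more careful than the paper's three-line sketch, since you explicitly isolate and discharge the stuck-run subtlety and verify that completion preserves the URA language.
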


\begin{proof}
For both parts, the complement automaton is obtained by
\begin{inparaenum}[(i)]
  \item  adding a~rejecting \emph{sink} state to the automaton,
  \item  completing the transition relation (i.e., adding transitions for
    undefined combinations of symbols and guards to the sink state) of the input automaton, and
  \item  swapping final and non-final states.
\end{inparaenum}
  \qedhere
\end{proof}

\begin{example}\label{ex:ra_example}
Consider the language of words over $\Sigma= \set{a}$ that contain two
occurrences of some data value, i.e., the language
\begin{equation*}
\hlmathbox{
\langexrep = \set{\word \mid \exists i,j\colon i \neq j \land
\projdataof{\wordof i} = \projdataof{\wordof j}}.
}
\end{equation*}

\noindent
An $\nra$ recognising this language is in the following figure:
%
\begin{center}
\begin{tikzpicture}
  \tikzset{
    ->, 
    >=stealth',
    initial text=$ $, 
    node distance=25mm,
  }

  \node[initial,state] (q) {$q$};
  \node[state] (s) [right of=q] {$s$};
  \node[state, accepting] (t) [right of=s] {$t$};


 \draw (q) edge[loop above] node[yshift=-4mm] {$\pictrans {a} {} {}$} (q);
 \draw (q) edge node[above,yshift=-1mm] {$\pictrans a {} {r \gets \inp}$} (s);
 \draw (s) edge[loop above] node[yshift=-4mm] {$\pictrans {a} {\inp \neq r} {}$} (s);
 \draw (s) edge node[above,yshift=-4mm] {$\pictrans {a} {\inp = r} {}$} (t);
 \draw (t) edge[loop above] node[yshift=-4mm] {$\pictrans {a} {} {}$} (t);

\end{tikzpicture}
\end{center}
Formally, it is an $\nra$
$\aut = (\set{q,s,t}, \set{r}, \Delta, \set{q}, \set{t})$
where the transition relation is defined as $\Delta = \{
  \trans q a {\emptyset, \emptyset, \emptyset} q$,
  $\trans q a {\emptyset, \emptyset, \{r \mapsto \inp\}} s$,
  $\trans s a {\emptyset, \{r\}, \emptyset} s$,
  $\trans s a {\{r\}, \emptyset, \emptyset} t$,
  $\trans t a {\emptyset, \emptyset, \emptyset} t
  \}$ (actually, the guard $\inp \neq r$ on the self-loop over~$s$ is redundant).
  Recall that~$\emptyset$ for an update denotes the mapping~$\{r \mapsto r\}$.
We note that~$\langexrep$ is not expressible by any $\dra$ or $\ura$.\footnote{
\cbstart
This can be shown by contradiction, assuming that there is a~$\dra$ with~$k$
registers accepting~$\langexrep$.  One can then take a~run over some rejected word of the length $2k+2$, cut it in half, and modify the second half of the word to contain some symbol that is in the first half of the word but is not stored any register.
Because of determinism, the $\drsa$ should also reject the modified word, but then the accepted language is not~$\langexrep$. Contradiction.
\cbend
}
Intuitively, $\aut$~waits in~$q$ until it nondeterministically guesses the input
data value that should be repeated, stores it into register~$r$, and moves to
state~$s$.
In state~$s$, it is waiting to see the data value again, upon which it moves to
the accepting state~$t$ and reads out the rest of the word.

On the other hand, the complement of the language, i.e., the language of words
where no two positions have the same data value, formally,
\begin{equation*}
\hlmathbox{
\langnegexrep = \set{\word \mid \forall i,j\colon i \neq j \implies
\projdataof{\wordof i} \neq \projdataof{\wordof j}},
}
\end{equation*}

\noindent
\cbstart
is not expressible by any $\dra$ or $\nra$~\cite[Proposition~5 and its proof]{KaminskiF94}, but is expressible by a~$\ura$.
\cbend
The $\ura$ accepting $\langnegexrep$ looks similar to the $\nra$~$\aut$ above with the
exception of final states, which are $\set{q,s}$ (note that in $\ura$s, in order
to accept a~word, all runs over the word need to accept, so in order to
accept in this example, all runs of the $\ura$ need to avoid the state~$t$).
\qed
\end{example}

\medskip

We use $\nraeq$, $\uraeq$, and $\draeq$ to denote the sub-classes of $\nra$s,
$\ura$s, and $\dra$s
with \emph{no disequality} guards, i.e., automata where for every transition
$\trans q a {\guardeq, \guardneq, \update} s$ it holds that $\guardneq =
\emptyset$.
Furthermore,
for a~class~$\mathcal C$ of automata with registers and $n \in \nat$, we
use~$\mathcal{C}_n$ to denote the sub-class of~$\mathcal{C}$ containing automata
with at most~$n$ registers (e.g., $\draof 2$).
We abuse notation and use~$\mathcal C$ to also denote the class of
languages defined by~$\mathcal C$.

%
%
%
%
%
%
%

\vspace{-0.0mm}
\section{Register Set Automata}\label{sec:rsa}
\vspace{-0.0mm}

\begin{changebar}
On this section, we introduce the model of register set automata, which differ
from~RAs mainly in the ability to store into registers \emph{sets of values}
instead of just single data elements.
\end{changebar}
A~(nondeterministic) \emph{register set automaton} (on data words),
abbreviated as (N)$\rsa$ is a~tuple $\aut_S = (Q, \regs, \Delta, I, F)$ where
$Q, \regs, I, F$ are the same as for RAs and
$\Delta \subseteq Q \times \Sigma \times 2^\regs \times 2^\regs \times (\regs \to 2^{\regs \cup \set{\inp}}) \times Q$
such that
if $\trans q a {\guardin, \guardnotin, \update}{s} \in \Delta$, then
$\guardin \cap \guardnotin = \emptyset$ (as with $\nra$s, we often do not write
mappings $r \mapsto \{r\}$ for $r \in \regs$ when defining~$\update$).
The semantics of a~transition $\trans q a {\guardin, \guardnotin, \update} s$ is
that~$\aut_S$ can move from state~$q$ to state~$s$ if the $\Sigma$-symbol at
the current position of the input word is~$a$ and the $\datadom$-value at the
current position is in all registers from~$\guardin$ and in no register
from~$\guardnotin$; the content of the
registers is updated so that $r_i \gets \bigcup \set{x \mid x \in \update(r_i)}$
(i.e., $r_i$~can be assigned the union of values of several registers, possibly
including the current $\datadom$-symbol denoted by~$\inp$).


A~\emph{configuration} of~$\aut_S$ is a~pair $c \in Q \times (\regs \to
2^\datadom)$, i.e., it consists of a~state and an~assignment of sets of data values
to registers.
\begin{changebar}
An \emph{initial configuration} of~$\aut_S$ is a~pair $c_{\init} =
(q_{\init},\set{r \mapsto \emptyset \mid r \in \regs})$ with $q_\init \in I$.
\end{changebar}
Let $c_1 = (q_1, f_1)$ and $c_2 = (q_2, f_2)$ be two configurations of~$\aut_S$.
We~say that~$c_1$ can make a~\emph{step} to~$c_2$ over $\pair a d \in \Sigma
\times \datadom$ using transition $t\colon \trans q a {\guardin, \guardnotin,
\update} s \in \Delta$, \mbox{denoted as
$c_1 \stepofusing{\pair a d} t c_2$, iff}
\begin{enumerate}
  \item  $d \in f_1(r)$ for all $r \in \guardin$,
  \item  $d \notin f_1(r)$ for all $r \in \guardnotin$, and
    \vspace{-3mm}
  \item  for all $r \in \regs$, we have
      $f_2(r) = \bigcup\set{f_1(r') \mid r' \in \regs, r' \in \update(r)} \cup
      \begin{cases}
        \set{d} & \text{if } \inp \in \update(r) \text{ and} \\
        \emptyset & \text{otherwise}.
      \end{cases}$
\end{enumerate}
%
%
The definition of a~run and language of~$\aut_S$ is then the same as
for~$\nra$s.


We say that the $\rsa$~$\aut_S$ is \emph{deterministic} ($\drsa$) if for all states $q
\in Q$ and all $a \in \Sigma$, it holds that for any two distinct transitions
$\trans q a {\guardin_1, \guardnotin_1, \update_1} s_1,
\trans q a {\guardin_2, \guardnotin_2, \update_2} s_2 \in \Delta$
we have that $\guardin_1 \cap \guardnotin_2 \neq \emptyset$ or $\guardin_2 \cap
\guardnotin_1 \neq \emptyset$.

\newcommand{
\begin{figure}[t]
\begin{minipage}[b]{4.3cm}
\hspace*{-5mm}
\begin{tikzpicture}
  \tikzset{
    ->, 
    >=stealth',
    initial text=$ $, 
    node distance=25mm,
  }

  \node[initial,state] (q) {$q$};
  \node[state, accepting] (s) [right of=q] {$s$};


 \draw (q) edge[loop above] node[yshift=-1mm] {$\pictrans a {\inp \notin r} {r \gets r \cup \{\inp\}}$} (q);
 \draw (q) edge node[above,yshift=-4mm] {$\pictrans a {\inp \in r} {}$} (s);
 \draw (s) edge[loop above] node[yshift=-4mm] {$\pictrans a {} {}$} (s);

\end{tikzpicture}
\vspace{2.45mm}
\caption{$\drsaof 1$ for $\langexrep$}
\label{fig:drsa_langexrep}
\end{minipage}
\begin{minipage}[b]{4.9cm}
\centering
\begin{tikzpicture}
  \tikzset{
    ->, 
    >=stealth',
    initial text=$ $, 
    node distance=30mm,
  }

  \node[initial,state, accepting] (q) {$q$};


 \draw (q) edge[loop above] node[yshift=-1mm] {$\pictrans a {\inp \notin r} {r \gets r \cup \{\inp\}}$} (q);

\end{tikzpicture}
\vspace{2.65mm}
\caption{$\drsaof 1$ for $\langnegexrep$}
\label{fig:drsa_langnegexrep}
\end{minipage}
\begin{minipage}[b]{4.3cm}
\hspace*{-7mm}
\begin{tikzpicture}
  \tikzset{
    ->, 
    >=stealth',
    initial text=$ $, 
    node distance=30mm,
  }

  \node[initial,state] (q) {$q$};
  \node[state, accepting] (s) [right of=q] {$s$};


 \draw (q) edge[loop above] node[yshift=-1mm] {$\pictrans a {} {r_1 \gets r_1 \cup \{\inp\}}$} (q);
 \draw (q) edge node[below] {$\pictrans a {\inp \notin r_1} {r_1 \gets\{\inp\}}$} (s);
 \draw (s) edge[loop above] node[yshift=-4mm] {$\pictrans a {\inp \notin r_1} {}$} (s);

\end{tikzpicture}
\vspace{-8mm}
\caption{$\rsaof 1$ for $\langnegallrep$}
\label{fig:rsa_langnegallrep}
\end{minipage}
\vspace*{-3mm}
\end{figure}
}[0]{
\begin{figure}[t]
\begin{minipage}[b]{4.3cm}
\hspace*{-5mm}
\begin{tikzpicture}
  \tikzset{
    ->, 
    >=stealth',
    initial text=$ $, 
    node distance=25mm,
  }

  \node[initial,state] (q) {$q$};
  \node[state, accepting] (s) [right of=q] {$s$};


 \draw (q) edge[loop above] node[yshift=-1mm] {$\pictrans a {\inp \notin r} {r \gets r \cup \{\inp\}}$} (q);
 \draw (q) edge node[above,yshift=-4mm] {$\pictrans a {\inp \in r} {}$} (s);
 \draw (s) edge[loop above] node[yshift=-4mm] {$\pictrans a {} {}$} (s);

\end{tikzpicture}
\vspace{2.45mm}
\caption{$\drsaof 1$ for $\langexrep$}
\label{fig:drsa_langexrep}
\end{minipage}
\begin{minipage}[b]{4.9cm}
\centering
\begin{tikzpicture}
  \tikzset{
    ->, 
    >=stealth',
    initial text=$ $, 
    node distance=30mm,
  }

  \node[initial,state, accepting] (q) {$q$};


 \draw (q) edge[loop above] node[yshift=-1mm] {$\pictrans a {\inp \notin r} {r \gets r \cup \{\inp\}}$} (q);

\end{tikzpicture}
\vspace{2.65mm}
\caption{$\drsaof 1$ for $\langnegexrep$}
\label{fig:drsa_langnegexrep}
\end{minipage}
\begin{minipage}[b]{4.3cm}
\hspace*{-7mm}
\begin{tikzpicture}
  \tikzset{
    ->, 
    >=stealth',
    initial text=$ $, 
    node distance=30mm,
  }

  \node[initial,state] (q) {$q$};
  \node[state, accepting] (s) [right of=q] {$s$};


 \draw (q) edge[loop above] node[yshift=-1mm] {$\pictrans a {} {r_1 \gets r_1 \cup \{\inp\}}$} (q);
 \draw (q) edge node[below] {$\pictrans a {\inp \notin r_1} {r_1 \gets\{\inp\}}$} (s);
 \draw (s) edge[loop above] node[yshift=-4mm] {$\pictrans a {\inp \notin r_1} {}$} (s);

\end{tikzpicture}
\vspace{-8mm}
\caption{$\rsaof 1$ for $\langnegallrep$}
\label{fig:rsa_langnegallrep}
\end{minipage}
\vspace*{-3mm}
\end{figure}
}

\begin{figure}[t]
\begin{minipage}[b]{4.3cm}
\hspace*{-5mm}
\begin{tikzpicture}
  \tikzset{
    ->, 
    >=stealth',
    initial text=$ $, 
    node distance=25mm,
  }

  \node[initial,state] (q) {$q$};
  \node[state, accepting] (s) [right of=q] {$s$};


 \draw (q) edge[loop above] node[yshift=-1mm] {$\pictrans a {\inp \notin r} {r \gets r \cup \{\inp\}}$} (q);
 \draw (q) edge node[above,yshift=-4mm] {$\pictrans a {\inp \in r} {}$} (s);
 \draw (s) edge[loop above] node[yshift=-4mm] {$\pictrans a {} {}$} (s);

\end{tikzpicture}
\vspace{2.45mm}
\caption{$\drsaof 1$ for $\langexrep$}
\label{fig:drsa_langexrep}
\end{minipage}
\begin{minipage}[b]{4.9cm}
\centering
\begin{tikzpicture}
  \tikzset{
    ->, 
    >=stealth',
    initial text=$ $, 
    node distance=30mm,
  }

  \node[initial,state, accepting] (q) {$q$};


 \draw (q) edge[loop above] node[yshift=-1mm] {$\pictrans a {\inp \notin r} {r \gets r \cup \{\inp\}}$} (q);

\end{tikzpicture}
\vspace{2.65mm}
\caption{$\drsaof 1$ for $\langnegexrep$}
\label{fig:drsa_langnegexrep}
\end{minipage}
\begin{minipage}[b]{4.3cm}
\hspace*{-7mm}
\begin{tikzpicture}
  \tikzset{
    ->, 
    >=stealth',
    initial text=$ $, 
    node distance=30mm,
  }

  \node[initial,state] (q) {$q$};
  \node[state, accepting] (s) [right of=q] {$s$};


 \draw (q) edge[loop above] node[yshift=-1mm] {$\pictrans a {} {r_1 \gets r_1 \cup \{\inp\}}$} (q);
 \draw (q) edge node[below] {$\pictrans a {\inp \notin r_1} {r_1 \gets\{\inp\}}$} (s);
 \draw (s) edge[loop above] node[yshift=-4mm] {$\pictrans a {\inp \notin r_1} {}$} (s);

\end{tikzpicture}
\vspace{-8mm}
\caption{$\rsaof 1$ for $\langnegallrep$}
\label{fig:rsa_langnegallrep}
\end{minipage}
\vspace*{-3mm}
\end{figure}

\begin{example}\label{ex:drsa_langexrep}
A~$\drsa$ accepting the language~$\langexrep$ from \cref{ex:ra_example}
is in \cref{fig:drsa_langexrep}.
%
%
Formally, it is a~$\drsaof 1$
$\aut = (\set{q,s}, \set{r}, \Delta, \set{q}, \set{s})$ where
$\Delta = \{
  \trans q a {\emptyset, \set r, \set{r \mapsto \set{r,\inp}}} q$,
  $\trans q a {\set{r}, \emptyset, \emptyset} s$,
  $\trans s a {\emptyset, \emptyset, \emptyset} s\}$.
Intuitively, the $\drsa$ waits in~$q$ and accumulates the so-far seen
input data values in register~$r$ (we use $r \gets r \cup \set \inp$ to denote
the update $r \mapsto \set{r,\inp}$).
Once the $\drsa$ reads a~value that is already in~$r$, it moves to~$s$ and
accepts.
\qed
\end{example}

\begin{example}\label{ex:drsa_langnegexrep}
A~$\drsaof 1$ accepting the language~$\langnegexrep$ from \cref{ex:ra_example} is in
\cref{fig:drsa_langnegexrep}.
%
Intuitively, the automaton stays in state~$q$ and accumulates input data values
in register~$r$, making sure the input data value has not been seen
previously.
\qed
\end{example}

\begin{example}\label{ex:lang_negallrep}
  Consider the following language:
  \begin{equation*}
  \hlmathbox{
  \langnegallrep = \set{\word \mid \exists i\forall j\colon i\neq j \implies
  \projdataof{\wordof i} \neq \projdataof{\wordof j}}
}
  \end{equation*}

  \noindent
  Intuitively, $\langnegallrep$~is the language of all words containing
  a~data value with exactly one occurrence.
  This language is accepted, e.g., by the $\rsaof 1$ in
  \cref{fig:rsa_langnegallrep}.
  %
  %
  The $\rsa$ stays in state~$q$, collecting the seen values into its
  register, and at some point, when it sees a~value not seen previously,
  it nondeterministically moves to~$s$, remembering the value in its
  register.
  Then, at state~$s$, the $\rsa$ just checks that it does not see the remembered
  value any more.
  \begin{changebar}
  We note that~$\langnegallrep$ cannot be accepted by a~$\drsa$
  (cf.~the proof of \cref{thm:drsa_lt_rsa}).
  \end{changebar}
%
\qed
\end{example}

\vspace{-0.0mm}
\section{Properties of Register Set Automata}\label{sec:properties}
\vspace{-0.0mm}

In this section, we establish decidability and complexity of basic decision problems
for~$\rsa$s and their closure properties.
First, we claim that $\rsa$s generalise $\nra$s.

\begin{restatable}{fact}{lemNraToRsa}\label{thm:nra-to-rsa}
For every $n \in \nat$ and $\nraof n$, there exists an $\rsaof n$ accepting the
  same language.
\end{restatable}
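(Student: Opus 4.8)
The plan is to simulate each single-valued register of the $\nraof n$ by a~set-valued register of an $\rsaof n$ that holds either the empty set (when the $\nra$ register stores~$\bot$) or the singleton $\set d$ (when it stores $d \in \datadom$). Concretely, given an $\nraof n$ $\aut = (Q, \regs, \Delta, I, F)$, I would build $\aut_S = (Q, \regs, \Delta_S, I, F)$ over the same states, registers, initial states, and final states, where for every transition $t\colon \trans q a {\guardeq, \guardneq, \update} s \in \Delta$ we add to $\Delta_S$ the transition $\trans q a {\guardeq, \guardneq, \update_S} s$ that reuses $\guardin = \guardeq$ and $\guardnotin = \guardneq$ verbatim and sets $\update_S(r) = \set{\update(r)}$ if $\update(r) \in \regs \cup \set\inp$ and $\update_S(r) = \emptyset$ if $\update(r) = \bot$. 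Since $\guardeq \cap \guardneq = \emptyset$ entails $\guardin \cap \guardnotin = \emptyset$, the automaton $\aut_S$ is a~well-formed $\rsaof n$.

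Next I would set up the correspondence between configurations. For an $\nra$-assignment $f\colon \regs \to \datadom \cup \set\bot$, let $\widehat f\colon \regs \to 2^\datadom$ be given by $\widehat f(r) = \emptyset$ if $f(r) = \bot$ and $\widehat f(r) = \set{f(r)}$ otherwise, and map the configuration $(q,f)$ to $(q,\widehat f)$. The key claim, proved by induction on the length of the run, is that $\aut$ has a~run over~$w$ from its initial configuration to $(q,f)$ using transitions $t_1 \ldots t_k$ iff $\aut_S$ has a~run over~$w$ from its initial configuration to $(q,\widehat f)$ using the corresponding transitions. The base case is immediate, as the all-$\bot$ assignment maps to the all-$\emptyset$ assignment. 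For the inductive step, consider a~step $(q_1,f_1) \stepofusing{\pair a d}{t}(q_2,f_2)$: the three conditions of the $\nra$ semantics correspond exactly to those of the $\rsa$ semantics under $\widehat{\cdot}$, because (i)~$d = f_1(r)$ is equivalent to $d \in \widehat{f_1}(r)$ (both fail when $f_1(r) = \bot$, i.e.\ $\widehat{f_1}(r) = \emptyset$), (ii)~$d \neq f_1(r)$ is equivalent to $d \notin \widehat{f_1}(r)$ (both hold vacuously when $f_1(r) = \bot$, since $d \in \datadom$ is never~$\bot$), and (iii)~since $\update_S(r)$ is always a~set of size at most one, the $\rsa$ update clause yields exactly $\widehat{\cdot}$ applied to the $\nra$-updated assignment; in particular, the invariant that every register of every reachable $\aut_S$-configuration is empty or a~singleton is maintained, so the run correspondence is a~bijection.

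Finally, applying the claim to accepting runs---whose acceptance depends only on the final state lying in~$F$, which is preserved since $Q$ and~$F$ are unchanged---yields $\langof{\aut_S} = \langof\aut$. I do not anticipate a~real obstacle; the only point requiring a~little care is the bookkeeping for~$\bot$ versus~$\emptyset$ in the guards, namely that a~disequality test against a~cleared register always succeeds, matching non-membership in the empty set, and dually for equality, together with the observation that the construction never introduces a~proper union into an update, which is what keeps registers empty-or-singleton along every run.
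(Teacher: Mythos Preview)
Your construction is exactly the paper's: keep the states, registers, initial and final states, set $\guardin = \guardeq$, $\guardnotin = \guardneq$, and define $\update_S(r) = \set{\update(r)}$ or $\emptyset$ according as $\update(r) \in \regs \cup \set\inp$ or $\update(r) = \bot$. The paper stops at the construction and a one-line intuition, whereas you additionally spell out the configuration bijection $\widehat{\cdot}$ and the inductive run correspondence; this extra detail is correct and welcome, and the careful treatment of $\bot$ versus $\emptyset$ in the guard conditions is exactly right.
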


The next theorem shows the core property of $\rsa$s: that their emptiness
problem is decidable, however, for
a~much higher price than for~$\nra$s, for which it is
$\pspace$-complete\footnote{%
Note that for an alternative definition of $\nra$s considered
in~\cite{KaminskiF94,SakamotoI00}, where no two registers can contain the same data
value, the problem is $\np$-complete~\cite{SakamotoI00}.}~\cite{DemriL09}.
For classifying the complexity of the problem, we use the hierarchy of
fast-growing complexity classes of Schmitz~\cite{Schmitz16a}, in particular the
class $\fomega$, which, intuitively, corresponds to Ackermannian problems closed
under primitive-recursive reductions.

\begin{restatable}{theorem}{thmRsaEmptiness}
\label{thm:rsa-emptiness}
  The emptiness problem for $\rsa$ is decidable,
  in particular, $\fomega$-complete.
\end{restatable}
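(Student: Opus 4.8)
The statement asserts $\fomega$-completeness of emptiness for $\rsa$s, and the contribution list already tells us the intended route: show that emptiness of $\rsa$s is \emph{interreducible} with the coverability problem for transfer Petri nets, a problem known to be $\fomega$-complete (Schmitz). So the plan is to prove two reductions. For the lower bound (hardness), I would reduce coverability of transfer Petri nets to $\rsa$-emptiness; for the upper bound (membership in $\fomega$), I would reduce $\rsa$-emptiness to transfer-net coverability. The heart of both directions is a correspondence between (i) the content of an $\rsa$ register, i.e.\ a \emph{set} of data values, and (ii) a place of a Petri net whose token count records how many ``currently tracked'' data values belong to that set. Crucially, an individual data value read from the input can belong to several registers at once, so a single value corresponds to a \emph{colour/class of tokens}, one token sitting in each place whose register contains that value; the finitely many distinct membership-patterns $2^\regs$ play the role of places (or of token colours), and a data value's lifetime in the automaton is the lifetime of its token.

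\textbf{Lower bound: transfer nets $\to$ $\rsa$.} Given a transfer Petri net $\net$ with places $P$, I would build an $\rsa$ whose registers are indexed by $P$ (or by $2^P$-patterns, depending on the exact encoding) and whose single input letter $a$ is used to introduce fresh tokens: reading a fresh data value $d$ (guarded by $d\notin r$ for the relevant registers) and adding it to a chosen combination of registers simulates ``place a token in places $S$''. An ordinary transition that moves a token from $p$ to $p'$ is simulated by a guarded step that requires the current input value to be in $r_p$, and whose update removes it from $r_p$ (by not copying $r_p$ forward) and adds it to $r_{p'}$; a \emph{transfer} arc, which moves \emph{all} tokens from $p$ to $p'$ at once, is exactly the set operation $r_{p'} \gets r_{p'}\cup r_p$, $r_p\gets\emptyset$ — this is the reason set-valued registers capture transfer nets rather than plain Petri nets. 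The subtlety is that net transitions fire without consuming input, whereas $\rsa$ steps consume a symbol; I would handle this with a standard bookkeeping gadget (interleave ``real'' input letters that create tokens with ``administrative'' letters/rounds that perform net transitions, and argue the encoding is faithful), or by a product construction that lets the automaton perform a bounded number of net moves per input symbol. Coverability of a target marking $M$ then becomes reachability of an $\rsa$ state where each $r_p$ contains at least $M(p)$ distinct values — which one can check by a short deterministic suffix that pops $M(p)$ distinct witnesses out of each register. This shows emptiness is $\fomega$-hard.

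\textbf{Upper bound: $\rsa$ $\to$ transfer nets.} For membership in $\fomega$, the key observation is that the \emph{exact} identity of data values in registers is irrelevant for emptiness — only the pattern of which registers each tracked value lies in matters, together with the fact that fresh values (distinct from everything stored) are always available (since $\datadom$ is infinite). So I would abstract a configuration $(q,f)$ into a marking over places $Q$-dummy $\cup\; (2^\regs)$ (or more precisely, track the control state in the net's finite control via a product, and have one place per nonempty membership-pattern $S\subseteq\regs$ counting how many live data values currently have exactly that pattern). An $\rsa$ transition $\trans q a {\guardin,\guardnotin,\update} s$ acts on this abstraction as follows: reading an \emph{old} value with pattern $S$ (which must satisfy $\guardin\subseteq S$ and $\guardnotin\cap S=\emptyset$) removes one token from place $S$ and re-adds it to the place for the new pattern $S' = \{r : \update(r)\cap S \neq \emptyset\} \cup \{r : \inp\in\update(r)\}$; reading a \emph{fresh} value creates a token in place $\{r:\inp\in\update(r)\}$; simultaneously, \emph{every} other token with some pattern $T$ must be relabelled to $T' = \{r : \update(r)\cap T\neq\emptyset\}$, and since this rule depends only on $T$ and is applied uniformly to all tokens of that place, it is realized by transfer arcs $\text{place }T \twoheadrightarrow \text{place }T'$. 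Hence one $\rsa$ step becomes one ``macro-step'' of a transfer net. Non-emptiness of the $\rsa$ then reduces to coverability (we need the control to reach a final state; no constraint on token counts is needed, or a trivial one), giving an $\fomega$ upper bound via Schmitz's classification and closure of $\fomega$ under primitive-recursive (here, polynomial) reductions. The main obstacle, and where I would spend the most care, is getting this last abstraction provably sound and complete: one must show (a) that replacing concrete values by patterns loses nothing — a genuine run exists iff the abstract token game has a winning play — which relies on always being able to pick the ``right'' fresh or old witness because $\datadom$ is infinite and there are only finitely many registers; and (b) that the simultaneous relabelling of all tokens is faithfully and \emph{only} expressible with transfer arcs (not ordinary arcs), which is precisely the feature that pins the complexity at $\fomega$ rather than lower. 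Verifying these two simulation lemmas in both directions is the bulk of the work; the complexity bookkeeping is then immediate from the literature.
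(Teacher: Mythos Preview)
Your upper-bound direction ($\rsa$ emptiness $\to$ TPN coverability) is essentially the paper's construction: one place per membership pattern $S\subseteq\regs$, control state carried in a separate token, and the simultaneous relabelling of all other tokens realised by transfer arcs. One small correction: the reduction is \emph{exponential} in $|\regs|$ (you introduce $2^{|\regs|}$ places), not polynomial as you write; this is harmless because $\fomega$ is closed under primitive-recursive reductions, but it is worth stating correctly.

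The lower-bound direction has a real gap. You write that an ordinary net transition consuming a token from~$p$ is simulated by a step that ``requires the current input value to be in~$r_p$, and whose update removes it from $r_p$ (by not copying $r_p$ forward).'' But in the $\rsa$ model the update of a register is of the form $r_p \gets \bigcup X$ for some $X\subseteq \regs\cup\{\inp\}$: you may only assign a \emph{union} of whole registers and $\{\inp\}$. There is no way to compute $r_p\setminus\{\inp\}$. ``Not copying $r_p$ forward'' drops \emph{all} values in $r_p$, not just the one you tested; copying $r_p$ forward keeps them all. So your single-step token removal is not available, and without it you cannot faithfully simulate the $\intrans$ part of a net transition.

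The paper's fix is a multi-step \emph{lossy remove} gadget: store the value to be removed in an auxiliary register $r_{\inp}$, then loop, each iteration reading some value that is in $r_p$ but not equal to the stored one and accumulating it into a fresh register $r_{\tmp}$, and finally set $r_p \gets r_{\tmp}$. This removes \emph{at least} one token from $r_p$ (possibly more, since the loop may skip values). That is too weak to preserve reachability, but it is enough for \emph{coverability}: TPNs are well-structured, so if $M_F$ is coverable then it is still coverable after losing extra tokens along the way, and conversely any accepting $\rsa$ run witnesses a TPN run that covers $M_F$. Recognising that precise removal is impossible and that lossiness is tolerable for coverability is the missing idea in your hardness argument.
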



\begin{proof}[Sketch of proof.]
The proof is done by showing interreducibility of $\rsa$ emptiness with
coverability in \emph{transfer Petri nets} (TPNs) (often used for modelling the
so-called \emph{broadcast protocols}), which is a~known $\fomega$-complete
problem~\cite{SchmitzS13,Schmitz17,SchmitzS12}.
In the following, we briefly describe both directions
of the reduction (see \ifTR\cref{sec:proof-rsa-emptiness} \else\cite{techrep} \fi for details and examples).

\begin{enumerate}
    \item[($\rsa$ emptiness $\leq$ TPN coverability)]
    Intuitively, the conversion of an $\rsa$ $\aut$ $ = (Q, \regs, \Delta, I,
    F)$ into a~TPN $\netof \aut$ is done in the following way.
    The set of places of $\netof \aut$ will be as follows:
    \begin{inparaenum}[(i)]
      \item  one place for each state of~$\aut$,
      \item  two special places $\init$ and $\fin$, and
      \item  one place for every subset $\region \subseteq \regs$; these places
        are used to represent all possible intersections of values held in
        registers.
        E.g., if there are four tokens in the place representing $r_1
        \cap r_2$, it means that there are exactly four different data values
        stored in both~$r_1$ and~$r_2$ and in no other register.
    \end{inparaenum}
    Each transition~$t$ of~$\aut$ is simulated by one or more TPN transitions between places
    representing its source and target states.
    The number of respective TPN transitions
    depends on how specific the guard is in the original automaton, since we need to
    distinguish every possible option of~$\inp$ being in some region $\region \in 2^\regs$.
    The transitions move the token between the places corresponding to $t$'s
    source and target states and, moreover, use the \emph{broadcast} arcs to
    move tokens between the places representing regions, according to the
    manipulation of the set-registers in the update function of~$t$.
    The special place $\init$ is used to have a~single starting marking
    (it nondeterministically chooses one state from~$I$) and the place
    $\fin$ is used as the coverability test target; all places
    \mbox{
    corresponding to final states of~$\aut$ can simply transition into it.}

%
%
%
%

    \item[(TPN coverability $\leq$ $\rsa$ emptiness)]

    Given a~TPN~$\net$, the $\rsa$ $\autof \net$
    simulating it will have the following structure.
    There will be a~state~$\qmain$, which will be active before and
    after the simulation of firing each transition of~$\net$.
    Moreover, there will be one register for every place of~$\net$;
    individual tokens in the places will be simulated by unique data values
    from~$\datadom$ stored in the corresponding registers.
    For each transition of~$\net$, there will be a~\emph{gadget}, doing a~cycle
    on~$\qmain$, that represents the semantics of~$\net$'s transition.
    Each such gadget is composed of several \emph{protogadgets}, which simulate
    basic actions performed during the transition (adding a~token to a~place,
    removing a~token, moving all tokens between places).
    Implementation of adding a~token and moving tokens is relatively easy,
    the tricky part is removing a~token, since~$\rsa$s do not support removing
    a~data value from a~register.
    We solve this by using a~\emph{lossy remove}: i.e., if \emph{one} token is
    to be removed from a~place, we simulate it by removing \emph{at least one}
    token (but potentially more).
    This will not preserve \emph{reachability}, but it is enough to preserve
    \emph{coverability}.
    Moreover, there will also be an \emph{initial} part setting the contents of the
    registers to reflect the initial marking of~$\net$ (terminating in~$\qmain$)
    and a~\emph{final} part that checks the coverability by removing (again in
    a~lossy way) tokens from places, terminating in a~single final state.
    \qedhere

    %
    %
    %
    %
    %
    %
\end{enumerate}

\end{proof}


\begin{remark}
Since $\rsa$s generalise $\nra$s, their universality, equivalence, and language
inclusion problems are all undecidable.
\end{remark}

\vspace{-0.0mm}
\subsection{Closure Properties}\label{sec:closure-properties}
\vspace{-0.0mm}

The closure properties of $\rsa$s under Boolean operations are the same as for
$\nra$s (cf.\ \cite[Proposition~5, Theorem~3]{KaminskiF94}).

\begin{restatable}{theorem}{thmRsaClosure}\label{thm:rsa-closure}
The following closure properties hold for the class $\rsa$:
\begin{enumerate}
  \item  $\rsa$ is closed under union and intersection.
  \item  $\rsa$ is not closed under complement.
\end{enumerate}
\end{restatable}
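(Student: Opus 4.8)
For the two positive parts I would use the standard product and disjoint-union constructions, adapted to set-registers. Fix $\rsa$s $\aut_1 = (Q_1, \regs_1, \Delta_1, I_1, F_1)$ and $\aut_2 = (Q_2, \regs_2, \Delta_2, I_2, F_2)$ and assume w.l.o.g.\ that $Q_1 \cap Q_2 = \emptyset$ and $\regs_1 \cap \regs_2 = \emptyset$. For \emph{union}, take $\aut = (Q_1 \cup Q_2,\, \regs_1 \cup \regs_2,\, \Delta_1 \cup \Delta_2,\, I_1 \cup I_2,\, F_1 \cup F_2)$, where a transition coming from $\Delta_i$ keeps every register of $\regs_{3-i}$ unchanged (the implicit identity update $r \mapsto \set r$). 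Since no transition connects $Q_1$ with $Q_2$, every run of $\aut$ stays inside one component $\aut_i$, and the registers of the other component stay empty throughout and influence nothing; hence $\langof \aut = \langof{\aut_1} \cup \langof{\aut_2}$. For \emph{intersection}, take $\aut = (Q_1 \times Q_2,\, \regs_1 \cup \regs_2,\, \Delta,\, I_1 \times I_2,\, F_1 \times F_2)$ where, for every pair of transitions of $\Delta_1$ and $\Delta_2$ on the same symbol~$a$ --- say with membership guards $\guardin_1, \guardin_2$, non-membership guards $\guardnotin_1, \guardnotin_2$, updates $\update_1, \update_2$, going from $q_1, q_2$ to $s_1, s_2$ --- the relation $\Delta$ contains the transition from $(q_1,q_2)$ to $(s_1,s_2)$ on~$a$ with membership guard $\guardin_1 \cup \guardin_2$, non-membership guard $\guardnotin_1 \cup \guardnotin_2$, and the update that behaves as $\update_i$ on $\regs_i$ (so the $\regs_1$- and $\regs_2$-parts of a register assignment never interact). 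The consistency requirement $(\guardin_1 \cup \guardin_2) \cap (\guardnotin_1 \cup \guardnotin_2) = \emptyset$ follows from $\guardin_i \cap \guardnotin_i = \emptyset$ and $\regs_1 \cap \regs_2 = \emptyset$. A straightforward induction on the length of the run shows that in any run of~$\aut$ over~$\word$ the $\regs_i$-registers evolve exactly as in a run of~$\aut_i$ over~$\word$ and that a step of~$\aut$ is possible precisely when the corresponding steps of~$\aut_1$ and~$\aut_2$ are; thus runs of~$\aut$ over~$\word$ biject with pairs of runs of~$\aut_1$ and~$\aut_2$ over~$\word$, giving $\langof \aut = \langof{\aut_1} \cap \langof{\aut_2}$. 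Neither construction poses any real difficulty.

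For non-closure under complement I would give an explicit witness. By \cref{ex:lang_negallrep} (see \cref{fig:rsa_langnegallrep}) we have $\langnegallrep \in \rsa$, and its complement is $\langallrep = \set{\word \mid \forall i \exists j\colon i \neq j \wedge \projdataof{\wordof i} = \projdataof{\wordof j}}$, the language of words in which every data value occurs at least twice, so it suffices to prove $\langallrep \notin \rsa$. (Note that the decidability of $\rsa$ emptiness from \cref{thm:rsa-emptiness} does not settle this: combined with \cref{thm:nra-to-rsa} and the undecidability of $\nra$ universality it only shows that $\rsa$s cannot be \emph{effectively} complemented, which is strictly weaker than the claim.) I would prove $\langallrep \notin \rsa$ by a combinatorial fooling argument. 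Assume $\Sigma = \set a$, suppose an $\rsa$ $\aut = (Q, \regs, \Delta, I, F)$ recognised $\langallrep$, and for $n \in \nat$ let $\word_n = \pair a 1 \pair a 1 \pair a 2 \pair a 2 \cdots \pair a n \pair a n \in \langallrep$. The key observation is that when $\aut$ reads $\word_n$, every data value is absent from all registers at the moment of its first occurrence (the registers then hold only strictly smaller values), so the transition taken on a first occurrence must have an empty membership guard, and the transition taken on the matching second occurrence is enabled or not depending \emph{only} on which registers received the current value via the update of the first transition --- never on the (unboundedly large) sets of older values in the registers. Hence acceptance of~$\word_n$ by~$\aut$ is governed by a \emph{finite} graph on~$Q$ (is there a walk from~$I$ to~$F$ of length~$n$ whose edges are compatible pairs ``fresh transition then matching second transition''), so $\set{n \mid \word_n \in \langof \aut}$ is ultimately periodic and independent of the register contents. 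The contradiction then comes from the words $\word_n$ with a single fresh symbol appended, which lie \emph{outside} $\langallrep$ and so must be rejected: the same finite-graph analysis shows that the ``intermediate'' state of each such compatible pair is also reachable by reading a single fresh symbol, and a pigeonhole argument on~$Q$ lets one glue the prefix of an accepting run over a sufficiently long~$\word_n$ with the tail of an accepting run over another so as to obtain an accepting run over a word in which some value occurs exactly once --- contradicting $\langof \aut = \langallrep$.

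\textbf{Main obstacle.} The union and intersection constructions are entirely routine. The real content will be the separation $\langallrep \notin \rsa$. The difficulty is precisely that an $\rsa$ register can remember \emph{all} previously seen values --- so the Kaminski--Francez-style counting argument that rules out $\langnegexrep$ for $\nra$s (\cref{ex:ra_example}) no longer applies --- while acceptance is decided by the final state alone, so the automaton cannot ``inspect'' a register at the end of the word; turning this intuition into a rigorous pigeonhole/splicing argument, and in particular verifying that the spliced transition sequence stays enabled as the set-registers are repopulated along the way, is the technically delicate step.
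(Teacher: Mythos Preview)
Your union and intersection constructions are correct and essentially identical to the paper's.

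The non-closure argument, however, has a real gap. Your finite-graph abstraction is sound: on the words $\word_n$ the behaviour of the $\rsa$ is indeed captured by a graph~$G$ on~$Q$ whose edges are the compatible pairs, and you correctly extract the two constraints (A)~$G$-walks from~$I$ to~$F$ exist of every length, and (B)~no $G$-walk followed by a half-edge reaches~$F$. The problem is that (A) and (B) are \emph{not} contradictory. Take $Q=\{q_0,q_1,q_f\}$, $I=\{q_0\}$, $F=\{q_0,q_f\}$, one register~$r$, and transitions $q_0\to q_1$ and $q_f\to q_1$ (both with empty $\in$-guard and update $r\gets\{\inp\}$) together with $q_1\to q_f$ (with $\guardin=\{r\}$ and update $r\gets\emptyset$). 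This $\rsa$ accepts every $\word_n$, rejects every $\word_n\cdot\pair a{n{+}1}$, and so satisfies (A) and (B); yet it does not accept $\langallrep$ (it rejects $\pair a 1\pair a 2\pair a 1\pair a 2$). Hence no argument using only (A) and (B) can produce a contradiction. Concretely, your splicing fails: gluing two runs at a repeated ``full'' state~$q_i$ yields a word still in~$\langallrep$; gluing at a repeated intermediate state~$p_i$ and then feeding a \emph{fresh} value would require the second half of the pair to have empty $\in$-guard, but the whole point of the ``matching'' transition is that its $\in$-guard is typically non-empty (as in the example above), so the spliced run is simply not enabled.

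The paper proves $\langallrep\notin\rsa$ by an entirely different route, adapting Figueira's Minsky-machine argument: one writes down, as data words, the candidate runs of a two-counter machine and observes that all the well-formedness conditions \emph{except} ``every increment is eventually matched by a decrement'' are easily expressed by $\rsa$s (in fact $\drsa$s). That last condition is essentially~$\langallrep$. If $\langallrep$ were in $\rsa$, then by the closure under intersection you just proved, the language of accepting runs of the machine would be in $\rsa$; decidability of $\rsa$ emptiness (\cref{thm:rsa-emptiness}) would then decide Minsky-machine emptiness, which is impossible. This leverages the decidability result you explicitly set aside, but in combination with closure under intersection rather than complement, so it really does give non-closure and not merely non-effective closure.
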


\begin{proof}[Sketch of proof.]
  The proofs for closure under union and intersection are standard.
  For showing the non-closure under complement, consider the
  language~$\langnegallrep$ from \cref{ex:lang_negallrep}, which can be accepted by
  $\rsa$.
  We use a~similar technique as in the proof of Proposition~3.2
  in~\cite{Figueira12} and show that if there were an $\rsa$ accepting its
  complement, namely, the language
  \begin{equation*}
    \hlmathbox{
    \langallrep = \set{\word \mid \forall i\exists j\colon i\neq j \land
    \projdataof{\wordof i} = \projdataof{\wordof j}},
  }
  \end{equation*}

  \noindent
  $\rsa$s could be used to decide the emptiness problem of a~Minsky machine,
  which is a~known undecidable problem.
  See the full proof in \ifTR\cref{sec:proofs_closure_properties} \else \cite{techrep} \fi for more details.
\end{proof}

For $\rsa$s with a~limited number of registers, we lose the closure under
intersection.

\begin{restatable}{theorem}{thmRsanClosure}\label{thm:rsan-closure}
  For each $n \in \nat$, the class $\rsaof n$
    is closed under union and
    not closed under intersection and complement.
\end{restatable}

\begin{proof}[Sketch of proof.]
The proof of closure of $\rsaof n$ under union is standard.
For proving non-closure of $\rsaof 1$ under intersection, we consider the two
following $\rsaof 1$ languages

\begin{equation}
  \lang^A_1 = \set{w \mid \projdataof{w_1} =
  \projdataof{w_{|w|}}}
  \qquad
  \text{and}
  \qquad
  \lang^B_1 = \set{w \mid \projdataof{w_2} =
  \projdataof{w_{|w| - 1}}}
\end{equation}

\noindent
  and show that their intersection cannot be accepted by~$\rsaof 1$.
  This argument can be extended to $\rsaof n$ for $n > 1$.
  Non-closure of $\rsaof n$ under complement then follows from De Morgan's laws.
\end{proof}
\vspace{-2mm}

\begin{restatable}{theorem}{thmDrsaClosure}\label{thm:drsa-closure}
$\drsa$ is closed under union, intersection, and complement.
\end{restatable}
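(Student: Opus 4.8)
The plan is to handle complement first, since it is the easiest and the construction for the Boolean combinations will reuse it. Given a $\drsa$ $\aut_S = (Q, \regs, \Delta, I, F)$, I would first make it \emph{complete} in the sense analogous to completeness for RAs: for every state $q$, every $a \in \Sigma$, and every ``region pattern'' describing which registers the current input value lies in, there is exactly one applicable transition. Determinism already guarantees ``at most one''; to get ``exactly one'' I add a fresh sink state $q_\bot$ (non-final) with self-loops accepting everything, and for each $q$, $a$ and each Boolean combination of membership/non-membership constraints not covered by an existing outgoing transition, add a transition to $q_\bot$ with that guard and the empty update. The guards of a $\drsa$ at a state partition the possibilities according to the subsets $\guardin, \guardnotin \subseteq \regs$ with $\guardin \cap \guardnotin = \emptyset$; the uncovered cases are a finite set and each can be described by such a pair, so this is a finite, effective construction that does not change the language. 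Once $\aut_S$ is complete and deterministic, every word has exactly one maximal run, and that run is infinite in the sense of being defined on all input positions; swapping $F$ and $Q \setminus F$ then yields a $\drsa$ for the complement, because a word is accepted by the new automaton iff its unique run ends outside $F$ iff it was not accepted by $\aut_S$. The only subtlety to check is that a $\drsa$ obtained this way is still deterministic — it is, since we only added transitions with guards disjoint (in the required sense) from the existing ones and from each other.

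For union and intersection I would use the standard product construction, but applied to \emph{complete} $\drsa$s so that the product run is well defined. Let $\aut_S^1 = (Q_1, \regs_1, \Delta_1, I_1, F_1)$ and $\aut_S^2 = (Q_2, \regs_2, \Delta_2, I_2, F_2)$ be complete $\drsa$s with disjoint register sets (rename if necessary). The product has states $Q_1 \times Q_2$, registers $\regs_1 \uplus \regs_2$, initial states $I_1 \times I_2$, and a transition $\trans{(q_1,q_2)}{a}{\guardin_1 \cup \guardin_2,\, \guardnotin_1 \cup \guardnotin_2,\, \update_1 \uplus \update_2}{(s_1,s_2)}$ for every pair of transitions $\trans{q_1}{a}{\guardin_i,\guardnotin_i,\update_i}{s_i} \in \Delta_i$; here $\update_1 \uplus \update_2$ acts as $\update_i$ on $\regs_i$, which is well defined since the register sets are disjoint and each $\update_i(r) \subseteq 2^{\regs_i \cup \set{\inp}}$. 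One checks that a configuration step in the product projects to steps in each component and conversely, so the product simulates both automata synchronously on the input word. Taking final states $F_1 \times Q_2 \,\cup\, Q_1 \times F_2$ gives union and $F_1 \times F_2$ gives intersection.

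The one genuine thing to verify — and I expect this to be the main (though still routine) obstacle — is that the product of two \emph{deterministic} complete $\rsa$s is again deterministic. Take two distinct product transitions out of $(q_1,q_2)$ on symbol $a$, coming from component transitions $t_i, t_i'$ respectively. Since the product transitions differ, $t_1 \neq t_1'$ or $t_2 \neq t_2'$; say $t_1 \neq t_1'$. Determinism of $\aut_S^1$ gives $\guardin_1 \cap \guardnotin_1{}' \neq \emptyset$ or $\guardin_1{}' \cap \guardnotin_1 \neq \emptyset$, and since these register sets inject into the product's register set, the same disjointness-violation survives for the product guards $\guardin_1 \cup \guardin_2$ etc. Hence the product is deterministic. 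Completeness of the inputs is what makes each component run total, so the product run is total as well, and the language correctness arguments for union and intersection are the usual ones. Combining with the complement construction above, all three closure properties hold, with effective constructions throughout.
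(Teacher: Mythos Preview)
Your proposal is correct and follows essentially the same approach as the paper: completion plus swapping of final states for complement, and the product construction (on complete $\drsa$s with disjoint register sets) for union and intersection. You in fact give more detail than the paper does---in particular your explicit verification that the product remains deterministic---whereas the paper simply declares the constructions ``standard.''
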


\begin{proof}
The proofs are standard (product construction and swapping (non)-final states).
\end{proof}
\vspace{-2mm}

\begin{restatable}{theorem}{thmDrsanClosure}\label{thm:drsan-closure}
  For each $n \in \nat$, the class $\drsaof n$ is closed under complement and
    not closed under union and intersection.
\end{restatable}

\begin{proof}
Proof of closure under complement is standard.
Non-closure under intersection is done similarly as in the proof
of \cref{thm:rsan-closure}.
Non-closure for union follows from De Morgan's laws.
\end{proof}

As with RAs, nondeterminism also allows bigger expressivity for $\rsa$s.

\begin{theorem}\label{thm:drsa_lt_rsa}
$\drsa \subsetneq \rsa$
\end{theorem}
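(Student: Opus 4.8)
The claim is that $\drsa \subsetneq \rsa$, i.e., that deterministic register set automata are strictly weaker than nondeterministic ones. Since $\drsa \subseteq \rsa$ is immediate from the definitions, the whole content is the strictness: we must exhibit a language accepted by some $\rsa$ but by no $\drsa$. The natural candidate is $\langexrep = \set{\word \mid \exists i,j\colon i \neq j \land \projdataof{\wordof i} = \projdataof{\wordof j}}$ — wait, that one is already shown to be a $\drsa$ language in \cref{ex:drsa_langexrep}, so it will not work. Instead I would take $\langnegallrep = \set{\word \mid \exists i\forall j\colon i\neq j \implies \projdataof{\wordof i} \neq \projdataof{\wordof j}}$ from \cref{ex:lang_negallrep} — the language of words containing a data value with exactly one occurrence. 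By \cref{fig:rsa_langnegallrep} it is an $\rsa$ language, and the plan is to show it is not a $\drsa$ language.

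The key step is a pumping/distinguishability argument tailored to deterministic RsAs. The point is that a $\drsa$ reading a data word reaches, after each prefix, a single configuration (state plus register contents), and this configuration is a function of the prefix. First I would observe that a $\drsa$ with register set $\regs$ and state set $Q$, after reading a word $w$ built from $k$ distinct data values, has a configuration whose registers each contain a subset of those $k$ values, plus possibly values that were read but are "remembered" — but crucially, a $\drsa$ can never \emph{forget} selectively: once a value enters a register it can only be propagated or the whole register cleared. The contradiction I would aim for: suppose a $\drsa$ $\aut$ accepts $\langnegallrep$. Consider words of the form $w_n = \pair a {1}\pair a {2}\cdots \pair a {n}$ using $n$ distinct values — each is in $\langnegallrep$ (indeed every value occurs exactly once). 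Now consider extending $w_n$ by re-reading some of the values: $w_n \cdot \pair a {\pi(1)} \cdots$ for some permutation/selection $\pi$. The word stays in $\langnegallrep$ iff at least one value still has exactly one occurrence after the extension. A deterministic automaton's acceptance of $w_n \cdot u$ depends only on the configuration after $w_n$ and on $u$; but after $w_n$ the $\drsa$ has only $|Q| \cdot (2^{?})$-many reachable configurations up to the "pattern" of which value sits in which register — and by a counting argument (for large $n$ there must be two distinct values $d, d'$ that are "interchangeable" in the configuration, i.e., sit in exactly the same registers), we can re-read $d$ but not $d'$ and the automaton cannot tell whether it has destroyed the uniqueness of $d$ alone or of $d'$ alone. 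Concretely, I would find $d \ne d'$ occupying the same set of registers after $w_n$, and compare the continuations that re-read $d$ (killing uniqueness of $d$) versus re-read $d'$: in one case the word may leave $\langnegallrep$ and in the other stay in it, yet the $\drsa$ behaves identically, a contradiction. (This is essentially a Myhill–Nerode style argument over data words, analogous to the classical proof that $\langexrep$ has no $\dra$.)

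The main obstacle is making the "interchangeable values" argument fully rigorous in the presence of sets in registers: unlike ordinary RAs, a $\drsa$ register holds a whole set, so after $w_n$ a register could hold \emph{all} of $\{1,\dots,n\}$, and "which registers contain $d$" is then the same for all $d$ — so the two continuations re-reading $d$ versus $d'$ really do look identical to the $\drsa$, which is exactly what we want, but I must ensure the resulting words genuinely straddle the boundary of $\langnegallrep$. I would therefore design the continuation carefully: after $w_n$, the automaton must still accept (since $w_n \in \langnegallrep$), so it is in an accepting configuration; then I would append a single symbol re-reading $d$ — if this takes it out of the accepting region, I get a word $w_n \pair a d$ whose membership should depend on whether \emph{some other} value is still unique, which it is (e.g. $d'$), so it should be accepted, contradiction; if it stays accepting, I iterate, re-reading every value except $d'$, ending with a word where only $d'$ is unique (still in $\langnegallrep$, correctly accepted) versus the analogous word re-reading every value except $d$, which is also still in $\langnegallrep$ — and then a final probe re-reading $d'$ in the first word (now leaving $\langnegallrep$) versus $d$ in the second; determinism plus the symmetry of $d,d'$ forces equal behavior, but the two words have opposite membership. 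Pinning down this symmetry — an automorphism of configurations swapping $d$ and $d'$ and the fact that it commutes with transitions on symbols involving only $d$, $d'$, and fresh values — is the technical heart, and I would invoke the standard equivariance (data-symmetry) of register-based models to justify it cleanly.
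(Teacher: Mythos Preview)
Your approach has a genuine gap. The equivariance argument you sketch cannot produce a contradiction in the way you claim, because $\langnegallrep$ is itself an equivariant language: any data permutation preserves membership. Concretely, take your two words
\[
A = w_n \cdot (\text{re-read all except } d') \cdot d'
\qquad\text{and}\qquad
B = w_n \cdot (\text{re-read all except } d) \cdot d.
\]
Both have every value occurring exactly twice, so \emph{both} are outside $\langnegallrep$; they do not have ``opposite membership''. If instead you meant to compare $A$ with $B' = w_n \cdot (\text{re-read all except } d) \cdot d'$, then $B'$ is indeed in $\langnegallrep$ while $A$ is not, but now the symmetry is broken: the configurations just before the final symbol are $c_A$ and $\sigma(c_A)$ (where $\sigma$ swaps $d,d'$), and reading $d'$ from $c_A$ is equivariant to reading $\sigma(d') = d$ from $\sigma(c_A)$, not to reading $d'$ from $\sigma(c_A)$. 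So determinism plus symmetry does \emph{not} force equal behaviour on $A$ and $B'$. More generally, whenever your construction produces two words related by a data permutation, equivariance of the $\drsa$ gives equal acceptance \emph{and} equivariance of $\langnegallrep$ gives equal membership, so no contradiction arises; and whenever your two words have different membership, they are not related by a permutation and you have no handle on the $\drsa$'s behaviour.

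The paper's proof avoids all of this by a two-line closure argument: $\langnegallrep \in \rsa$ (\cref{ex:lang_negallrep}), its complement $\langallrep$ is \emph{not} in $\rsa$ (established in the proof of \cref{thm:rsa-closure} via a reduction from Minsky-machine emptiness), and $\drsa$ is closed under complement (\cref{thm:drsa-closure}); hence $\langnegallrep \notin \drsa$. The heavy lifting --- that $\langallrep \notin \rsa$ --- is a computability lower bound, not a pumping argument, and it is precisely this that your direct approach is missing. A Myhill--Nerode-style argument against $\drsa$ would need to exhibit two words reaching \emph{literally the same} configuration (not merely permutation-equivalent ones) yet separated by $\langnegallrep$; your sketch does not do this, and because $\drsa$ registers can hold unboundedly large sets, it is not clear that any finite pigeonhole on ``register signatures'' suffices.
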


\begin{proof}
Let us consider the language~$\langnegallrep$ from the proof of
  \cref{thm:rsa-closure}, which is expressible using~$\rsa$s, and its
  complement~$\langallrep$, which is not expressible using~$\rsa$s.
Since $\drsa$s are closed under complement (\cref{thm:drsa-closure}), if they could
accept~$\langnegallrep$, they could also accept~$\langallrep$, which is
  a~contradiction.
  Therefore, $\langnegallrep \notin \drsa$.
\end{proof}

\vspace{-1.0mm}
\subsection{Expressivity}\label{sec:label}
\vspace{-0.0mm}

The RSA model captures an interesting class of data word languages, strictly
generalizing NRAs and being incomparable to ARAs~\cite{DemriL09} or pebble
automata~\cite{NevenSV04}.
Due to the page limit, see \ifTR\cref{sec:expressivity-rsa} \else \cite{techrep} \fi for detailed positioning
of RSAs in the landscape of register automata models.

\newcommand{\detalgonew}[0]{
\begin{figure}[t]
\begin{algorithm}[H]
  \SetKwProg{Fn}{Function}{:}{}
  \SetKwInOut{Input}{\hspace*{-\algomargin}Input}
  \SetKwInOut{Output}{\hspace*{-\algomargin}Output}
  \SetKwData{worklist}{worklist}
  \caption{Determinization of an $\nra$ into a~$\drsa$}
  \label{alg:det}
  \Input{Copyless $\nra$ $\aut = (Q, \regs, \Delta, I, F)$}
  \Output{\mbox{$\drsa$ $\aut' = (\mathcal{Q}', \regs, \Delta', I', F')$
    with $\langof{\aut'} = \langof{\aut}$ or $\bot$}}

  $\mathcal{Q}' \gets \worklist \gets I' \gets
    \{(I, c_0 = \{r \mapsto 0 \mid r \in \regs\})\}$\;\label{ln:init}
  $\Delta' \gets \emptyset$\;

  \While{$\worklist \neq \emptyset$}{
    $(S,c) \gets \worklist.\mathit{pop}()$\; \label{ln:pop}
    \ForEach{$a \in \Sigma, \guard \subseteq \set{r \in \regsof S \mid c(r) \neq 0}$}{ \label{ln:guards} 
      $T \gets \big\{\trans q a {\guardeq, \guardneq, \cdot}{q'}\in\Delta \mid
      q \in S, \guardeq \subseteq \guard, \guardneq \cap \guard =
      \emptyset\big\}$\;\label{ln:transitions}
      $S' \gets \big\{q' \mid \trans \cdot \cdot {\cdot, \cdot, \cdot}{q'} \in T\big\}$\;\label{ln:macrostate}
      \lIf{$\exists \trans q \cdot {\cdot, \guardneq, \cdot} {q'} \in
        T,\exists r \in \guardneq \colon c(r) > 1$}{\label{ln:cardinality}
        \Return{$\bot$}
      }
      $T^\bullet = \{\trans q a {\guardeq, \guardneq, \update \subst{\guardeq}{\inp}} {q'} \mid \trans q a {\guardeq, \guardneq, \update} {q'} \in
        T\}$\;
      \ForEach( \tcp*[f]{update the register size classes}){$r_i \in \regs$}{\label{ln:aggregate}
         $\mathit{tmp} \gets \emptyset$\;
         \ForEach{$\trans \cdot \cdot {\guardeq, \cdot, \update} \cdot \in T^\bullet$}{\label{ln:trans-iter}
           \lIf{$\update(r_i) = y \neq \bot \land c(y) \neq 0$}{
             $\mathit{tmp} \gets \mathit{tmp} \cup \set y$
           }
         }
         $\mathit{op}_{r_i} \gets \mathit{tmp}$\;\label{ln:update_collapse}
         $c'(r_i) \gets \!\!\!\!\sum\limits_{x \in \mathit{op}_{r_i}}^{>1\leadsto \omega}\!\!
           c(x)$
         \;\label{ln:counter_update}


         %

      }
      \ForEach(\tcp*[f]{check for Cartesian overapproximation}){$q' \in S'$}{\label{ln:approx_test_loop}
        $P \gets \mathit{op}_{r_1} \times \cdots \times \mathit{op}_{r_n}$
          for $\set{r_1, \ldots, r_n} = \regsof{q'}$\;\label{ln:cartesian}
        \ForEach{$(x_1, \ldots, x_n) \in \mathit{P}$}{\label{ln:approxtest}
          \lIf{$\nexists (\trans \cdot \cdot {\cdot, \cdot, \update}{q'} )
            \in T^\bullet$  s.t.\!\!\!\! $\bigwedge\limits_{1\leq i \leq n}\!\!\!\!\!
            \update(r_i) = x_i$}{
              \Return{$\bot$}\label{ln:abort}
            \vspace*{-3.0mm}
          }
        }
      }
      $\update' \gets \set{r_i \mapsto \mathit{op}_{r_i} \mid r_i \in
        \regs}$\;\label{ln:update}
      \If{$(S', c') \notin \mathcal{Q}'$}{
        $\worklist.\mathit{push}((S', c'))$\;
        $\mathcal{Q}' \gets \mathcal{Q}' \cup \set{(S', c')}$\;
      }
      $\Delta' \gets \Delta' \cup \big\{\trans {(S, c)} a {\guard, \regs \setminus
        \guard, \update'} {(S', c')}\big\}$\;\label{ln:newtrans}
    }
  }

  \Return{$\aut' =  (\mathcal{Q}', \regs, \Delta', I', \set{(S, c) \in \mathcal{Q}' \mid S \cap F \neq \emptyset})$}\label{ln:det_return}\;
\end{algorithm}
\vspace*{-5mm}
\end{figure}
}

\vspace{-0.0mm}
\section{Determinizing Register Automata}\label{sec:dra}
\vspace{-0.0mm}


$\rsa$s have the following interesting property: a~large class of $\nra$s can
\begin{changebar}
be determinized into $\drsa$s (we emphasize that the determinization considered
  here changes the model from one storing single values in registers (NRAs) to
  one storing sets of values in registers (DRSAs)).
\end{changebar}
In this section, we give a~determinization semi-algorithm and specify
properties \mbox{of a~class of $\nra$s for which it is complete.}

\detalgonew

Let $\aut = (Q, \regs, \Delta, I, F)$ be an $\nra$.
We use $\regsof q$ for $q \in Q$ to denote the set of registers~$r$ 
\begin{changebar}
\emph{active} at $q$, 
for which 
\end{changebar}
there exists a~transition $\trans s \cdot {\guardeq, \guardneq, \update} t \in
\Delta$ with
\begin{inparaenum}[(i)]
  \item  $\update(r) \neq \bot$ and $t = q$ or
  \item  $r \in \guardeq \cup \guardneq$ and $s = q$.
\end{inparaenum}
\begin{changebar}
Besides being the basis for the register locality optimization in \cref{sec:reg-locality},
the set of active registers $\regsof q$~is also used in the basic algorithm as an
overapproximation of the set of registers with a value different from $\bot$ (it excludes those register that were just assigned $\bot$). 
Given a~set of states~$S$, we define $\regsof S = \bigcup_{q \in S} \regsof q$.
\end{changebar}
%
Furthermore,
we call~$\aut$ \emph{copyless} if there is no reachable
configuration $(q, f)$ such that $f(r_1) = f(r_2) \neq \bot$ for a~pair of distinct
registers $r_1, r_2 \in \regs$, i.e., there is at most one copy of each data
value in~$\aut$.
Again, any $\nra$ can be converted into the copyless form, however, the
number of states can increase to $B_{|\regs|} \cdot |Q|$ where $B_n$ is the $n$-th
Bell number.
Intuitively, the transformation is done by creating one copy of
each state for every possible partition of~$\regs$ (the partitions contain
registers with the same value), and modifying the transition function
correspondingly.

The determinization (semi-)algorithm for a~copyless $\nra$~$\aut$ is shown
in \cref{alg:det}.
On the high level, it is similar to the classical Rabin-Scott subset
construction for determinizing finite automata~\cite{RabinS59} with additional
treatment of registers superimposed onto~it.

During the construction, we track
\begin{inparaenum}[(i)]
  \item  all states of~$\aut$ in which the runs of $\aut$ might be at a~given
    point, represented by a~set of states $S \subseteq Q$ and
  \item  a~mapping $c\colon \regs \to \set{0,1,\omega}$ assigning to each register its \emph{size class}, 
  that records whether the size of the set in the register is 0, 1, or larger than 1 (denoted by the $\omega$).
      The size classes are needed to ensure that our simulation of a~disequality
      test $\inp \neq r$ by the non-membership test $\inp \notin r$ is precise, as explained under the item (1) below. 
\end{inparaenum}
the macrostate is then a~pair~$(S, c)$.
The initial state of the constructed $\drsa$ is the macrostate~$(I, c_0)$ where
$c_0$ is a~mapping assigning zero to each register (the run of a~$\drsa$
starts with all registers initialized to~$\emptyset$) (\lnref{ln:init}).

The main loop of the algorithm then constructs successors of reachable
macrostates for each $a\in \Sigma$ and each $g \subseteq \regs$ on
\lnref{ln:guards}; each pair $a$, $g$ corresponds to the so-called
\emph{minterm} (minterms denote combinations of guards whose semantics do not
overlap~\cite{DAntoniV14}).
\begin{changebar}
The set of active registers $\regsof S$ is used here to prune those minterms that clearly cannot be satisfied,
since they are testing a register whose value must be $\bot$.   
\end{changebar}
For each minterm, we collect all transitions of~$\aut$ compatible with this
minterm (\lnref{ln:transitions}) and generate the successor set of states~$S'$
(\lnref{ln:macrostate}).
The~$\aut'$ update function~$\update'$ for register~$r$ is then set to collect
into~$r$ all possible values that might be stored into~$r$ in~$\aut$ on any run
over the input word at the given position
(Lines~\ref{ln:aggregate}--\ref{ln:update}).

\noindent
\begin{changebar}
The algorithm uses the following three techniques to handle three sources of imprecision:
\end{changebar}
\begin{enumerate}
  \item  
\begin{changebar}
\emph{Register size classes. (Lines 10 to 15).}
\end{changebar}
Since the algorithm collects in the set-register~$r$ all possible values
    that could have been stored into the standard register~$r$ in~$\aut$, if the
    disequality tests in~$\guardneq$ were changed for non-membership tests
    in~$\guardnotin$, this could mean that~$\aut'$ might not be able to simulate
    some transition of~$\aut$ (the transition would not be enabled).
    Consider the following example:

    \vspace{-3mm}
    \begin{center}
      \begin{minipage}[b]{0.4\textwidth}
      \scalebox{0.8}{
      \begin{tikzpicture}
  \tikzset{
    ->, 
    >=stealth',
    initial text=$ $, 
    node distance=25mm,
  }

  \node[initial,state] (q) {$q$};
  \node[state] (s) [right of=q] {$s$};
  \node[state, accepting] (t) [right of=s] {$t$};


 \draw (q) edge[loop above] node[yshift=0mm] {$\pictrans {a} {} {r_q \gets \inp}$} (q);
 \draw (q) edge[loop below] node[yshift=0mm] {$\pictrans {a} {} {r_q \gets r_q}$} (q);
 \draw (q) edge node[above,yshift=-1mm] {$\pictrans a {\inp \neq r_q} {r_s \gets r_q}$} (s);
 \draw (s) edge node[above,yshift=-4mm] {$\pictrans a {\inp = r_s} {}$} (t);

\end{tikzpicture}
      }
      \centering

      \vspace{1mm}
      \scalebox{0.9}{
        (a) An $\nra$ $\aut$ with a~disequality guard
      }
      \end{minipage}
      \hfill
      \begin{minipage}[b]{0.50\textwidth}
      \scalebox{0.8}{
      \begin{tikzpicture}
  \tikzset{
    ->, 
    >=stealth',
    initial text=$ $, 
    node distance=25mm,
  }

  \node[initial,draw,rectangle,rounded corners] (q) {$\set q$};
  \node[draw,rectangle,rounded corners,xshift=10mm] (qs) [right of=q] {$\set{q,s}$};
  \node[] (t) [right of=qs,node distance=15mm] {$\cdots$};


 \draw (q) edge[loop above] node[xshift=-8mm,yshift=0mm] {$\pictrans {a} {\inp \in r_q} {r_q \gets r_q \cup \set \inp}$} (q);
 \draw (q) edge node[above,yshift=-1mm] {$\pictrans a {\inp \notin r_q} {r_q \gets r_q \cup \set \inp \\ r_s \gets r_q}$} (qs);
 \draw (qs) edge[loop above] node[above right,xshift=-2mm,yshift=-2mm] {$\pictrans {a} {\inp \notin r_q \\ \inp \notin r_s} {r_q \gets r_q \cup \set \inp \\ r_s \gets r_q}$} (qs);
 \draw (qs) edge[bend left] node[below,yshift=-1mm] {$\pictrans a {\inp \in r_q \\ \inp \notin r_s} {r_q \gets r_q \cup \set \inp}$} (q);

 \draw (qs) edge (t);

\end{tikzpicture}
      }
      \centering

      \vspace{-2mm}
      \scalebox{0.9}{
        (b) A~part of the~$\drsa$ obtained for~$\aut$
      }
      \end{minipage}
    \end{center}
\noindent
    where (b) contains a~part of the~$\drsa$ obtained if \cref{alg:det} did not
    use the $c$-component of macrostates.
    The reason for this is that after reading the third symbol (i.e., $\pair a
    2$), the~$\rsa$ goes to the macrostate~$\set q$---it thinks it cannot be
    in~$s$ any more.
\begin{changebar}
    This is the reason why we augment macrostates with the~$c$-component, the register size classes.
    They allow to detect that a~disequality test is performed on a~register
    containing more than one element, in which case we terminate the algorithm
    (\lnref{ln:cardinality}).
    The sizes classes are updated on
    \lnref{ln:counter_update}
    where the sum is \emph{saturated} to~$\omega$ for values ${}> 1$ (denoted
    by $\sum\limits^{>1\leadsto \omega}$).
\end{changebar}

  \item 
\begin{changebar}
\emph{Checking for Cartesian overapproximation (Lines 16 to 19).} 
    By collecting all possible values that can occur in registers, the
    algorithm is performing the so-called \emph{Cartesian overapproximation} 
\end{changebar}
   (i.e., it
    is losing information about dependencies between components in tuples).
    This can lead to a~scenario where, for some set-register assignment~$f'$
    of~$\aut'$, we would have $d_1 \in f'(r_1)$ and $d_2 \in f'(r_2)$, but
    there would be no corresponding configuration of~$\aut$ with register
    assignment~$f$ such that $d_1 = f(r_1)$ and $d_2 = f(r_2)$.
    Consider, e.g., an~$\nra$ for the language
    $\set{uvwvz \mid u,w,z \in (\Sigma \times \datadom)^*, |v| = 2}$:

    \begin{center}
      \scalebox{0.8}{
      \begin{tikzpicture}
  \tikzset{
    ->, 
    >=stealth',
    initial text=$ $, 
    node distance=25mm,
  }

  \node[initial,state] (q) {$q$};
  \node[state] (s) [right of=q] {$s$};
  \node[state] (t) [right of=s] {$t$};
  \node[state] (u) [right of=t] {$u$};
  \node[state, accepting] (v) [right of=u] {$v$};


 \draw (q) edge[loop above] node[yshift=-4mm] {$\pictrans {a} {} {}$} (q);
 \draw (q) edge node[above,yshift=-1mm] {$\pictrans a {} {r_1 \gets \inp}$} (s);
 \draw (s) edge node[above,yshift=-1mm] {$\pictrans {a} {} {r_2 \gets r_1\\ r_3 \gets \inp}$} (t);
 \draw (t) edge[loop above] node[above,yshift=-1mm] {$\pictrans {a} {} {r_2 \gets r_2 \\ r_3 \gets r_3}$} (g);
 \draw (t) edge node[above,yshift=-1mm] {$\pictrans {a} {\inp = r_2} {r_4 \gets r_3}$} (u);
 \draw (u) edge node[above,yshift=-4mm] {$\pictrans {a} {\inp = r_4} {}$} (v);
 \draw (v) edge[loop above] node[above,yshift=-4mm] {$\pictrans {a} {} {}$} (v);

\end{tikzpicture}
      }
    \end{center}

    \noindent
    When the algorithm computes the successor of the macrostate $(\set{q,s,t},
    \set{r_1{:}1, r_2{:}1, r_3{:}1})$ over $a\in \Sigma$ and the guard $\guard =
    \emptyset$, it would obtain the following update of registers:
    $r_1 \gets \set \inp$ (transition from $q$ to $s$),
    $r_2 \gets r_1 \cup r_2$ (transition from $s$ to $t$ and transition from $t$
    to $t$), and
    $r_3 \gets r_3 \cup \set \inp$ (transition from $s$ to $t$ and transition
    from $t$ to $t$).
    This would simulate also the update $r_2 \gets r_2$, $r_3 \gets \inp$, which is
    nowhere in the original~$\nra$.
    The algorithm detects the possibility of such an overapproximation on
    Lines~\ref{ln:approx_test_loop}--\ref{ln:abort}.
    Note that the overapproximation checking is overly conservative in a way that 
    for some automata, if we did not do the check and continued the
    construction, outputting the $\drsa$, the resulting $\drsa$ would still be
    correct (see \ifTR\cref{ex:cartesian-overapp} in
    \cref{sec:app-examples}\else \cite{techrep}\fi).\footnote{In the implementation, we actually
    postpone the overapproximation test only after the whole $\drsa$ is
    constructed.  At this point, we can check more precisely whether there is some real
    overapproximation.}

    %
    %
    %

\item
\begin{changebar}
  \emph{Choice collapse at tests (Line 9).} 
    When a~set-register has collected several
    nondeterministic choices of values for a~standard $\nra$ register, then testing membership of a concrete value, 
    which stands for testing equality of the $\nra$ register, should collapse the choices to that single value.
    Without the collapse, a subsequent second membership test with a different value might pass as if the $\nra$ register could hold two different values at once within a~single non-deterministic case.   
\end{changebar}
    Consider the following example of an~$\nra$~$\aut$ and an~$\rsa$ obtained
    from~$\aut$ by \cref{alg:det} without the substitution $\update\subst{\guardeq}{\inp}$ 
\begin{changebar}
on Line 9 
\end{changebar}
(to
    save space, we collapse all macrostates with the same set of states into one):

    \vspace{-2mm}
    \begin{center}
      \begin{minipage}[b]{0.45\textwidth}
      \scalebox{0.9}{
      \begin{tikzpicture}
  \tikzset{
    ->, 
    >=stealth',
    initial text=$ $, 
    node distance=25mm,
  }

  \node[initial,state] (q) {$q$};
  \node[state] (s) [right of=q] {$s$};
  \node[state, accepting] (t) [right of=s] {$t$};


 \draw (q) edge[loop above] node[yshift=0mm] {$\pictrans {a} {} {r_q \gets \inp}$} (q);
 \draw (q) edge[loop below] node[yshift=0mm] {$\pictrans {a} {} {r_q \gets r_q}$} (q);
 \draw (q) edge node[above,yshift=-1mm] {$\pictrans b {\inp = r_q} {r_s \gets r_q}$} (s);
 \draw (s) edge node[above,yshift=-4mm] {$\pictrans {b} {\inp = r_s} {}$} (t);

\end{tikzpicture}
      }
      \centering

      \vspace{-2mm}
      \scalebox{0.9}{
        (a) An $\nra$ $\aut$
      }
      \end{minipage}
      \hfill
      \begin{minipage}[b]{0.45\textwidth}
      \scalebox{0.9}{
      \begin{tikzpicture}
  \tikzset{
    ->, 
    >=stealth',
    initial text=$ $, 
    node distance=25mm,
    inner sep=0.3mm,
  }

  \node[initial,state] (q) {$\set{q}$};
  \node[state] (s) [right of=q] {$\set{s}$};
  \node[state, accepting] (t) [right of=s] {$\set{t}$};

  \draw (q) edge[loop above] node[yshift=1mm] {$\pictrans {a} {} {r_q \gets r_q \cup \set{\inp}}$} (q);
  \draw (q) edge node[above,yshift=0mm] {$\pictrans b {\inp \in r_q} {r_s \gets r_q}$} (s);
  \draw (s) edge node[above,yshift=-3mm] {$\pictrans {b} {\inp \in r_s} {}$} (t);

\end{tikzpicture}
      }
      \centering

      \vspace{9mm}
      \scalebox{0.9}{
        (b) A~$\drsa$ overapproximating $\aut$'s language
      }
      \end{minipage}
    \end{center}
    One can see that while $\aut$ cannot accept the word
    $\pair a 1\pair a 2 \pair b 1 \pair b 2$, the $\drsa$ can.
    This happens because the $\drsa$ did not ``\emph{collapse}'' the possible
    nondeterministic choices that are kept in the registers for the value
    of~$r_q$ after the first membership test (on the transition from $\set{q}$
    to~$\set{s}$) succeeded.
    We avoid this situation by the substitution 
\begin{changebar}
on Line~9,
\end{changebar}
    which performs the collapse of the set of nondeterministic choices given by particular values of $\guardeq$ into
    a~single value when it is positively tested.
    The update on the $\rsa$ transition from~$\set q$ to~$\set s$
    constructed by the algorithm will then become $r_s \gets \set \inp$ and the
    result will be~precise.

    One might also imagine similar scenario as the previous but with several
    registers copying a~nondeterministically chosen value (e.g., when
    a~data value is copied from~$r_1$ to~$r_2$ and, later, $r_1$ is
    positively tested for equality, we need to guarantee that the value of~$r_2$
    also collapses to the given data value).
    In order to avoid this, we require that the input $\nra$ is
    \emph{copyless}, i.e., it never happens that a~data value is in more
    than one register.
\end{enumerate}

The correctness of \cref{alg:det} is summarized by the following theorem,
proved in \ifTR\cref{sec:det-proof}\else\cite{techrep}\fi.
\begin{restatable}{theorem}{thmDetSoundness}\label{thm:alg_soundness}
  When \cref{alg:det} returns a~$\drsa$ $\aut'$, then $\langof{\aut} =
  \langof{\aut'}$.
\end{restatable}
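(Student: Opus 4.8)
The plan is to prove soundness by establishing a precise simulation relation between runs of the input $\nra$ $\aut$ and the run of the constructed $\drsa$ $\aut'$ (when the algorithm succeeds). Since $\aut'$ is deterministic and complete (by the guard structure on \lnref{ln:newtrans}, where the guard/non-guard pair partitions $\regs$), on any input word $w$ there is exactly one run of $\aut'$; I want to show that the macrostate $(S,c)$ reached after reading a prefix $u$ of $w$, together with the register assignment $f'$, satisfies: $S$ is exactly the set of states $q$ such that $\aut$ has a run over $u$ ending in $q$, and for each register $r$, $f'(r) = \{ f(r) \mid (q,f) \text{ is reachable in } \aut \text{ over } u\}$ with the cardinality abstraction $c(r)$ correctly bounding $|f'(r)|$ (as $0$, $1$, or $\omega{=}{\geq}2$). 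Once this invariant is in place, $w \in \langof{\aut}$ iff some reachable-over-$w$ configuration of $\aut$ is final iff $S \cap F \neq \emptyset$ iff $(S,c)$ is a final macrostate of $\aut'$, giving $\langof{\aut} = \langof{\aut'}$.

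\textbf{First I would} set up the induction on the length of the input prefix. The base case is immediate from \lnref{ln:init}: $I' = \{(I, c_0)\}$ matches the reachable configurations of $\aut$ over $\epsilon$ (all registers empty, $c_0$ assigns $0$ everywhere). For the inductive step, fix a prefix $u$, a letter-value pair $\pair a d$, and the unique minterm $(a, \guard)$ that $d$ satisfies with respect to the current set-register assignment $f'$ — i.e., $\guard = \{ r \in \regsof S \mid d \in f'(r)\}$, which by the invariant equals $\{ r \mid d = f(r) \text{ for some reachable } (q,f)\}$ restricted to active non-empty registers. The transitions collected in $T$ on \lnref{ln:transitions} are then exactly those transitions of $\aut$ enabled from some state in $S$ on $\pair a d$ (here the single-valuedness of $\aut$ and the register-locality are what make "$d = f(r)$ for some reachable $(q,f)$" coincide with "$\guardeq \subseteq \guard, \guardneq \cap \guard = \emptyset$" — this needs care, since an equality test $\inp = r$ must hold in $\aut$ iff $r \in \guard$, using that distinct registers never share a value). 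So $S'$ on \lnref{ln:macrostate} is the correct successor set of states. For the register contents, I would argue that the accumulated update $\update'$ (Lines~\ref{ln:aggregate}--\ref{ln:update}), modulo the collapse on \lnref{ln:update_collapse}, computes exactly $\bigcup$ of the $\aut$-updates over all transitions in $T$, so that $f'_{\mathrm{new}}(r_i) = \{ f_{\mathrm{new}}(r_i) \mid (q_{\mathrm{new}}, f_{\mathrm{new}}) \text{ reachable over } u\pair a d\}$; the $\hat c$-bookkeeping on \lnref{ln:counter_update} is then a routine check that the saturating sum tracks cardinalities correctly.

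\textbf{The hard part will be} the three correctness conditions that motivate the early returns — because when the algorithm does \emph{not} abort, I must show the construction is not merely an overapproximation but exact. (i) For the disequality issue (\lnref{ln:cardinality}): if no disequality test is ever performed on a register whose set has $\geq 2$ elements, then the non-membership test $\inp \notin r$ on the $\drsa$ side is equivalent to "$d \neq f(r)$ for the value $f(r)$", so no transition of $\aut$ is spuriously disabled and none is spuriously enabled. (ii) For the Cartesian-abstraction issue (Lines~\ref{ln:approx_test_loop}--\ref{ln:abort}): the check that every tuple $(x_1,\dots,x_n)$ in the product $\mathit{op}_{r_1}\times\cdots\times\mathit{op}_{r_n}$ is realised by an actual transition of $\aut$ with target $q'$ guarantees that the Cartesian product of the accumulated register sets introduces no combination of values that $\aut$ could not produce — so the union-of-updates really does equal the image under $\aut$'s step relation, not a strict superset. (iii) For the collapse issue: I need that \lnref{ln:update_collapse} correctly "freezes" a register to the single value $d$ once it is positively tested against the current minterm, which, combined with single-valuedness (so no sibling register secretly still holds the uncollapsed set), ensures a later positive/negative test on that register behaves as it would in $\aut$. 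I expect condition (ii), verifying exactness of the union-of-updates against the Cartesian product once the abort check has passed, to be the most delicate bookkeeping; the rest is a careful but standard subset-construction argument. I would close by assembling the invariant into the language equality as sketched above.
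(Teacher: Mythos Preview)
Your plan is sound in spirit and identifies exactly the three failure modes the algorithm guards against, but it is structurally different from the paper's argument and is somewhat optimistic about one direction.

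The paper does \emph{not} maintain your exact invariant (that $S$ equals the set of $\aut$-reachable states and $f'(r)$ equals the set of all values $f(r)$ over reachable $(q,f)$). Instead it splits the proof. For $\langof{\aut}\subseteq\langof{\aut'}$ it runs a forward induction with only the one-sided conditions $q_i\in S'_i$, $f_i(r)\in f'_i(r)$, and ``$c'_i$ tracks $|f'_i|$ correctly''; the key step is showing the $\drsa$ transition is enabled, which reduces to the cardinality check on \lnref{ln:cardinality}. For $\langof{\aut'}\subseteq\langof{\aut}$ it does a \emph{backward} construction: starting from a final configuration of $\aut'$, it builds sets $U_n,U_{n-1},\ldots,U_0$ of candidate $\aut$-configurations and shows each $U_i$ is nonempty, invoking the Cartesian check on Lines~\ref{ln:cartesian}--\ref{ln:abort} to guarantee a predecessor always exists.

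Your single forward induction with an exact invariant is a legitimate alternative, but the ``$\subseteq$'' half of that invariant (every state in $S'_i$ is genuinely $\aut$-reachable with a register assignment drawn from~$f'_i$) is precisely the content of the paper's backward argument, and your sketch of why the per-target Cartesian check on sources suffices is the delicate point: you need that every tuple of \emph{values} in the product of set-registers at $q'$ is witnessed by an actual $\aut$-configuration that can step to $q'$, not merely that every tuple of \emph{source names} is realised by some transition. Bridging this gap forward requires essentially the same reasoning the paper packages as the backward existence claim, so you are not saving work, only reorganising it. The paper's decomposition has the advantage that the forward direction needs only the trivially-maintained containment, while the backward direction isolates exactly where Lines~\ref{ln:cartesian}--\ref{ln:abort} are used.

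One further point: you invoke register-locality alongside single-valuedness, but the theorem's hypothesis (and the paper's proof) requires only that $\aut$ be single-valued; register-locality is mentioned in the text as an available preprocessing step but is not assumed in the soundness argument.
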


\cbstart
The following proposition establishes the numbers of states and
transitions of the output $\drsa$ of \cref{alg:det}, which follow directly from
the structure of macrostates and transitions of~$\aut'$.

\begin{restatable}{proposition}{thmDetComplexity}\label{thm:alg_complexity}
  Let $\aut = (Q, \regs, \Delta, I, F)$ be an RA and let 
  \cref{alg:det} return a~$\drsa$ $\aut' = (\mathcal{Q}', \regs, \Delta', I', F')$.
  Then $|\mathcal{Q}'| \leq 2^{|Q| + (\log_2 3) \cdot |\regs|}$ and $|\Delta'|
  \leq |\Sigma| \cdot 2^{|Q| + (\log_2 6)\cdot |\regs|}$.
\end{restatable}

We note that the lower bound on~$|\mathcal{Q}'|$ is~$2^{|Q|}$, which comes from the
lower bounds of determinization of finite automata~\cite[Section~1.4.1]{EsparzaB23}.
In the RAs obtained from regexes, typically $|Q| \gg |\regs|$.

We can also modify \cref{alg:det} to omit the~$c$-component of macrostates
$(S,c)$, which will give us the bounds $|\mathcal{Q}'| \leq 2^{|Q|}$ (i.e., the
same as for finite automata determinization) and $|\Delta'| \leq |\Sigma| \cdot
2^{|Q| + |\regs|}$.
This simplification of the algorithm may in practice restrict the class of
input RAs on which it successfully terminates.
This is, however, not an issue for RAs \emph{without} disequality
guards~$\guardneq$, which are output by our algorithm for converting regexes to
RAs (cf.\ \cref{sec:regexes-to-ras}); for those, the modified algorithm
successfully terminates in the same cases (the size of the result might,
however, be different in both directions---on the one hand, having the
$c$-component increases the maximum possible number of states, but on the other
hand, keeping track of which registers are empty can avoid generation of
transitions and states that can never be used in a~run).


\cbend

\vspace{-0.0mm}
\subsection{Improvements}
\vspace{-0.0mm}

In this section, we propose improvements of the determinization algorithm. In particular, we 
introduce slight modifications of \cref{alg:det} as well as additional preprocessing of input 
NRAs aiming at enlarging the class of NRAs that can be determinized to DRSAs

\subsubsection{Refining the Register Size Map}

One approach to enlarge the class of NRAs that can be determinized by \cref{alg:det} is 
to keep the value of $c$ of each macrostate that overapproximate the register size as 
precise as possible. Indeed, the value of $c$ affects the condition on \lnref{ln:cardinality}.
A way how $c$ might become unnecessarily high is when $\mathit{tmp} \cap g \neq \emptyset$
and $\inp \in \mathit{tmp}$. In that case, $c(\inp) = 1$ does not need to be added to $c'$ since 
according to the transition guard we know that $\inp$ is in $g$. We hence 
modify the \lnref{ln:update_collapse} to 
$$\mathit{op}_{r_i} \gets 
\begin{cases}
  \mathit{tmp} \setminus \set \inp & \text{if } \mathit{tmp} \cap \guard \neq \emptyset \text{ and} \\
  \mathit{tmp}& \text{otherwise}
\end{cases}$$

\subsubsection{Register Locality}\label{sec:reg-locality}

Another feature limiting the determinizability of the given NRA are nondeterministic transitions  
dealing with the same registers violating the Cartesian overapproximation checked on 
\lnref{ln:abort}. In order to increase the locality of registers, we propose 
register-local NRAs. Locality of registers limiting updates 
of the same registers in different states and hence reducing 
the possibility of the condition on \lnref{ln:abort} holds true. 
Formally, we call~$\aut$ \emph{register-local} if
for all $r \in \regs$ it holds that if $r \in \regsof q$ and $r \in \regsof s$
for some states $q,s \in Q$, then $q = s$.
It is easy to see that every $\nra$ can be transformed into the register-local
form by creating a~new copy of a~register for every state that uses it,
potentially increasing the number of registers to~$|Q|\cdot|\regs|$.
Formally, given a NRA $\aut = (Q, \regs, \Delta, I, F)$, we define its \emph{register-localization} as
$\aut_{\bullet} = (Q, \regs_{\bullet}, \Delta_{\bullet}, I, F)$
where $\regs_{\bullet} = \{ r_q \mid q \in Q, r \in \regs \}$ and $\Delta_{\bullet}$ is defined as follows:
\begin{equation}
  \begin{split}
  \Delta_{\bullet} = \Big\{ \trans p {a} {\guardeq_{\bullet},\guardneq_{\bullet}, 
  \update_{\bullet}} {q}&  \mid \trans p {a} {\guardeq, \guardneq, \update} {q} \in \Delta,\quad \guardeq_{\bullet} = v_p(\guardeq),\quad \guardneq_{\bullet} = v_p(\guardneq), \\ 
  &\hspace{-8mm}\update_{\bullet} = \{ (v_q(r), v_p(t)) \mid (r, t) \in \update \} \cup \{ (v_s(r), \bot) \mid s\in Q\setminus\{q\}, r \in \regs \} \Big\}
  \end{split}
\end{equation}
where the localization function $v_q$ is defined as $v_q(x) = x_q$ if $x \in \regs$, otherwise $v_q(x) = x$ for $x \in \{ \inp, \bot \}$. 
We extend the definition of the localization function to set of values in the usual way.
As a prior step to \cref{alg:det}, we first convert input NRA to its register-local 
form. 

\subsubsection{Relaxing the Copyless Property}

Conversion of an input NRA to an equivalent copyless NRA may introduce non-equality 
guards on registers having nondeterministically chosen values (i.e., in two runs over the same string,
the registers can have different values at the same position in the string). 
Such non-equality guards can cause \cref{alg:det} to return $\bot$ on \lnref{ln:cardinality}
as it is shown in \cref{ex:preproc}.
In order to reduce the introduction of non-equality guards, we relax the copyless 
condition into \emph{relation-free} condition in a way that there is no transition where multiple registers are assigned the same value
regardless of the input. In other words, registers can hold the same value as long 
a run does not induce equality relations on the register values.
Syntactically, relation-freeness can be expressed as each register must appear on a right-hand side of an
update at most once, and at most one register can be updated by $\inp$ or a~register
in the equality guard. Note that every NRA can be converted to a relation-free form,
which is then used as an input of \cref{alg:det}.


\begin{example}\label{ex:preproc}
	Consider the NRA over $\Sigma = \set{a}$ shown in \cref{fig:preproc_ra_example1}.
	It non-deterministically selects a data value to store in $r_1$ and then stores
	the following data value in $r_2$.
	Equivalent copyless NRA is shown in
	\cref{fig:preproc_ra_example2}. The NRA non-deterministically selects the
	data value for $r_1$ as well, but before storing the next data value in $r_2$, it must check that
	the data value is not already stored in $r_1$. If the data value is already stored in $r_1$,
	then it marks that $r_1=r_2$ in its state control and does not actually store anything
	in $r_2$. With this construction, however, we necessarily introduce a non-equality
	guard on $r_1$, which eventually causes \cref{alg:det} to return $\bot$ on
	\lnref{ln:cardinality}.
\end{example}

\begin{figure}[t]
	\begin{subfigure}{0.48\textwidth}
		\centering
    \scalebox{0.8}{
		\begin{tikzpicture}
  \tikzset{
    ->, 
    >=stealth',
    initial text=$ $, 
    node distance=30mm,
  }
  \useasboundingbox (-8mm,13mm) rectangle (66mm,-13mm);

  \node[state,initial] (q) {$q$};
  \node[state] (s) [right of=q] {$s$};
  \node[state,accepting](t) [right of=s] {$t$};
  \path[->]

  (q) edge[loop above] node[yshift=-3mm]  {$\pictrans{a}{}{}$} (q)
  (q) edge[above] node {$\pictrans{a}{}{r_1 \gets \inp}$} (s)
  (s) edge[above] node {$\pictrans{a}{}{r_2 \gets \inp}$} (t);
\end{tikzpicture}
    }
		\caption{Non-copyless form}\label{fig:preproc_ra_example1}
	\end{subfigure}
	\begin{subfigure}{0.48\textwidth}
		\centering
    \scalebox{0.8}{
		\begin{tikzpicture}
  \tikzset{
    ->, 
    >=stealth',
    initial text=$ $, 
    node distance=30mm,
  }

  \useasboundingbox (-8mm,13mm) rectangle (66mm,-13mm);
  \node[state,initial] (q) {$q$};
  \node[state] (s) [right of=q] {$s$};
  \node[state,accepting](t_2) [right of=s, yshift=7mm] {$t_{r_1=r_2}$};
  \node[state,accepting](t) [right of=s, yshift=-7mm] {$t_{r_1 \neq r_2}$};
  \path[->]

  (q) edge[loop above] node[yshift=-3mm]  {$\pictrans{a}{}{}$} (q)
  (q) edge[above] node {$\pictrans{a}{}{r_1 \gets \inp}$} (s)
  (s) edge[below] node {$\pictrans{a}{in \neq r_1}{r_2 \gets \inp}$} (t)
  (s) edge[above] node[yshift=-3mm] {$\pictrans{a}{in = r_1}{}$} (t_2);
\end{tikzpicture}
%
%
    }
		\caption{Copyless form}\label{fig:preproc_ra_example2}
	\end{subfigure}
	\caption{Copyless conversion for an NRA that stores data values in two registers.}
\end{figure}

\subsection{Determinizability} 

Naturally, we wish to syntactically characterise the class of $\nra$s for which
\cref{alg:det} is complete.
We observe that when we start with an $\nraeqof 1$ and apply the register localization the algorithm always returns a~$\drsa$ (the theorem is proved in \ifTR\cref{sec:proof-determinisability}\else\cite{techrep}\fi).

\begin{restatable}{theorem}{thmNraOneToDrsa}\label{thm:nra1-to-drsa}
  For every $\nraeqof 1$, there exists a~$\drsa$ accepting the same language.
\end{restatable}

Let $\booleanof{\nraeqof 1}$ be the class of languages that can be expressed using
a~Boolean combination of $\nraeqof 1$ languages, i.e., it is the closure of
$\nraeqof 1$ languages under union, and intersection, and complement (it could
also be denoted as $\booleanof{\uraeqof 1}$).

\begin{example}\label{ex:lang_exnegexrep}
For instance, the language $\langexnegexrep$ composed as the concatenation of
$\langexrep$ and $\langnegexrep$ with a~delimiter, formally
\begin{equation*}
  \hlmathbox{
  \langexnegexrep = \langexrep \concat \set{\pair b d \mid d \in \datadom}
  \concat \langnegexrep,
}
\end{equation*}
is in $\booleanof{\nraeqof 1}$, since it is the intersection of languages
$$
  \langexrep \concat \set{\pair b d \mid d \in \datadom}
  \concat \set{\pair a d \mid d \in \datadom}^*
  ~\text{and}~
  \set{\pair a d \mid d \in \datadom}^* \concat \set{\pair b d \mid d \in
  \datadom} \concat \langnegexrep,
$$
but is expressible neither by an~$\nra$ nor by a~$\ura$ ($\ura$s cannot express
the part \emph{before} the delimiter and $\nra$s cannot express the part
\emph{after} the delimiter).
\qed
\end{example}


The determinizability of $\booleanof{\nraeqof 1}$ then follows 
directly from \cref{thm:nra1-to-drsa,thm:drsa-closure}.
\begin{restatable}{corollary}{thmBooleanDrsa}\label{thm:boolean-drsa}
For any language in $\booleanof{\nraeqof 1}$, there exists a~$\drsa$ accepting it.
\end{restatable}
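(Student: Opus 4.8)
The plan is to prove the corollary by a straightforward structural induction on the Boolean expression witnessing membership in $\booleanof{\nraeqof 1}$. By definition, $\booleanof{\nraeqof 1}$ is the smallest class of languages that contains every $\nraeqof 1$ language and is closed under (finitary) union, intersection, and complement; thus any $L \in \booleanof{\nraeqof 1}$ is obtained from finitely many $\nraeqof 1$ languages by finitely many applications of these three operations, and I would induct on the structure of such an expression.

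For the base case, $L$ is the language of some $\nraeqof 1$, and \cref{thm:nra1-to-drsa}(a) immediately yields a $\drsa$ accepting $L$ (the single-valued register-local preprocessing required by \cref{alg:det} is language-preserving, so no extra bookkeeping is needed here). For the inductive step, assume $L = L_1 \cup L_2$, $L = L_1 \cap L_2$, or $L = \complof{L_1}$, where by the induction hypothesis $L_1$ and $L_2$ are accepted by $\drsa$s. Since \cref{thm:drsa-closure} states that $\drsa$ is closed under union, intersection, and complement, each case produces a $\drsa$ for $L$: a product construction for the binary operations, and completion followed by swapping final and non-final states for complement. This closes the induction and proves the corollary. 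If one prefers to present the atoms of the Boolean combination as $\uraeqof 1$ languages, \cref{thm:nra1-to-drsa}(b) plays the role of the base case equally well, which is why the class may also be written $\booleanof{\uraeqof 1}$.

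I do not expect any genuine obstacle in this argument --- all the difficulty is already absorbed into the two results being invoked, namely the correctness (and termination, when it returns a $\drsa$) of \cref{alg:det} on $\nraeqof 1$ inputs, captured by \cref{thm:alg_soundness} together with \cref{thm:nra1-to-drsa}, and the Boolean closure of $\drsa$ established in \cref{thm:drsa-closure}. The only point that should be stated explicitly is that the Boolean combination is \emph{finite}, so that finitely many applications of the closure constructions suffice and the resulting automaton is a well-defined finite $\drsa$; this is guaranteed by the definition of $\booleanof{\cdot}$.
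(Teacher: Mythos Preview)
Your proposal is correct and takes essentially the same approach as the paper: structural induction on the Boolean term, using \cref{thm:nra1-to-drsa} for the atoms and \cref{thm:drsa-closure} for the Boolean connectives. The only cosmetic difference is that the paper's detailed proof first pushes complements down to the literals (negation normal form) and handles a complemented atom via \cref{lem:complement_ura_nra} and \cref{thm:nra1-to-drsa}(b), whereas you apply the complement closure of $\drsa$ from \cref{thm:drsa-closure} directly at any level; both variants are valid and rest on the same two results.
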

%
A direct consequence of \cref{thm:boolean-drsa} and decidability of the emptiness check is 
also decidability of the inclusion problem as it is shown in the following corollary.
\begin{corollary}
The inclusion problem between $\rsa$ and $\booleanof{\nraeqof 1}$ is decidable.  \end{corollary}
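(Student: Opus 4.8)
The statement to prove is: the inclusion problem between $\rsa$ and $\booleanof{\nraeqof 1}$ is decidable. The plan is to reduce this to the emptiness problem for $\rsa$s, which is decidable by \cref{thm:rsa-emptiness}. Concretely, fix an instance: an $\rsa$ $\aut$ and a language $L \in \booleanof{\nraeqof 1}$, given as some Boolean combination of $\nraeqof 1$ languages. We want to decide whether $\langof\aut \subseteq L$, equivalently whether $\langof\aut \cap \complof L = \emptyset$.

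First I would observe that $\booleanof{\nraeqof 1}$ is effectively closed under complement (it is a Boolean closure, so $\complof L \in \booleanof{\nraeqof 1}$, and one can compute a representation of $\complof L$ syntactically from the representation of $L$). Next, by \cref{thm:boolean-drsa}, from the representation of $\complof L$ one can effectively construct a $\drsa$ $\but$ with $\langof\but = \complof L$; note the proof of that corollary goes through \cref{thm:nra1-to-drsa} (turning each $\nraeqof 1$ into a $\drsa$ via \cref{alg:det}, which is guaranteed to succeed on the single-valued register-local form of an $\nraeqof 1$) and the closure of $\drsa$ under the Boolean operations (\cref{thm:drsa-closure}), all of which are effective constructions. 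Then, since $\rsa$s are closed under intersection (\cref{thm:rsa-closure}) by a product construction, I would build an $\rsa$ $\mathcal C$ with $\langof{\mathcal C} = \langof\aut \cap \langof\but = \langof\aut \cap \complof L$. Finally, run the emptiness decision procedure of \cref{thm:rsa-emptiness} on $\mathcal C$: it accepts iff $\langof\aut \subseteq L$ fails.

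The only subtlety worth flagging is that each step must be \emph{effective}, not merely an existence claim: complementation within $\booleanof{\nraeqof 1}$ at the syntactic level is trivial, \cref{alg:det} always terminates (and returns a $\drsa$ rather than $\bot$) on single-valued register-local $\nraeqof 1$ inputs by \cref{thm:nra1-to-drsa}, the $\drsa$ product and final-state swap from \cref{thm:drsa-closure} are constructive, and the $\rsa$ product for intersection is constructive. Granting all of those, the main—and really the only substantive—obstacle is already discharged by \cref{thm:rsa-emptiness}: $\rsa$ emptiness is decidable ($\fomega$-complete), so the composed procedure decides the inclusion. I would therefore present the corollary's proof simply as this chain: syntactic complementation in $\booleanof{\nraeqof 1}$, then \cref{thm:boolean-drsa} to get a $\drsa$ for $\complof L$, then $\rsa$ intersection via \cref{thm:rsa-closure}, then $\rsa$ emptiness via \cref{thm:rsa-emptiness}.
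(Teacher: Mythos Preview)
Your proposal is correct and follows essentially the same approach as the paper: rewrite $\langof{\aut} \subseteq L$ as $\langof{\aut} \cap \complof{L} = \emptyset$, invoke \cref{thm:boolean-drsa} to obtain a $\drsa$ for $\complof{L}$, intersect via \cref{thm:rsa-closure}, and decide emptiness via \cref{thm:rsa-emptiness}. The paper's proof is a one-liner citing exactly these three results; your additional remarks on effectiveness of each step are a welcome elaboration but do not change the argument.
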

\begin{proof}
  We just write $\langof{\aut_1} \subseteq \langof{\aut_2}$ as $\langof{\aut_1}
  \cap \cmplof{\langof{\aut_2}} = \emptyset$ and use
  \cref{thm:boolean-drsa,thm:rsa-closure,thm:rsa-emptiness}.
\end{proof}








%
%

\newcommand{\figAntimirov}[0]{
\begin{figure}[t]
\resizebox{\textwidth}{!}{
\begin{minipage}{15cm}
\begin{align*}
  \derivof a \epsilon ={} & \emptyset &
    \nullaof{\epsilon} \liff {}& \mytrue \\
  \derivof a S ={} & \begin{cases}
    \{\dertuple \epsilon \bot \bot\} & \text{if } a \in S\\
    \emptyset & \text{otherwise}
  \end{cases} &
    \nullaof{S} \liff {}& \myfalse \\
  \derivof a {r_1 + r_2} ={} & \derivof a {r_1} \cup \derivof a {r_2} &
    \nullaof{r_1 + r_2} \liff{}& \nullaof{r_1} \lor \nullaof{r_2} \\
  \derivof a {r_1 \concat r_2} ={} & \begin{cases}
    \alpha_a(r_1 \concat r_2) \cup \derivof a {r_2} & \text{if }\nullaof{r_1}\\ 
    \alpha_a(r_1 \concat r_2) & \text{otherwise}
  \end{cases}\hspace*{5mm} &
    \nullaof{r_1 \concat r_2} \liff {} & \nullaof{r_1} \land \nullaof{r_2} \\
  &\rlap{$\text{where } \alpha_a(r_1 \concat r_2) = \{\dertuple{r \concat r_2} {\mathit{tst}} {\mathit{up}} \mid \dertuple r {\mathit{tst}}{\mathit{up}} \in \derivof a {r_1}\}$}\\
  \derivof a {r^*} ={} & \{\dertuple{r'\concat r^*} {\mathit{tst}} {\mathit{up}} \mid \dertuple{r'}{\mathit{tst}}{\mathit{up}} \in \derivof a r\} &
    \nullaof{r^*} \liff {} & \mytrue \\
  \derivof a {\capgroup{S}_m} = {}& \{\dertuple r \bot {r_m} \mid \dertuple r \bot \bot \in \derivof a S\} &
    \nullaof{\capgroup{S}_m} \liff {} & \myfalse \\
  \derivof a {\backref m} = {}& \{\dertuple \epsilon {r_m} \bot\} &
    \nullaof{\backref m} \liff {} & \myfalse
\end{align*}
\end{minipage}
}
\caption{Antimirov derivatives for \rewbs.  The
  predicate $\nullaof{\redef}$ outputs $\mytrue$ iff $\epsilon \in
  \dwlang\lang(\redef)$.}
\label{fig:antimirov}
\end{figure}
}

\vspace{-0.0mm}
\section{Matching of Regexes with Backreferences}\label{sec:regexes}
\vspace{-0.0mm}

In this chapter, we define regexes with backreferences and show how they can be
converted into NRAs using a~construction based on Antimirov derivatives.
We then establish the complexity of regex matching with DRSAs constructed from
these regexes.

\vspace{-0.0mm}
\subsection{Regexes with Backreferences}\label{sec:regexes-to-ras}
\vspace{-0.0mm}

\paragraph{Syntax.}
A \emph{regex with backreferences} (\rewb) $\redef$ over alphabet $\Sigma$ is defined
inductively according to the following grammar:
\begin{align*}
  \redef &::= \epsilon ~\mid~ S ~\mid~ \redef + \redef ~\mid~ \redef\cdot \redef ~\mid~ \redef^* ~\mid~ \capgroup{S}_m ~\mid~ \backref{m}
\end{align*}
where $S \subseteq \Sigma$, and $m \in \nat$. 
In addition to the standard regex syntax, the definition introduces
\begin{inparaenum}[(i)]
  \item  a~\emph{capture group} $\capgroup{S}_m$ (explicitly indexed by~$m$),
    capturing a~single symbol from the set~$S$, and
  \item  a~\emph{backreference} $\backref{m}$ to the $m$-th capture group.
\end{inparaenum}
We use~$\regex$ for the set of all \rewbs.
Note that compared to the languages that can be specified using, e.g.,
\emph{Perl Compatible Regular Expressions} (PCREs)~\cite{pcre2}, our definition
does not enable expressing languages where capture groups have unbounded
length, e.g., the language defined by the PCRE
\texttt{/\textasciicircum(.*)\textbackslash{}1\$/}, which matches strings of
the form $ww$. 
Capture groups with bounded length (but longer than 1) can be expressed using \rewbs by 
splitting the longer capture groups into single-letter ones, e.g., PCRE \texttt{/(...)\textbackslash{}1/} can be 
expressed as \texttt{/(.)(.)(.)\textbackslash{}1\textbackslash{}2\textbackslash{}3/}.
%

\paragraph{Semantics.}
We will define the semantics of \rewb~$\redef$ in two phases: 1) in terms of \emph{annotated words}, with marked parts matched by capture-groups and ``pointers'' to capture groups in place of backreferences; and 2) normal words obtained by interpreting the annotations. 

In phase 1, subwords matching capture groups are annotated by the group's index $m$ and the keyword $\mathtt{in}$, and backreference annotations consist of the index of a capture group and the keyword $\mathtt{ref}$. The language of the annotated words $\reflang$ is hence a language over the alphabet $\Sigmaref = \Sigma \cup \{ \symin{s}{m}, \symref{m} \mid s\in\Sigma, 0 \leq m \leq k_\redef \}$, with $k_\redef$ being the maximum index of a~capture group or a backreference in~$\redef$, defined as follows:
%
%
\begin{center}
\begin{minipage}{0.48\textwidth}
  \[\begin{aligned}
  \reflangof \epsilon &= \{ \epsilon \} \\
  \reflangof S &= S \\
  \reflangof{\redef_1 + \redef_2} &= \reflangof{\redef_1} \cup \reflangof{\redef_2} \\
  \reflangof{\redef_1 \cdot \redef_2} &= \reflangof{\redef_1} \cdot \reflangof{\redef_2}
  \end{aligned}\]
\end{minipage}\hspace{5mm}%
  \begin{minipage}{0.48\textwidth}
  \[\begin{aligned}
  \reflangof{\redef_1^*} &= (\reflangof{\redef_1})^* \\
  \reflangof {\capgroup{S}_m} &= \{ \symin{s}{m} \mid s \in S \} \\
  \reflangof{\backref{m}} &= \{ \symref{m} \}
  \end{aligned}\]
\end{minipage}
\end{center}
Intuitively, a backreference should match the character that the corresponding capture group matched last in the prefix left of the backreference.  
Formally, for the $m$-th capture group and a~prefix~$w$, 
it is the character $\capg(w, m) = c$ 
if it is possible to write $w$ as $w_p\cdot\symin{c}{m}\cdot w_s$ 
where $w_p \in \Sigmaref^*$ and $w_s \in \Sigmaref^*$ does not contain \emph{any occurrence} of a~symbol~$\symin{a}{m}$ for arbitrary $a\in\Sigma$ (hence this split of the prefix $w$ indeed marks the last match of the capture group in it). 
Otherwise, if $w$ has no occurrence of an annotated capture group match $\symin{a}{m}$, then $\capg(w, m) = \bot$.

In order to obtain a~language over the alphabet~$\Sigma$ from $\reflang$, we define
\emph{annotation interpretation}, denoted as~$\bproj$, which 
replaces the backreference annotations with the captured characters and 
removes the capture group annotations.
Formally,
given a~word $w\in\Sigmaref^*$ and an index $1 \leq i \leq |w|$, $\bproj$ is
defined as 
\begin{equation*}
\bproj(w, i) = 
\begin{cases}
  \capg(w[1:i-1], m) & \text{ if } w[i] = \symref{m}, \\
  c & \text{ if } w[i] = \symin{c}{m}, \\
  w[i] & \text{ otherwise,} 
\end{cases}
\end{equation*}
where $w[1:i] = w_1 \ldots w_i$ for $w = w_1 \ldots w_{|w|}$.
We lift the definition to a word as $\bproj(w) = \{ w' \mid w' = \bproj(w, 1) \cdots \bproj(w, |w|), w' \text{ does not contain }\bot \}$. Notice that removing the words containing $\bot$ removes exactly cases where a backreference occurs before the corresponding capture group is matched.
The \emph{language} of a~\rewb $\redef$ is then defined as $\langof{\redef} =
\bigcup_{w \in \reflang(\redef)} \bproj(w)$.
Since our automaton model works with data words and the semantics of \rewbs are defined as languages over $\Sigma$, we define 
the extension of words/languages over $\Sigma$ to data words. For that we consider some \emph{fixed} linear ordering on $\Sigma$, meaning 
that there is an injective mapping $\dword\colon \Sigma \to \nat$. Then, for a word $w = a_1\cdots a_n \in\Sigma^*$, we define $\dwlang{w} = \pair{a_1}{\dword(a_1)}\cdots \pair{a_n}{\dword(a_n)}$. 
We lift the definition to languages $\dwlang \lang$ in the usual way.

\paragraph{Antimirov Derivatives.}

We can translate \rewbs to RAs using a~construction based on
\emph{Antimirov derivatives}~\cite{Antimirov96}.
A~\emph{partial derivative} is a~function
$\deriv\colon\Sigma \times \regex \to 2^{\regex \times \natbot \times \natbot}$
(we use $\natbot$ to denote the set $\nat \cup \{\bot\}$)
such that $\derivof a \redef$ gives us all possible \rewbs (together with
potential information about writing or testing a~register) obtained
from~$\redef$ by trying to match a~string with leading symbol~$a \in \Sigma$.
The definition of the derivatives is in \cref{fig:antimirov}.
In the definition, we also use the predicate $\nullaof \redef$, which denotes
that~$\redef$ is \emph{nullable}, i.e., it accepts the empty string.
We note that our class of \rewbs allows this relatively simple extension of
Antimirov derivatives; if one considered unbounded capture groups, the
derivatives would get more complex.



\figAntimirov  

\paragraph{NRA Construction.}

We use the partial derivatives defined above to transform a~regex~$\redef$ into
an~NRA
\begin{changebar}
in the function $\regexra$.
\end{changebar}
Let~$k_\redef$ be the maximum index of a~capture group or a~backreference
occurring in~$\redef$.
Then we define the NRA 
$\regexra(\redef) = (Q, \{r_0, \ldots, r_{k_\redef}\}, \Delta, \{\redef\}, F)$
where
\begin{itemize}
  \item  the set of states~$Q$ contains all regexes that are reachable
    from~$\redef$ by the partial derivative construction, i.e., $Q = \mu
    Z\colon \bigcup\{q' \mid \dertuple{q'}{\cdot}{\cdot} \in \derivof
    a q, q \in Z, a \in \Sigma\}  \cup \{\redef\} \cup Z$,
  \item  $\Delta$ is the smallest set such that if
    $\dertuple{q'}{t}{u} \in \derivof a q$ for
    $a \in \Sigma$, then $\trans{q}{a}{\guardeq, \emptyset,
    \update}{q'} \in \Delta$ where
    \begin{itemize}
      \item  $\guardeq = \{r_i\}$ if $u = i \neq \bot$, else $\guardeq = \emptyset$ and
      \item  $\update = \{r_i \mapsto r_i \mid 0 \leq i \leq k_\redef, i \neq
        j\} \cup \{r_j \mapsto \inp\}$ if $u = j \neq \bot$, else
             $\update = \{r_i \mapsto r_i \mid 0 \leq i \leq k_\redef\}$, and
    \end{itemize}

  \item  $F = \{ q \in Q \mid \nullaof q\}$.
\end{itemize}
\begin{changebar}
Note that the number of used registers matches the number of capture
groups/backreferences in~$\redef$.
\end{changebar}
The argument that the number of states of~$\regexra(\redef)$ is finite is
similar as the one in~\cite{Antimirov96};
\begin{changebar}
more precisely, the number of states of the output NRA is in $\bigO(|\redef|)$.
\end{changebar}
Then, using the properties of Antimirov derivatives 
and the construction described above, we have the following theorem.

\begin{theorem}
  For a~regex $\redef$ we have $\langof{\regexra(\redef)} = \dwlang{\semof{\redef}}$.
\end{theorem}



%
%


\vspace{-0.0mm}
\subsection{Regex Matching}\label{sec:label}
\vspace{-0.0mm}

In order to match a~word $w\in\Sigma^*$ w.r.t.\ a given \rewb $\redef$, we first 
construct an RA $\aut$ corresponding to $\redef$ (using $\regexra$). Then, after register localization, we use \cref{alg:det} to determinize $\aut$.\footnote{
Regarding the determinizability of RAs obtained from \rewbs, from \cref{thm:nra1-to-drsa} and the properties of the RA construction 
we have that we are able to process arbitrary \rewbs containing a single capture group (but possibly multiple backreferences).}
If the algorithm does not return $\bot$, we have a~DRSA~$\aut'$ representing~$\redef$.
When testing regex membership, we check that $\dwlang{w} \in \langof{\aut'}$ using an algorithm
tracking configurations obtained during reading of~$\dwlang{w}$. 
The following lemma establishing the complexity of this algorithm.

\begin{lemma}\label{lem:memb-compl}
  Let $w$ be a data word and $\aut = (Q, \regs, \Delta, I, F)$ be a DRSA over a finite data domain $\datadom$. 
  Checking that $w\in\langof{\aut}$ can be done in time $\bigO(|\datadom|\cdot|w|\cdot|\regs|^2)$.
\end{lemma}
%
\begin{proof}[Proof Sketch.]
  Consider a data word $w$ and a DRSA $\aut = (Q, \regs, \Delta, I, F)$. 
  Further let $\mathcal{D} = \{ v_i \mid v_1\dots v_n = \datadom[w] \}$ (note that 
  $|\mathcal{D}|\leq |\datadom|$). 
  In this proof and the following text we assume a unit cost of comparison between elements of $\datadom$.
  The membership checking algorithm tracks configurations 
  of the form $(q, r)$ where $q\in Q$ and $r\colon \regs \to 2^{\mathcal{D}}$.
  We assume that set-register values are stored in a~sparse-set data structure with constant insertion and membership and with the union of two 
  sets linear to the size of one of the two sets (elements of one sets are inserted to the other, creation of a~new sparse-set is not needed as the number of registers is fixed). 
  When constructing a~next-state configuration, we first need to 
  check transition guards to get a successor transition. Here, we assume that the successor 
  transitions are kept in a~BDD-like structure. Hence, for each register, it suffices 
  to check whether the input symbol is in the register, and then use this bitvector to find an appropriate 
  transition (recall the automaton is deterministic). This can be done in $\bigO(|\regs|)$.
  Then, we need to compute the transition update, which can be done in $\bigO(|\datadom|\cdot|\regs|^2)$.
  Therefore the overall complexity is $\bigO(|\datadom|\cdot|w|\cdot|\regs|^2)$. 
\end{proof}

Note that if the data domain is infinite, testing $w \in \langof \aut$ can be
done in time $\bigO(|w|^2\cdot|\regs|^2)$. 
Directly from \cref{lem:memb-compl} we have the following theorem.

\begin{theorem}\label{thm:linearmatching}
  For a fixed \rewb~$\redef$ over a fixed alphabet, the worst-case time
  complexity of $\drsa$-based regex matching is \emph{linear} to the length
  of the input word (provided \cref{alg:det} finishes on the RA obtained
  from~$\redef$).
\end{theorem}

\subsection{Finer Complexity Analysis}
The constant character cost complexity of matching in \cref{thm:linearmatching} hides a quadratic dependency on the number of registers $|\regs|$ and the linear dependency on the size of the data domain $|\datadom|$. 
When the DRSA is obtained from a \rewb of the size $m$ and with $r$ backreferences by
\begin{inparaenum}[(i)]
  \item  our derivative-based construction,
  \item  register localization, and
  \item  our determinization (\cref{alg:det}),
\end{inparaenum}
then we have $|\regs| \in \bigO(m \cdot r)$. 
The $|\datadom|$ particularly can be large in practice and problematic. 

We will argue that the factor $|\datadom|$, coming from the need to unite and copy registers, can be avoided if 
the automaton never copies registers. 
Copying is thus never done and the cost of uniting registers is covered by the maximum cost of adding elements to the registers. 
The quadratic dependence on $|\regs| = rm$ also decreases to linear.  
Formally, we define an RSA as 
\emph{copy-free} if it has no transition where it assigns the same non-singleton register to two registers. Practical relevance of this class is witnessed by the fact that in our experiments, the copy-free DRSA were constructed in 488 cases out of 1,335 (our main benchmark set of single-letter regexes that were found ReDoS prone).  
%
\begin{theorem}
For a copy-free DRSA,
the complexity of testing membership is $\bigO(|\regs|  \cdot |w|)$.
\end{theorem}
\begin{proof}[Proof Sketch.]
Let us measure the age of a data value by when it was read form the input, and let the age of a register be the age of its oldest value. 
Let us union two registers by inserting values from the younger one to the older register. With sparse-sets, the complexity would be linear to the size of the younger register, as every insertion is constant-time.
That is, within one union operation, each element in the younger register contributes $\bigO(1)$.
Note that since the automaton is copy-free, a value taken from a position in the input data word can be this way (within the union operation) inserted into older registers at most $|\regs|$ times (after $|\regs|$ unions, the value must be in the oldest register). Hence, inserting the value from a particular position in the input word within unions may take at most $\bigO(|\regs|)$ time, and we have $|w|$ positions, which amounts to $\bigO(|\regs| \cdot |w|)$ for uniting registers overall.
Initialising the sparse-sets would take $\bigO(|\regs|\cdot|\datadom|)$ time, but $|\datadom|$ can be w.l.o.g. capped at $|w|$. 
\end{proof}

Note that also testing $w \in \langof \aut$ with an infinite $\datadom$ with copy-free DRSA would be in $\bigO(|\regs| \cdot |w|)$.
\newcommand{
\begin{table}[t]
  \caption{
  \begin{changebar}
  A frequency table of run times and statistics for \tool and the other matchers on the
  \end{changebar}
  1,246 supported regexes with the \rengar-generated inputs.
  The column \textbf{TOs} denotes timeout (\ensuremath{>}~100\,s),
  column~\textbf{\ensuremath{\boldsymbol{>}}~1\,s} is the sum of the values in all columns over 1\,s, and
  the \textbf{errors} column shows the numbers of unsupported regexes.
  The numbers in the right-hand part of the table are for successful runs only and are in seconds.
  }
  \label{table:nodet}
  \centering
  \vspace{-3mm}
  \resizebox{\textwidth}{!}{
  \begin{booktabs}{colspec={lrrrrrrrr|rrr},rowsep=0pt,column{8}={Gray!20},row{2}={GreenYellow},row{1}={font=\bfseries,halign=c}}
\toprule
  tool     & \ensuremath{<} 1\,s & 1--5\,s & 5--10\,s &   10--50\,s &   50--100\,s &   TOs &  \ensuremath{\boldsymbol{>}} 1\,s &  errors & mean &   median &   std.\ dev \\
\midrule
 \tool      &   1,244 &        2 &         0 &          0 &           0 &              0 & 2   & \SetCell{c}  ---& 0.299 & 0.16     & 0.303 \\
 \grep      &   1,131 &        6 &         0 &         16 &           4 &              15 & 41   &    74         & 0.664 & 0.01     & 5.973 \\
 \re        &   1,139 &       36 &        14 &         11 &           3 &              43 & 107   &     0        & 0.532 & 0.01     & 4.416 \\
 \pcre      &   1,001 &       14 &         4 &          4 &           0 &               3 & 42   &   203         & 0.144 & 0.01     & 1.219 \\
 \js        &     660 &        8 &         1 &         11 &           1 &              33 & 54   &   532         & 0.531 & 0.06     & 3.702 \\
 \java      &     595 &       18 &         5 &          7 &           5 &              19 & 54   &   597         & 0.996 & 0.05     & 6.438 \\
 \nettool   &     629 &       24 &        13 &          9 &           3 &              40 & 89   &   528         & 0.813 & 0.04     & 5.165 \\
\bottomrule
\end{booktabs}

  }
\end{table}
}[0]{
\begin{table}[t]
  \caption{
  \begin{changebar}
  A frequency table of run times and statistics for \tool and the other matchers on the
  \end{changebar}
  1,246 supported regexes with the \rengar-generated inputs.
  The column \textbf{TOs} denotes timeout (\ensuremath{>}~100\,s),
  column~\textbf{\ensuremath{\boldsymbol{>}}~1\,s} is the sum of the values in all columns over 1\,s, and
  the \textbf{errors} column shows the numbers of unsupported regexes.
  The numbers in the right-hand part of the table are for successful runs only and are in seconds.
  }
  \label{table:nodet}
  \centering
  \vspace{-3mm}
  \resizebox{\textwidth}{!}{
  \begin{booktabs}{colspec={lrrrrrrrr|rrr},rowsep=0pt,column{8}={Gray!20},row{2}={GreenYellow},row{1}={font=\bfseries,halign=c}}
\toprule
  tool     & \ensuremath{<} 1\,s & 1--5\,s & 5--10\,s &   10--50\,s &   50--100\,s &   TOs &  \ensuremath{\boldsymbol{>}} 1\,s &  errors & mean &   median &   std.\ dev \\
\midrule
 \tool      &   1,244 &        2 &         0 &          0 &           0 &              0 & 2   & \SetCell{c}  ---& 0.299 & 0.16     & 0.303 \\
 \grep      &   1,131 &        6 &         0 &         16 &           4 &              15 & 41   &    74         & 0.664 & 0.01     & 5.973 \\
 \re        &   1,139 &       36 &        14 &         11 &           3 &              43 & 107   &     0        & 0.532 & 0.01     & 4.416 \\
 \pcre      &   1,001 &       14 &         4 &          4 &           0 &               3 & 42   &   203         & 0.144 & 0.01     & 1.219 \\
 \js        &     660 &        8 &         1 &         11 &           1 &              33 & 54   &   532         & 0.531 & 0.06     & 3.702 \\
 \java      &     595 &       18 &         5 &          7 &           5 &              19 & 54   &   597         & 0.996 & 0.05     & 6.438 \\
 \nettool   &     629 &       24 &        13 &          9 &           3 &              40 & 89   &   528         & 0.813 & 0.04     & 5.165 \\
\bottomrule
\end{booktabs}

  }
\end{table}
}

\newcommand{
\begin{wraptable}[11]{r}{0.4\textwidth}
  \vspace*{-3mm}
  \caption{Statistics of run times (in seconds) on finished instances}
  \label{tab:stats}
  \centering
  \vspace*{-4mm}
  \hspace*{-1.6mm}
  \scalebox{0.9}{
  \input{table-stats.tex}
  }
\end{wraptable}
}[0]{
\begin{wraptable}[11]{r}{0.4\textwidth}
  \vspace*{-3mm}
  \caption{Statistics of run times (in seconds) on finished instances}
  \label{tab:stats}
  \centering
  \vspace*{-4mm}
  \hspace*{-1.6mm}
  \scalebox{0.9}{
  \input{table-stats.tex}
  }
\end{wraptable}
}

\newcommand{\figScatter}[0]{
\begin{figure}[t!]%
\centering
\begin{subfigure}{0.32\textwidth}
  \centering
  \includegraphics[width=\linewidth]{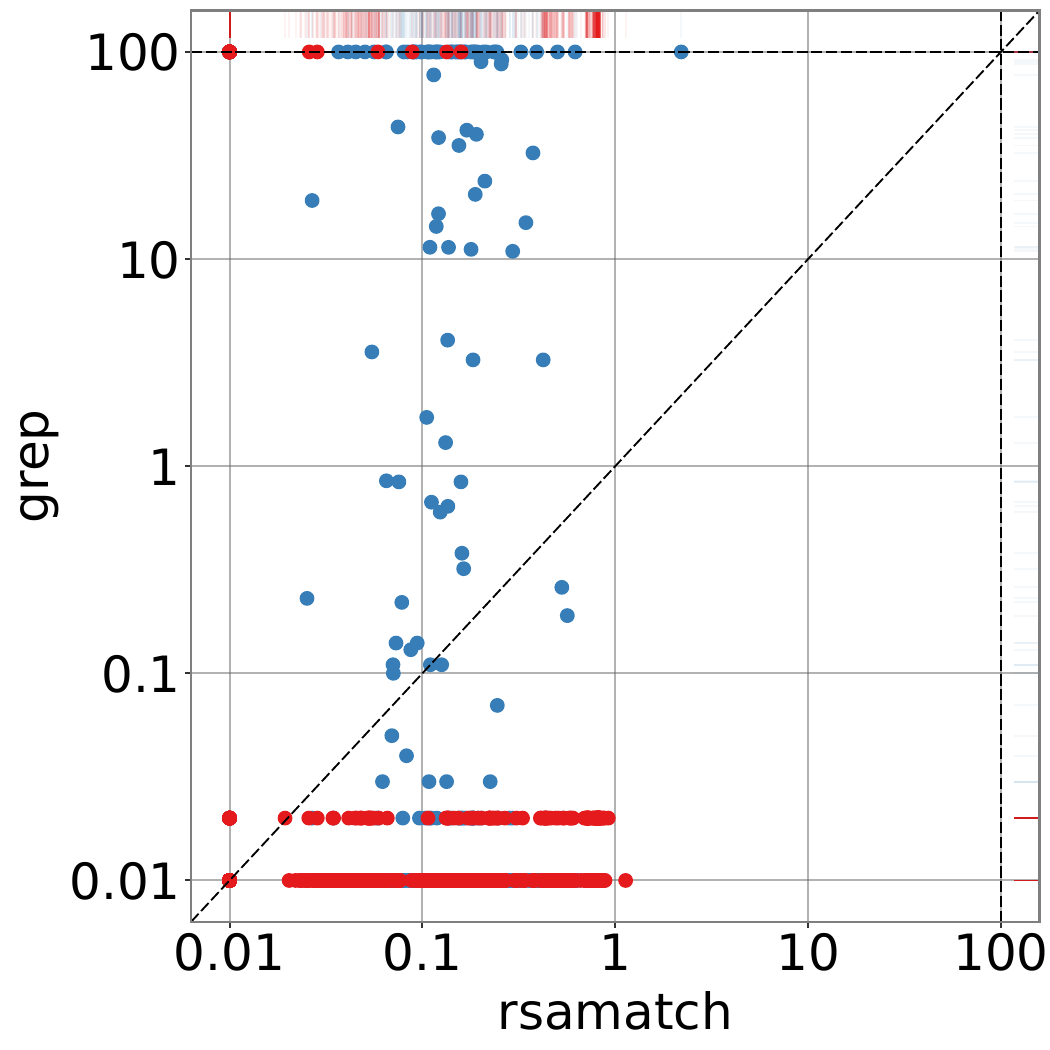}
  \caption{\tool vs. \grep}
  \label{fig:scatter_grep}
\end{subfigure}
\begin{subfigure}{0.32\textwidth}
    \centering
    \includegraphics[width=\linewidth]{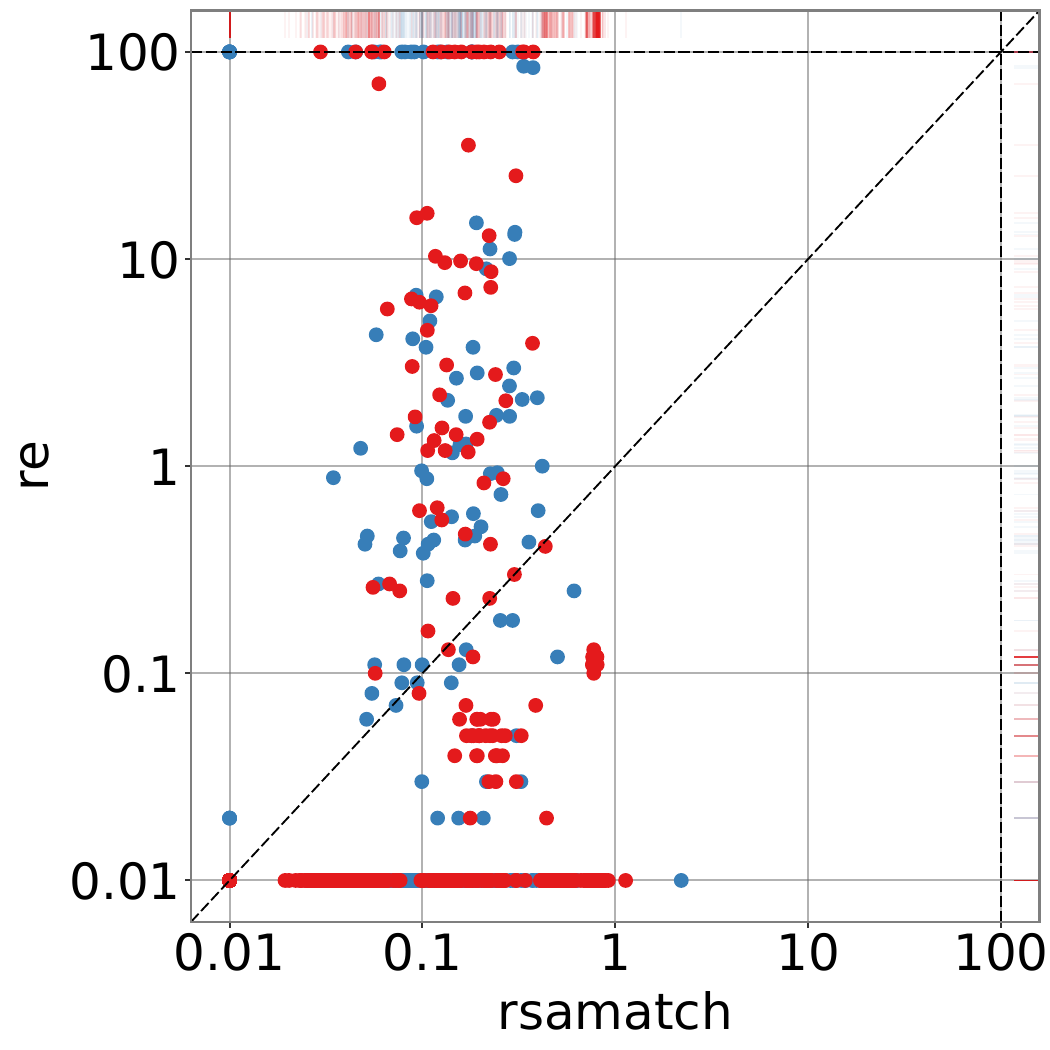}
    \caption{\tool vs. \re}
    \label{fig:scatter_re}
\end{subfigure}
\begin{subfigure}{0.32\textwidth}
    \centering
    \includegraphics[width=\linewidth]{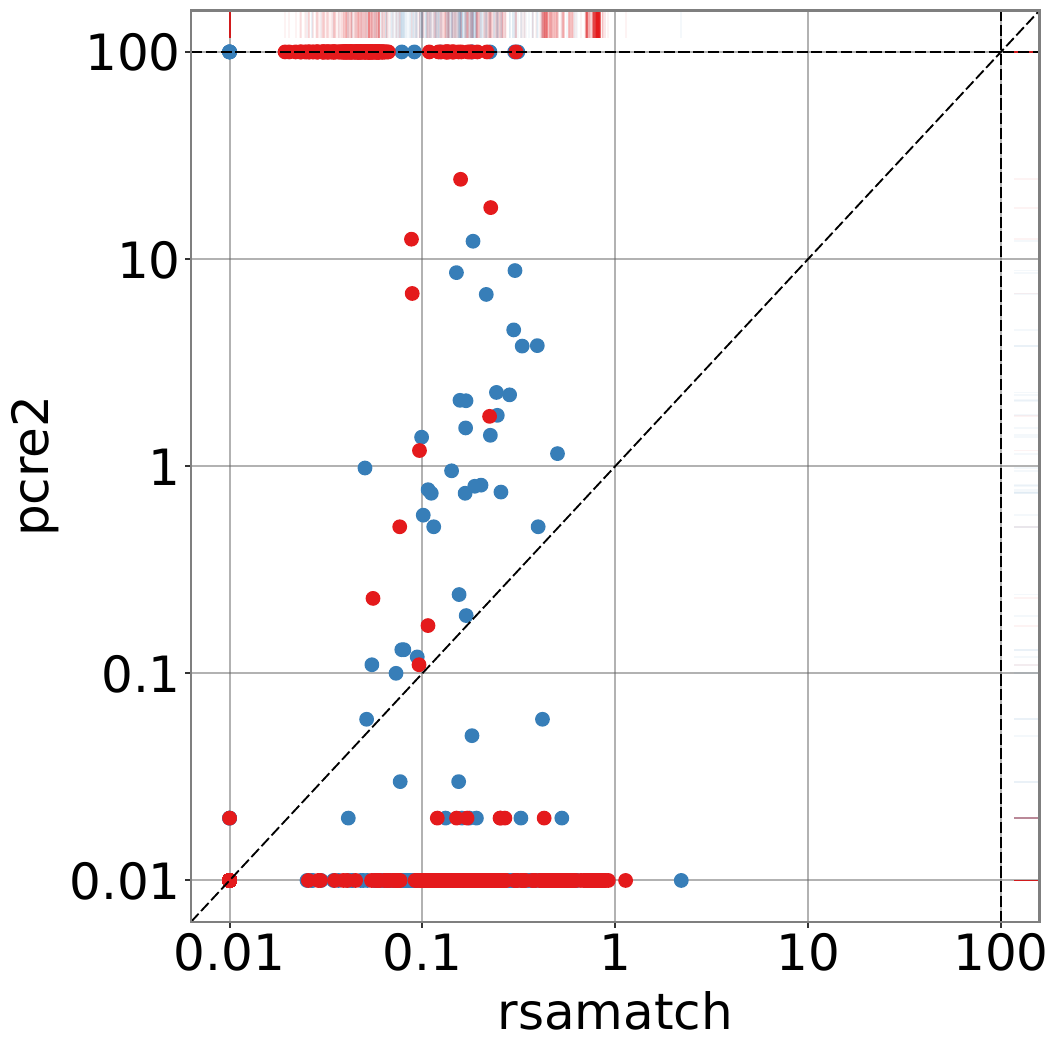}
    \caption{\tool vs. \pcre}
    \label{fig:scatter_pcre2}
\end{subfigure}

\vspace*{-3mm}
\caption{
  Comparison of \tool with \grep, \re, and \pcre. 
  Times are in seconds, axes are logarithmic.
  Dashed lines represent timeouts (100\,s).
    Colours distinguish datasets:
    \RGBcircle{55,129,192}\,Lingua Franca, 
    \RGBcircle{223,38,37}\,\rengar.
}
\label{fig:scatter}
\end{figure}
}

\vspace{-0.0mm}
\section{Experimental Evaluation}\label{sec:experiments}
\vspace{-0.0mm}

We have implemented a prototype RSA-based regex matcher \tool~\cite{rsamatch} in Python and
evaluated its performance against other state-of-the-art regex matchers
on realistic ReDoS attack vectors involving backreferences.
Our experiments are designed to evaluate the following hypotheses:
\begin{enumerate}
\item
Existing regexes matchers are vulnerable when matching regexes with backreferences.
\item
A large portion of regexes with backreferences contain only single-letter backreferences.
\item
Most of the regexes with single-letter backreferences that are vulnerable to
ReDoS can be converted to DRSAs and matched with a~predictable performance.
\end{enumerate}

%
%

\vspace{-0.0mm}
\subsection{Experimental Setup}
\vspace{-0.0mm}

\paragraph{Regex Matchers and Data Sets.}

We compared \tool against a representative sample of state-of-the-art matchers,
namely, \grep~\cite{grep}, \pcre~\cite{pcre2}, Python RE~\cite{python_re}
(denoted as \re), standard library matchers of 
JavaScript~\cite{js_v8_regexp}, Java~\cite{java}, and .NET~\cite{dotnet}
(denoted as \js, \java, and \nettool respectively).
These are mostly backtracking-based, possibly in combination with other techniques, but they all mainly use their backtracking core when confronted with backreferences.
We note that matchers for some scripting languages (Python's \re and \pcre) are
actually implemented in~C and not in the scripting language itself.
Some very fast matchers such as \retwo~\cite{re2} and especially
\hyperscan~\cite{WangHCPLHZ19}, based on automata and other techniques, are
missing in the comparison since they do not support
backreferences~\cite{hyperscan-unsupported-backref,re2-add-backref}. 
All experiments were run on a~Ubuntu GNU/Linux machine with the AMD EPYC 9124
CPU with 16 cores at 3.0\,GHz and 94\,GiB of memory (we note that the memory
consumption of all matchers was far below the limits imposed by the hardware).

As our regexes, we started with 3,252 real-world regexes with backreferences
obtained from the comprehensive data set Lingua Franca~\cite{regexLinguaFranca}
and 8,794 regexes from the benchmarks of the ReDoS generator \rengar~\cite{rengar},
for a~total of 12,046 regexes.\footnote{
\begin{changebar}
In the Lingua Franca data set, 52\,\% of the regexes had capture groups of
fixed lengths and 57\,\% had capture groups of bounded lengths (the
fixed-length are a~subset of the bounded-length).
In the \rengar data set, 22\,\% regexes had capture groups of fixed lengths and
24\,\% had capture groups of bounded lengths.
In total, 30\,\% regexes had capture groups of fixed lengths and 33\,\% had capture groups of bounded lengths.
\end{changebar}
}
From these, we picked regexes with single-letter backreferences because
although multi-letter capture groups can be rewritten as single-letter ones
(for a~bounded length, cf.\ \cref{sec:regexes}), the determinization often
fails for them (\cref{alg:det} returns~$\bot$ due to overapproximation), which
gave us 3,299 regexes (27\,\%).

\paragraph{ReDoS Inputs.}
In general, there are several flavours of ReDoS considered in the literature.
In our evaluation, we consider the scenario where there is a~regex (known to
the attacker) running on a~server (or a~network intrusion detection system or
a~packet filter etc.) and the attacker provides an input that
exhibits a~costly matching, which is, to the best of our knowledge, the most
realistic setting, since it has already led to several real-world
ReDoSes~\cite{stackoutage,expressjsoutage}.\footnote{Some of the other settings
considered in the literature is, e.g., when the attacker can provide both the
regex and the input.
While such a~situation may occur in the real-world, it is less common and much
easier to exploit when backreferences are enabled due to the \np-hardness of
their matching~\cite{Aho90}.
}
The input of every experiment is a~pair of a~regex and an
input string, which was created using the ReDoS generator
\rengar~\cite{rengar}.
\rengar was able to provide an attack vector for 1,335 regexes from the data
set~(40\,\%).
Since in the considered ReDoS setting, one regex is used to match a~large
quantity of user inputs, we perform the determinization in \tool offline and
then run the matching with the precomputed DRSA.
%
Out of the 1,335 regexes with ReDoS inputs, our determinization algorithm
returned~$\bot$ on 43 regexes (3\,\%) and timed out (because of a~regex
\begin{changebar}
complexity other than backreferences; e.g., most of the regexes start with an
implicit \texttt{".*"}, which already creates a nondeterministic choice at
the beginning) on 41 of them (3\,\%).
\end{changebar}
The average runtime of the determinization algorithm on successful instances
was 1.150\,s, the median was 0.260\,s, and the standard deviation was 3.064\,s.
Since the determinization needs to be performed only once for many runs of the matching
algorithm, the cost of determinization can be amortized.
%
\rengar generated strings of the length between 55 and 90,124 characters, with
the average of 34,521 characters.
We set the timeout to quite generous 100\,s.

\cbstart
\paragraph{Determinizability.}
In the previous paragraph, we analyzed on how many of the ReDoS-prone regexes
we managed to construct a corresponding $\drsa$.
Let us now discuss on how many of all input regexes (12,046) we managed to
build the corresponding $\drsa$.
From the 12,046 regexes, 2,944 contained unsupported syntax (we use Python
frontend and its regex dialect), which gave us 9,102 regexes to start with.
Out of these, our algorithm outputs $\drsa$ on 3,138 of them (34.5\,\%). When we
restrict ourselves to the single-letter ones, these were 3,434, making our
algorithm successful on 91\,\% of them.
\cbend

\vspace{-0.0mm}
\subsection{Results}\label{sec:results}
\vspace{-0.0mm}

\begin{table}[t]
  \caption{
  \begin{changebar}
  A frequency table of run times and statistics for \tool and the other matchers on the
  \end{changebar}
  1,246 supported regexes with the \rengar-generated inputs.
  The column \textbf{TOs} denotes timeout (\ensuremath{>}~100\,s),
  column~\textbf{\ensuremath{\boldsymbol{>}}~1\,s} is the sum of the values in all columns over 1\,s, and
  the \textbf{errors} column shows the numbers of unsupported regexes.
  The numbers in the right-hand part of the table are for successful runs only and are in seconds.
  }
  \label{table:nodet}
  \centering
  \vspace{-3mm}
  \resizebox{\textwidth}{!}{
  
  }
\end{table}

\begin{changebar}
In \cref{table:nodet}, we show a~frequency table (the table form of
a~histogram) of run times of \tool and the other
\end{changebar}
matchers on the 1,246 supported input regexes.
We note that some of the regex matchers failed on a~considerable number of the
input regexes---this is due to different dialects of regexes used by the
matchers and should not be considered their failure (\tool uses Python's
dialect, which is why \re has 0~errors).
The use of different dialects means that the semantics of one regex differs
based on the matcher~\cite{regexLinguaFranca}; 
while this would be an issue if we were are cross-comparing the performance of the
matchers, in this study, we are focusing on ReDoS vulnerability, where the
concrete semantics of a~regex is not so important.

The results show that \tool's run time for the vast majority of benchmarks (all
except two) is below one second (the run times for the two hardest benchmarks
were 1.13 and 2.2 seconds).
The other matchers, on the other hand, indeed struggled on these attack vectors
considerably, often taking tens of seconds or even exceeding the 100\,s
timeout.
Considering that the normal matching time on such strings is counted in
milliseconds, they can indeed be considered manifestations of vulnerabilities,
even for the relatively simple class of regexes with single-letter
backreferences.
If we fix a~ReDoS threshold at~1\,s, the column \textbf{$\boldsymbol{>}$ 1\,s} counts the
number of successful ReDoS attacks.
Here, the second best matcher (after \tool) was \pcre with 42 attacks (although
it could not run on 203 benchmarks) and then \grep with 41 attacks (with
74 benchmarks excluded due to an error).


In the right-hand part of the table, we give statistics about the run times of
the matchers on the instances where they did not timeout (or fail):
\begin{changebar}
the arithmetic mean, the median, and the standard deviation.
\end{changebar}
Although \tool is a~prototype written in Python, its average time is comparable
to the other matchers (also taking into account the fact that they timed out on
a~number of inputs).
The median run time of \tool is higher than for the other matchers, but it should
be significantly decreased if reimplemented in a~more efficient programming
language, such as C/C++.
The last column of the table also exhibits the robustness and predictability of
the performance of \tool, since the standard deviation of the run times is
(despite \tool's prototype nature) an order of magnitude lower than for the
other matchers.

We include a more detailed comparison with the best competing tools, \grep, \re, and
\pcre, in the form of scatter plots shown in \cref{fig:scatter}.
One can clearly see the contrast in performance predictability, with our
matcher almost always staying below the 1\,s mark, on many being
instances significantly faster than the competitors.
Note that \grep either managed to solve the matching problem almost instantly
or timed out.

\figScatter

\vspace{-0.0mm}
\section{Related Work}\label{sec:related}
\vspace{-0.0mm}

\paragraph{Regex Matching.}

The literature on efficiently matching regexes is rich, with works ranging from
matching feature-rich regexes in mainstream languages~\cite{BarriereP24,VarataluVE25}, to
instruction-level optimized regex matching in \hyperscan~\cite{WangHCPLHZ19}, down to 
accelerating regex matching using FPGAs~\cite{CeskaHHKLMMSV19,MatousekMK18}.
Here we focus mainly on automata-based regex matching and regexes with backreferences.

Automata-based regex matching can be traced back to
Thompson~\cite{Thompson1968}, who introduced a~practical algorithm based on
on-demand NFA determinization with caching.
There have been several attempts of automata models for matching regexes with
backreferences, but they have generally failed expectations of the community.
Memory automata of Schmid~\cite{Schmid16} provide a~natural extension of finite
automata with a~register bank, each register able to store a~word, but do not
allow determinization. The language-theoretic aspects of regexes with backreferences 
were further studied in~\cite{FREYDENBERGER20191}.
Chida and Terauchi~\cite{ChidaT23} show that lookaheads increase the expressive
power of regexes with backreferences.
Becchi and Crowley~\cite{BecchiC2008} encode backreferences into an extended
model of NFAs using backtracking.
Nogami and Terauchi~\cite{NogamiT25} devise a~quadratic-time (w.r.t.\ the input
length) algorithm for matching a~certain class of regexes with backreferences,
which is incomparable to our class---they allow capture groups with unbounded
length, but only one capture group and one reference to it; our work, on the
other hand, allows multiple capture groups, multiple references to them, and
our constructed automata often process the input in linear time (with
a~quadratic time worst-case guarantee).
Namjoshi and Narlikar~\cite{NamjoshiN10} extend Thompson's
algorithm~\cite{Thompson1968} to a~model similar to the models
of~\cite{Schmid16,BecchiC2008} in the obvious way by tracking the set of all
possible configurations, which is inefficient (for each symbol, the number of
operations proportional to the (potentially exponential) size of the set of
configurations is required).
Varatalu \emph{et al.}~\cite{VarataluVE25} present a~derivative-based regex
matcher supporting many extensions of classical regexes.
They do not support backreferences, but it seems that our derivatives
(cf.~\cref{sec:regexes}) could be combined with it to extend its usability.
Moreover, there are also some proprietary \emph{ad hoc} solutions
that allow matching of regexes with backreferences~\cite{Nvidia21}, but
they are mostly incomplete and with no guarantees.

In our work, we were inspired by \emph{counting-set automata} introduced by Turoňová \emph{et
al.}~\cite{TuronovaHLSVV20}, which use \emph{sets of counter values} to compactly
represent configurations of \emph{counting automata}~\cite{HolikLSTVV19} (a~restricted
version of counter automata~\cite{Minsky61} with a~bound on the value of
counters for compact representation of finite automata), to obtain
a~deterministic model for efficient matching of regular expressions with
repetitions.

\paragraph{ReDoS}
While initially neglected, the potential of exploiting regex matching for
denial of service was popularized by the successful attacks on
StackOverflow~\cite{stackoutage} and
\texttt{Express.js}~\cite{expressjsoutage}.
While many industrial matchers handle basic regexes reasonably
well~\cite{BarriereP24,VarataluVE25,re2,grep,WangHCPLHZ19}, using extensions can
make the matchers vulnerable to ReDoS; in~\cite{TuronovaHHLVV22}, this was
demonstrated even for highly optimized matchers such as \retwo~\cite{re2} or
\hyperscan~\cite{WangHCPLHZ19} (these two have backreferences explicitly
disabled~\cite{hyperscan-unsupported-backref,re2-add-backref}).
Over the years, there have been many works dealing with the analysis of regexes for
ReDoS vulnerability, sometimes including ReDoS generation, and proposals for
how to avoid this attack vector (see, e.g., the works
in~\cite{HassanALDS23,BarlasDD22,Davis20a,rengar,regexLinguaFranca,LiuZM21,SuHLCG24,ParoliniM22,KirrageRT13,WustholzOHD17,WeidemanMBW16,Shen000ML18,McLaughlinPSKV22}
or the systematic literature review in~\cite{BhuiyanCB0S25}).
On a~more theoretical level,
Terauchi~\cite{Terauchi25} studies conditions for ReDoS vulnerability of
regexes with backreferences and proposes a~technique for transforming a~class
of memory automata to invulnerable regexes.

\paragraph{Automata Models for Data Words.}

The literature on automata over infinite alphabets is rich, see, e.g., the
excellent survey by Segoufin~\cite{Segoufin06} and the paper by Neven \emph{et
al.}~\cite{NevenSV04}.
Register automata were introduced (under the name \emph{finite memory
automata}) by Kaminski and Francez in~\cite{KaminskiF94} and their basic
properties further studied by Sakamoto and Ikeda in~\cite{SakamotoI00}.
Demri and Lazić study in~\cite{DemriL09}
(non-)deterministic, universal, and alternating one-way and two-way register
automata, and their relation to the linear temporal logic with the \emph{freeze}
quantifier, which can store the current data value into a~register.
In particular, they show that $\ltlfreezeofn 1 {\ltlnext, \ltluntil}$, i.e., the
fragment of the logic with one freeze register and the \emph{next} ($\ltlnext$)
and \emph{until} ($\ltluntil$) temporal operators captures the class of
languages accepted by one-way alternating register automata with one register
($\araof 1$).
Figueira~\cite{Figueira12} introduces alternating register automata with one
register and two extra operations: $\guess$ and $\spread$ ($\araguessspreadshortof 1$).
Intuitively, $\guess$ and $\spread$ can be used for existential and universal quantification over future and past data values respectively.


Set-augmented finite automata, introduced in~\cite{benerjee-safa},
resemble $\rsa$s in a way that they allow to store a set of values in a register, 
however, with noticeable restrictions:
\begin{inparaenum}[(i)]
  \item set-augmented automata allow equality/disequality guard to be a singleton only, and
  \item the update function is restricted to storing the currently read data value to a 
    single register (or no update is applied as a second option). It is not possible, e.g., 
    to union the contents of two set-registers (which is a~crucial operation in
    our determinization algorithm), or to empty a register.
\end{inparaenum}
Due to these restrictions, $\rsa$s have greater expression power compared to set-augmented automata (e.g., the 
parametric language defined in \cite[Theorem 1]{benerjee-safa} can be accepted by $\rsaof{1}$ but not by 
a corresponding set-augmented automaton).

History-register automata, introduced in~\cite{tzevelekos-hra}, are a similar 
model to $\rsarem$ in a way that they also allow sets of values to be stored in registers.
The key difference is the assignment of values to registers. The only way how 
to change a value in a register is add/remove the \emph{currently processed} data value.
It is not possible e.g., to union the content of two registers (which is, as noted above, crucial in our determinization), which is possible 
in $\rsarem$ making our model more expressive. In particular, in \ifTR\cref{sec:app-hra} \else \cite{techrep} \fi
we show that $\rsarem$ are expressive at least as history-register automata and 
moreover, deterministic history-register automata are strictly less expressive 
than deterministic~$\rsarem$.

\vspace{-0.0mm}
\section{Conclusion and Future Work}\label{sec:label}
\vspace{-0.0mm}

We have introduced register set automata, a~class of automata over data words
providing an underlying formal model for efficient matching of
a~subclass of regexes with backreferences.
There are many challenges that we wish to address in the future:
\begin{inparaenum}[(i)]
  \item  improvement of our determinization algorithm to work on a~larger
    class of input $\nra$s,
  \item  explore other, more expressive, formal models that would
    allow deterministic automata models for a~large class of regexes with
    backreferences (e.g., the hard regex in
    footnote\footref{ftn:hard_regex} is beyond the power of~$\drsa$s),
  \item  develop efficient toolbox for working with $\rsa$s occurring in
    practice, and
  \cbstart
  \item  explore other approaches of how to efficiently deal with
    nondeterminism in practice, e.g., in a~similar way as considered
    in~\cite{KongYCGHM022} (which does not consider backreferences).
  \cbend
\end{inparaenum}


\vspace{-0.0mm}
\section*{Data Availability Statement}\label{sec:label}
\vspace{-0.0mm}

An environment with the tools and data used for the experimental evaluation in
the current study is available at~\cite{artifact}.

\newcommand{\ackVassal}[0]{
\noindent
\raisebox{-1pt}{\protect\includegraphics[height=8pt]{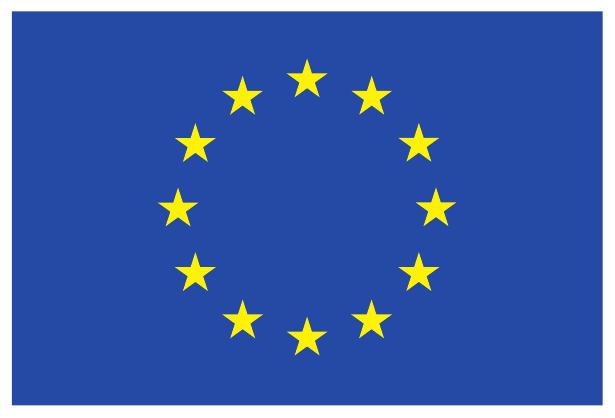}}
This work has been executed under the project VASSAL: ``Verification and Analysis for Safety and Security of Applications in Life'' funded by the European Union under Horizon Europe WIDERA Coordination and Support Action/Grant Agreement No.~101160022.\xspace
}

\begin{acks}
We thank the anonymous reviewers for their feedback that improved the quality of the paper.
This work was supported by
the Czech Science Foundation projects 26-22640S and 25-17934S and
the FIT BUT internal project FIT-S-26-9011.
\ackVassal
\end{acks}

\bibliographystyle{ACM-Reference-Format}
\bibliography{literature.bib}

\ifTR
\newpage

\appendix

\crefalias{section}{appendix}
\crefalias{subsection}{appendix}

\vspace{-0.0mm}
\section{Proofs for \cref{sec:properties}}\label{sec:label}
\vspace{-0.0mm}

\lemNraToRsa*

\begin{proof}
We transform every $\nra$ transition $\trans q a {\guardeq, \guardneq, \update} s$ into
the $\rsa$ transition $\trans q a {\guardin, \guardnotin, \update'} s$ such that
$\guardin = \guardeq$, $\guardnotin = \guardneq$, and for every register~$r_i$
and $\update(r_i) = x$,
$$\update'(r_i) = \begin{cases}
  \{x\} & \text{for } x \in \regs \cup \set{\inp} \\
  \emptyset & \text{for } x = \bot .
\end{cases}$$
Intuitively, every simple register of $\nra$ will be represented by a~set
register of $\rsa$ that will always hold the value of either an empty or
a~singleton set.
\end{proof}

\vspace{-0.0mm}
\subsection{Proof of \cref{thm:rsa-emptiness}}\label{sec:proof-rsa-emptiness}
\vspace{-0.0mm}

\thmRsaEmptiness*

The proof is done by showing interreducibility of $\rsa$ emptiness with
coverability in \emph{transfer Petri nets} (often used for modelling the
so-called \emph{broadcast protocols}), which is a~known $\fomega$-complete
problem~\cite{SchmitzS13,Schmitz17,SchmitzS12}.
We start with defining transfer Petri nets and then give the reductions.

\vspace{-0.0mm}
\subsubsection{Transfer Petri Nets}
\vspace{-0.0mm}

Intuitively, transfer Petri net is an extension of Petri nets where transitions can
\emph{transfer} all tokens from one place to another place at once.
They are closely related to \emph{broadcast protocols}~\cite{EsparzaFM99}.

Formally,
a~\emph{transfer Petri net} (TPN) is a~triple $\net = (P, T, M_0),$ s.t.\
$P$~is a~finite set of \emph{places},
$T$ is a~finite set of \emph{transitions}, and
$M_0\colon P \to \nat$ is an \emph{initial marking}.
The set of transitions~$T$ is such that $P \cap T = \emptyset$ and every transition $t \in T$ is of the form
$t = \tuple{\intrans, \outrans, \transfer}$ where $\intrans,\outrans\colon P \to
\nat$ define $t$'s \emph{input} and \emph{output places} respectively and
$\transfer\colon P \to P$ is a~(total) \emph{transfer function}.

A~\emph{marking} of~$\net$ is a~function $M\colon P \to \nat$ assigning
a~particular number of tokens to each state.
Given a~pair of markings~$M$ and~$M'$, we use $M \leq M'$ to denote that
for all $p \in P$ it holds that $M(p) \leq M'(p)$.
Moreover, we use $M + M'$ and $M - M'$ (for $M' \leq M$) to denote the
pointwise addition and subtraction of markings and we use $\unitof p$ to denote
the marking such that $\unitof p (p') = 1$ if $p = p'$ and $\unitof p (p') = 0$
otherwise, and $\vec 0$ to denote the marking $\set{p \mapsto 0 \mid p \in
P}$.
The \emph{identity} function (over an arbitrary set that is clear from the
context) is denoted as~$\identity$.
Given a~marking~$M$, a~transition~$t = \tuple{\intrans, \outrans, \transfer}$ is
\emph{enabled} if $\intrans \leq M$,
i.e., there is a~sufficient number of tokens in each of its input places.
We use $M \fire t M'$ to denote that
\begin{enumerate}
  \item  $t$ is enabled in~$M$ and
  \item  $M'$ is the marking such that for every $p\in P$ we have
        \begin{equation}
          M'(p) = \outrans(p) + \sum \set{M_{\mathit{aux}}(p') \mid \transfer(p')
          = p} \text{ where } M_{\mathit{aux}} = M - \intrans.
        \end{equation}
\end{enumerate}
That is, the successor marking~$M'$ is obtained by
\begin{inparaenum}[(i)]
  \item  removing $\intrans$ tokens from inputs of~$t$,
  \item  transferring tokens according to~$\transfer$, and
  \item  adding $\outrans$ tokens to $t$'s outputs.
\end{inparaenum}

We say that a~marking~$M$ is \emph{reachable} if there is a~(possibly empty)
sequence $t_1, t_2, \ldots, t_n$ of transitions such that it holds that
$M_0 \fire{t_1} M_1 \fire{t_2} \ldots \fire{t_n} M$, where~$M_0$ is the initial
marking.
%
A~marking~$M$ is \emph{coverable}, if there exists a reachable marking $M'$,
such that $M \leq M'$.
The \emph{Coverability} problem for TPNs asks, given a~TPN~$\net$ and
a~marking~$M$, whether~$M$ is coverable in~$\net$.

\begin{proposition}[\cite{SchmitzS13}]\label{prop:tpn-coverability-fomega-complete}
The \emph{Coverability} problem for TPNs is $\fomega$-complete.
\end{proposition}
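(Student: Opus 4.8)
This is a known result, so the plan is to invoke \cite{SchmitzS13} (which builds on \cite{SchmitzS12,Schmitz17} and the earlier study of reset/transfer nets); below I sketch the two matching bounds of such a proof.

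For membership in $\fomega$, I would first note that a TPN $\net$ with the pointwise order $\leq$ on markings is a \emph{well-structured transition system}: firing is monotone with respect to $\leq$ (raising all of the inputs, outputs, and the transferred multiset preserves the order), and $(\nat^{|P|}, \leq)$ is a well-quasi-order by Dickson's lemma. Coverability is then decidable by the standard backward saturation algorithm, which computes the upward-closed set of markings from which the target $M$ is coverable by iterating minimal predecessors. To turn decidability into the $\fomega$ bound, I would bound the number of iterations by the length of a \emph{controlled} strictly $\leq$-decreasing sequence of minimal markings via the Length Function Theorem for $\nat^{k}$: since each predecessor step increases the norm of a marking by only a polynomial (hence primitive-recursive) amount, the iteration count---and thus the whole running time---is bounded by a function in $\mathbf{F}_{\omega}$.

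For $\fomega$-hardness, I would reduce from halting of a deterministic counter (or Turing) machine whose space and time are bounded by the Ackermann function, a standard $\fomega$-complete problem, following the Ackermann-hardness constructions for reset/transfer nets. The idea is a gadget that \emph{weakly} computes the Ackermann function---from an initial token budget it produces in a fixed place some number of tokens up to $A(n)$, where the net may lose tokens but is also allowed to guess correctly---followed by a gadget that spends this budget to simulate the bounded machine step by step, again only able to err by dropping tokens. Because coverability asks only for the \emph{existence} of a run reaching at least the target, the lossy behaviour is harmless: the loss-free run simulates the machine faithfully and reaches the designated ``halt'' marking exactly when the machine halts.

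The hard part will be the lower bound: getting the monotone transfer dynamics to realize the fast-growing Ackermann hierarchy \emph{weakly yet soundly}---so that the robust loss-free run always exists while no lossy run can cheat its way to the target---requires the intricate layered ``Hardy/Ackermann computer'' of \cite{SchmitzS13}, with a matching fine-grained controlled-bad-sequence analysis on the upper-bound side. As all of this is already carried out in \cite{SchmitzS13}, in our setting the proof is simply a pointer there.
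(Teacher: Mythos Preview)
Your approach matches the paper's: the paper treats this proposition purely as a citation to~\cite{SchmitzS13} and gives no proof of its own, so your plan to invoke the reference is exactly what the paper does. Your additional sketch of the upper bound (backward saturation for WSTS plus the Length Function Theorem) and the lower bound (Ackermann-hard weak computation via lossy gadgets) goes beyond what the paper provides and is accurate in outline, but strictly speaking is not needed here.
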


\noindent
Let us now prove the two directions of the proof of \cref{thm:rsa-emptiness}.

\begin{lemma}\label{lem:rsa-emptiness-in-fomega}
The emptiness problem for $\rsa$ is in $\fomega$.
\end{lemma}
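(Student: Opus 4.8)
The plan is to reduce the emptiness problem for an $\rsa$ $\aut = (Q, \regs, \Delta, I, F)$ to the \emph{Coverability} problem for transfer Petri nets, which is in $\fomega$ by \cref{prop:tpn-coverability-fomega-complete}. Since primitive-recursive reductions are allowed inside the class $\fomega$, it suffices to construct, in primitive-recursive (indeed polynomial) time, a TPN $\netof\aut$ together with a target marking $M_{\fin}$ such that $\langof\aut \neq \emptyset$ iff $M_{\fin}$ is coverable in $\netof\aut$.

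\textbf{Construction of the TPN.} First I would observe that the only information about an $\rsa$ configuration $(q,f)$ that is relevant for deciding whether the run can continue is the state $q$ together with, for each data value $d$ that occurs in some register, the set $\region(d) = \set{r \in \regs \mid d \in f(r)}$ of registers containing $d$; moreover, only the \emph{cardinalities} of the classes $\set{d \mid \region(d) = \region}$ for each $\region \subseteq \regs$ matter, since data values are interchangeable. This is exactly what a marking over the place set described in the proof sketch records: one place per state of $Q$, two auxiliary places $\init$ and $\fin$, and one place per region $\region \subseteq \regs$ holding as many tokens as there are distinct data values whose ``home region'' is precisely $\region$. The control flow is encoded by keeping exactly one token among the $Q$-places. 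For each $\rsa$ transition $t\colon \trans q a {\guardin, \guardnotin, \update} s$ I would add one TPN transition for each choice of a region $\region^\inp \subseteq \regs$ describing where the current input value $d$ currently sits: such a TPN transition is enabled only when the $q$-place carries a token and $\guardin \subseteq \region^\inp$ and $\guardnotin \cap \region^\inp = \emptyset$ (the guard check), and it (i) moves the control token from $q$ to $s$, (ii) consumes one token from the place $\region^\inp$ when $\region^\inp \neq \emptyset$, and (iii) uses the transfer function of the TPN to redistribute the region-tokens according to $\update$: each region place $\region$ is transferred to the place $\set{r \in \regs \mid \region \cap \update(r) \neq \emptyset}$ (where I treat $\update(r)$ as a subset of $\regs \cup \set\inp$), and finally puts one token into the place $\set{r \mid \update(r) \ni r' \text{ for some } r' \in \region^\inp} \cup \set{r \mid \inp \in \update(r)}$ to account for where $d$ lands after the update. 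The place $\init$ has transitions depositing the control token in each $q \in I$ (with all region places empty, matching the empty initial register contents), and each final state place has a transition moving its token to $\fin$. The target marking is $M_{\fin} = \unitof{\fin}$.

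\textbf{Correctness and complexity.} I would then prove, by induction on run length in each direction, that $\aut$ has an accepting run iff some marking with a token in $\fin$ is reachable, i.e.\ iff $\unitof\fin$ is coverable; the key invariant is that after simulating a prefix of an accepting run, the marking on the $Q$-places is the unit at the current state and the marking on the region places equals the cardinality profile of the current register assignment (the ``remove a token on update'' bookkeeping via transfer arcs is the one place needing care, but since transfer functions are total and move \emph{all} tokens of a source place at once, the profile update induced by a union-style $\rsa$ update is exactly captured). The number of places is $|Q| + 2 + 2^{|\regs|}$ and the number of transitions is at most $|\Delta| \cdot 2^{|\regs|} + |I| + |F|$, so the reduction is computable in time elementary in the size of $\aut$, hence certainly primitive-recursive; combined with $\fomega$ being closed under primitive-recursive reductions and \cref{prop:tpn-coverability-fomega-complete}, this places $\rsa$ emptiness in $\fomega$.

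\textbf{Main obstacle.} The delicate point is getting the transfer arcs to simulate the $\rsa$ update $f_2(r) = \bigcup\set{f_1(r') \mid r' \in \update(r)} \cup (\set d \text{ if } \inp \in \update(r))$ correctly at the level of cardinality profiles: a single data value's home region $\region$ splits into a (possibly different) single region after the update, so this is well-defined as a function on region \emph{indices}, but one must check that no two source regions collide into an under-count, and that the lone input value $d$ is handled separately from the bulk transfer (it is first ``removed'' from $\region^\inp$ and then re-inserted into its new region) so as not to be double-counted or lost. I expect this bookkeeping — rather than anything about the $\fomega$ machinery — to be where the proof needs the most care, and I would illustrate it on a small example (as promised in \cref{sec:proof-rsa-emptiness}) to make the encoding transparent.
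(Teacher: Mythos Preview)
Your proposal is correct and follows essentially the same approach as the paper: reduce $\rsa$ emptiness to TPN coverability by using one place per state, the two auxiliary places $\init$/$\fin$, and one place per region $\region \subseteq \regs$, with one TPN transition per $\rsa$-transition/input-region pair. Your direct formula for the transfer function, $\region \mapsto \{r \in \regs \mid \update(r) \cap \region \neq \emptyset\}$, is exactly what the paper's longer product-of-sums/sum-of-products derivation computes (you can check this against the paper's Example~\ref{ex:rsa_emptiness_example_transfer}), and your anticipated ``under-count'' obstacle when two source regions collide is a non-issue since TPN transfer semantics sums all incoming tokens at the target place.
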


\begin{proof}
The proof of the lemma is based on reducing the $\rsa$ emptiness problem to
coverability in TPNs, which is $\fomega$-complete
(\cref{prop:tpn-coverability-fomega-complete}).
Intuitively, the reduction consists of creating a TPN with
places representing both individual states of $\rsa$ and
individual \emph{regions} of the Venn diagram of $\regs$.
Transitions of $\rsa$ are represented by one or more transitions
of TPN, distinguishing every possible option of $\inp$ being
in some region $\region \in 2^\regs$.
The set of arcs leading to and from each transition is calculated in a way which preserves the semantics and position of values defined by the \emph{guard} and \emph{update} formulae.
Finally, the marking to be covered requires one token
to be present in the places representing the final states of the $\rsa$.

Formally, let $\aut = (Q, \regs, \Delta, I, F)$ be an $\rsa$.
In the following, we will construct a~TPN $\netof \aut = (P, T, M_0)$ and
a~marking~$M_F$ such that $\langof \aut \neq \emptyset$ iff~$M_F$ is coverable
in~$\netof \aut$.
We set the components of~$\netof \aut$ as follows:
\begin{itemize}
  \item  $P = Q \uplus \set{\init, \fin} \uplus 2^\regs$ where
    $\init$ and $\fin$ are two new places,
  \item  $T = \set{\tuple{\unitof{\init}, \unitof{q_i}, \identity} \mid
    q_i \in I} \cup \set{\tuple{\unitof {q_f}, \unitof{\fin}, \identity}
    \mid q_f \in F} \cup T'$ with $T'$ defined below, and
  \item  $M_0 = \unitof{\init}$.
\end{itemize}
Intuitively, the set of places contains the states of~$\aut$ (there will always
be at most one token in those places), two new places $\init$ and
$\fin$ that are used for the initial nondeterministic choice of some
initial state of~$\aut$ and for a~unique \emph{final place} (whose coverability
will be checked) respectively, and, finally, a~new place for every
\emph{region} of the Venn diagram of~$\regs$, which will track the number of
data values that two or more registers share (e.g., for $\regs = \set{r_1,
r_2, r_3}$, the subset $\set{r_1, r_3}$ denotes the region $r_1 \cap
\cmplof{r_2} \cap r_3$, i.e., the data values that are stored in~$r_1$
and~$r_3$ but are not stored in~$r_2$. The region $\cmplof{r_1} \cap \cmplof{r_2}
\cap \cmplof{r_3}$ is denoted as $\{\emptyset\}$.
Regions $\regionof 1, \regionof 2$ are \emph{distinct},
if $\regionof 1 \triangle \regionof 2 \neq \emptyset$, i.e.
$\exists r \colon (r \in \regionof 1 \lor r \in \regionof 2)
\land r \notin \regionof 1 \cap \regionof 2$.
The set of TPN transitions~$T'$ is defined below.

We now proceed to the definition of $T'$.
Let $t = \trans q a {\guardin, \guardnotin, \update} s \in \Delta$ be
a~transition in~$\aut$.
Then, we create a~TPN transition for every possible option of~$\inp$
being in some region $\regionof g \in 2^\regs$ (e.g., for $\inp \in r_1 \cap
\cmplof{r_2} \cap r_3$ or $\inp \in \cmplof{r_1} \cap \cmplof{r_2} \cap r_3$).
For~$t$ and~$\regionof g$, we define
\begin{equation}
\gamma(t, \regionof g) =
\begin{cases}
  \set{\tuple{\intrans, \outrans, \transfer}} & \text{if } (\guardin \subseteq
    \regionof g) \land (\guardnotin \cap \regionof g = \emptyset)  \text{ and} \\
  \emptyset & \text{otherwise}.
\end{cases}
\end{equation}
Then $T' = \bigcup\set{\gamma(t, \regionof g) \mid t \in \Delta, \regionof g \in
2^\regs} \cup \set{\tuple{\vec 0, \unitof{\emptyset}, \identity}}$.
Note that the task of the TPN transition $\tuple{\vec 0, \unitof{\emptyset},
\identity}$ is to provide an arbitrary number of tokens in the place for the
region
$\set \emptyset = \cmplof{r_1} \cap \ldots \cap \cmplof{r_n}$, which corresponds to an unlimited
number of input data values.

In the definition of $\gamma(t, \regionof g)$ above, $\intrans$, $\outrans$, and
$\transfer$ are defined as follows:
\begin{itemize}
  \item  $\intrans = \unitof{\regionof g} + \unitof q$ and
  \item  $\outrans = \unitof{\dst} + \unitof s$ where
    $\dst = \set{r_i \in \regs \mid (\set{\inp} \cup \regionof g) \cap \update(r_i) \neq \emptyset}$.
  \item  Before we give a formal definition of $\transfer$, let us start with an
    intuition given in the following example.

    \begin{example}\label{ex:rsa_emptiness_example_transfer}
      Let us consider the $\rsa$ in \cref{fig:rsa_emptiness_example_rsa} and its
      transition $\trans q a {\emptyset, \emptyset, \update} q$ with
      $\update(r_1) = \set{r_1, \inp}$ and $\update(r_2) = \set{r_1, r_2}$.
      We need to update the following four regions of the Venn diagram of~$r_1$
      and~$r_2$: $r_1 \cap r_2$, $r_1 \cap \cmplof{r_2}$, $\cmplof{r_1} \cap
      r_2$, and $\cmplof{r_1} \cap \cmplof{r_2}$.
      From the update function~$\update$, we see that the new values stored
      in~$r_1$ and $r_2$ will be (we used primed versions of register names to
      denote their value after update) $r_1' = r_1$ (we do not consider
      $\set{\inp}$ here because it has been discharged within $\outrans$ in the
      previous step) and $r_2' = r_1 \cup r_2$.
      The values of the regions will therefore be updated as follows:

      \noindent
      \hspace*{-12mm}
      \scalebox{0.9}{
      \begin{minipage}{1.2\textwidth}
      \begin{equation}\label{eq:rsa_emptiness_example_transfer}
      \begin{aligned}
        r_1' \cap r_2'                   &{}= r_1 \cap (r_1 \cup r_2) &
        \hspace*{-5mm}r_1' \cap \cmplof{r_2'}          &{}= r_1 \cap \cmplof{(r_1 \cup r_2)} &
        \cmplof{r_1'} \cap r_2'          &{}= \cmplof{r_1} \cap (r_1 \cup r_2) &
        \cmplof{r_1'} \cap \cmplof{r_2'} &{}= \cmplof{r_1} \cap \cmplof{(r_1 \cup r_2)}\hspace*{-5mm} \\
        &{}= (r_1 \cap r_1) \cup (r_1 \cap r_2) &
        &{}= r_1 \cap \cmplof{r_1} \cap \cmplof{r_2} &
        &{}= (\cmplof{r_1} \cap r_1) \cup (\cmplof{r_1} \cap r_2)\hspace*{-5mm} &
        &{}= \cmplof{r_1} \cap \cmplof{r_1} \cap \cmplof{r_2} \\
        &{}= r_1 \cup (r_1 \cap r_2) &
        &{}= \emptyset &
        &{}= \cmplof{r_1} \cap r_2 &
        &{}= \cmplof{r_1} \cap \cmplof{r_2} \\
        &{}= (r_1 \cap \cmplof{r_2}) \cup (r_1 \cap r_2) &
      \end{aligned}
      \end{equation}
      \end{minipage}
      }

      \medskip
      \noindent
      Note that in the last step of the calculation of $r_1' \cap r_2'$, we used
      the fact that $r_1 = (r_1 \cap r_2) \cup (r_1 \cap \cmplof{r_2})$ in order
      to obtain a union of regions.
      From the previous calculation, we see that $\transfer$ should be set as
      follows:
      $\transfer(\set{r_1, r_2}) = \set{r_1, r_2}$,
      $\transfer(\set{r_1}) = \set{r_1, r_2}$,
      $\transfer(\set{r_2}) = \set{r_2}$, and
      $\transfer(\{\emptyset\}) = \{\emptyset\}$.
      \claimqed
    \end{example}

\begin{figure}[t]
\begin{center}
\begin{tikzpicture}
  \tikzset{
    ->, 
    >=stealth',
    initial text=$ $, 
    node distance=30mm,
  }

  \node[initial,state] (q) {$q$};
  \node[state] (s) [right of=q] {$s$};
  \node[state, accepting] (t) [right of=s] {$t$};

 \draw (q) edge[loop above] node {$\pictrans a {} {r_1 \gets r_1 \cup \set{\inp} \\ r_2 \gets r_1 \cup r_2}$} (q);
 \draw (q) edge[below] node {$\pictrans b {} {r_1 \gets \emptyset \\ r_2 \gets r_1}$} (s);
 \draw (s) edge[loop above] node {$\pictrans a {} {r_1 \gets r_1 \cup \set{\inp} \\ r_2 \gets r_2}$} (s);
 \draw (s) edge[below] node {$\pictrans b {\inp \in r_1 \\ \inp \in r_2}{r_1 \gets \emptyset \\ r_2 \gets \emptyset}$} (t);
\end{tikzpicture}
\end{center}
\caption{An example $\rsa$}
\label{fig:rsa_emptiness_example_rsa}
\end{figure}

\newcommand{
\begin{figure}[t]
\begin{center}
  \begin{tikzpicture}
  [
    >=stealth',
  ]

\node[place,
    tokens=0,
    label=$q$] (stateq) at (2.75,0) {};


\node[place,
    tokens=0,
    fill=red,
    fill opacity=0.2,
    label=below:$r_1 \cap r_2$] (place1) at (0,-4) {};

\node[place,
    tokens=0,
    fill=yellow,
    fill opacity=0.2,
    label=below:$\cmplof{r_1} \cap r_2$] (place2) at (0,-6) {};

\node[place,
    tokens=0,
    fill=cyan,
    fill opacity=0.2,
    label=below:$r_1 \cap \cmplof{r_2}$] (place3) at (6,-4) {};

\node[place,
    tokens=0,
    fill=lime,
    fill opacity=0.2,
    label=below:$\cmplof{r_1}\cap \cmplof{r_2}$] (place4) at (6,-6) {};

\node[transition,
    minimum height=1.5mm,
    minimum width=5mm,
    fill=red,
    label=left:\tiny $t_1$] (trans1) at (0.5,-2) {};

\node[transition,
    minimum height=1.5mm,
    minimum width=5mm,
    fill=cyan,
    label=left:\tiny $t_2$] (trans2) at (2,-2) {};

\node[transition,
    minimum height=1.5mm,
    minimum width=5mm,
    fill=blue!15,
    label=left:\tiny $t_3$] (trans3) at (3.5,-2) {};

\node[transition,
    minimum height=1.5mm,
    minimum width=5mm,
    fill=lime,
    label=left:\tiny $t_4$] (trans4) at (5,-2) {};


\path[->]
    (stateq) edge[bend left=10] node[above] {} (trans1);
\path[->]
    (trans1) edge[bend left=10] node[above] {} (stateq);
\path[->]
    (stateq) edge[bend left=10] node[above] {} (trans2);
\path[->]
    (trans2) edge[bend left=10] node[above] {} (stateq);
\path[->]
    (stateq) edge[bend left=10] node[above] {} (trans3);
\path[->]
    (trans3) edge[bend left=10] node[above] {} (stateq);
\path[->]
    (stateq) edge[bend left=10] node[above] {} (trans4);
\path[->]
    (trans4) edge[bend left=10] node[above] {} (stateq);

\path[->,thick,red]
    (place3) edge[bend left] node[below, pos=0.35,red] {\tiny $m$} (trans1);
\path[->,thick,red]
    (trans1) edge node[right,pos=0.2,red] {\tiny $m$} (place1);
\path[->]
    (place1) edge[bend left=50] (trans1);
\path[->]
    (trans1) edge[bend right=30] (place1);

\path[->]
    (place3) edge (trans2);
\path[->]
    (trans2) edge (place1);
\path[->,thick,cyan]
    (place3) edge[bend left] node[above, pos=0.4,cyan] {\tiny $n$} (trans2);
\path[->,thick,cyan]
    (trans2) edge[bend right=20] node[left,cyan] {\tiny $n$} (place1);

\path[->,thick,blue!15]
    (place3) edge node[above, pos=0.2] {\tiny $o$} (trans3);
\path[->,thick,blue!15]
    (trans3) edge[bend right=10] node[left, pos=0.5] {\tiny $o$} (place1);
\path[->]
    (place2) edge[bend right] (trans3);
\path[->]
    (trans3) edge (place1);

\path[->,thick,lime!80!black]
    (place3) edge[bend right] node[above] {\tiny $p$} (trans4);
\path[->,thick,lime!80!black]
    (trans4) edge node[above, pos=0.6] {\tiny $p$} (place1);
\path[->]
    (place4) edge[bend left] node[above] {} (trans4);
\path[->]
    (trans4) edge node[above] {} (place3);

\end{tikzpicture}
\end{center}
\caption{TPN for the transition
  $\trans q a {\emptyset, \emptyset,\set{r_1 \mapsto \set{r_1,\inp}, r_2
  \mapsto \set{r_1, r_2}}} q$
  of the $\rsa$ from
  \cref{fig:rsa_emptiness_example_rsa} (corresponding colours represent the
  source position of $\inp$ for the given transition).}
\label{fig:tpn_example}
\end{figure}
}[0]{
\begin{figure}[t]
\begin{center}
  \begin{tikzpicture}
  [
    >=stealth',
  ]

\node[place,
    tokens=0,
    label=$q$] (stateq) at (2.75,0) {};


\node[place,
    tokens=0,
    fill=red,
    fill opacity=0.2,
    label=below:$r_1 \cap r_2$] (place1) at (0,-4) {};

\node[place,
    tokens=0,
    fill=yellow,
    fill opacity=0.2,
    label=below:$\cmplof{r_1} \cap r_2$] (place2) at (0,-6) {};

\node[place,
    tokens=0,
    fill=cyan,
    fill opacity=0.2,
    label=below:$r_1 \cap \cmplof{r_2}$] (place3) at (6,-4) {};

\node[place,
    tokens=0,
    fill=lime,
    fill opacity=0.2,
    label=below:$\cmplof{r_1}\cap \cmplof{r_2}$] (place4) at (6,-6) {};

\node[transition,
    minimum height=1.5mm,
    minimum width=5mm,
    fill=red,
    label=left:\tiny $t_1$] (trans1) at (0.5,-2) {};

\node[transition,
    minimum height=1.5mm,
    minimum width=5mm,
    fill=cyan,
    label=left:\tiny $t_2$] (trans2) at (2,-2) {};

\node[transition,
    minimum height=1.5mm,
    minimum width=5mm,
    fill=blue!15,
    label=left:\tiny $t_3$] (trans3) at (3.5,-2) {};

\node[transition,
    minimum height=1.5mm,
    minimum width=5mm,
    fill=lime,
    label=left:\tiny $t_4$] (trans4) at (5,-2) {};


\path[->]
    (stateq) edge[bend left=10] node[above] {} (trans1);
\path[->]
    (trans1) edge[bend left=10] node[above] {} (stateq);
\path[->]
    (stateq) edge[bend left=10] node[above] {} (trans2);
\path[->]
    (trans2) edge[bend left=10] node[above] {} (stateq);
\path[->]
    (stateq) edge[bend left=10] node[above] {} (trans3);
\path[->]
    (trans3) edge[bend left=10] node[above] {} (stateq);
\path[->]
    (stateq) edge[bend left=10] node[above] {} (trans4);
\path[->]
    (trans4) edge[bend left=10] node[above] {} (stateq);

\path[->,thick,red]
    (place3) edge[bend left] node[below, pos=0.35,red] {\tiny $m$} (trans1);
\path[->,thick,red]
    (trans1) edge node[right,pos=0.2,red] {\tiny $m$} (place1);
\path[->]
    (place1) edge[bend left=50] (trans1);
\path[->]
    (trans1) edge[bend right=30] (place1);

\path[->]
    (place3) edge (trans2);
\path[->]
    (trans2) edge (place1);
\path[->,thick,cyan]
    (place3) edge[bend left] node[above, pos=0.4,cyan] {\tiny $n$} (trans2);
\path[->,thick,cyan]
    (trans2) edge[bend right=20] node[left,cyan] {\tiny $n$} (place1);

\path[->,thick,blue!15]
    (place3) edge node[above, pos=0.2] {\tiny $o$} (trans3);
\path[->,thick,blue!15]
    (trans3) edge[bend right=10] node[left, pos=0.5] {\tiny $o$} (place1);
\path[->]
    (place2) edge[bend right] (trans3);
\path[->]
    (trans3) edge (place1);

\path[->,thick,lime!80!black]
    (place3) edge[bend right] node[above] {\tiny $p$} (trans4);
\path[->,thick,lime!80!black]
    (trans4) edge node[above, pos=0.6] {\tiny $p$} (place1);
\path[->]
    (place4) edge[bend left] node[above] {} (trans4);
\path[->]
    (trans4) edge node[above] {} (place3);

\end{tikzpicture}
\end{center}
\caption{TPN for the transition
  $\trans q a {\emptyset, \emptyset,\set{r_1 \mapsto \set{r_1,\inp}, r_2
  \mapsto \set{r_1, r_2}}} q$
  of the $\rsa$ from
  \cref{fig:rsa_emptiness_example_rsa} (corresponding colours represent the
  source position of $\inp$ for the given transition).}
\label{fig:tpn_example}
\end{figure}
}


    Formally, $\transfer$ is computed as follows.
    For every $\regionof o \in 2^\regs$, let us compute the sets of sets of registers
    \begin{equation}
    \positpos(\regionof o) = \set{\update(r_i) \cap \regs \mid r_i \in \regionof o}\quad\text{and}\quad
    \negat(\regionof o) = \bigcup\set{\update(r_i) \cap \regs \mid r_i \notin \regionof o}
    \end{equation}
    ($\posit$ is for ``\emph{positive}'' and $\negat$ is for
    ``\emph{negative}'', which represent registers that occur positively and
    negatively, respectively, in the specification of the region of the Venn
    diagram $\regionof o$).
    The intuition is that $\positpos(\region_o)$ represents the update
    of~$\regionof o$ as the \emph{product of sums} (intersection of unions),
    cf.\ the first line in \cref{eq:rsa_emptiness_example_transfer} in
    \cref{ex:rsa_emptiness_example_transfer}.
    Next, we convert the product of sums $\positpos(\regionof o)$ into
    a~sum of products
    \begin{equation}
    \positsop(\regionof o) = \krugel \positpos(\regionof o)
    \end{equation}
    where $\krugel \set{D_1, \dots, D_n}$ is the \emph{unordered Cartesian
    product} of sets $D_1, \ldots, D_n$, i.e.,
    \begin{equation}
    \krugel \set{D_1, \ldots, D_n} = \bigl\{\{d_1, \dots, d_n\} \bigm| (d_1, \dots,
      d_n) \in D_1 \times \cdots \times D_n\bigr\}.
    \end{equation}

    \begin{example}
    In the transition considered in \cref{ex:rsa_emptiness_example_transfer}, we
    obtain the following:

    \hspace*{-12mm}
    \scalebox{0.9}{
    \begin{minipage}{1.2\textwidth}
    \begin{align*}
      \positpos(\set{r_1, r_2}) &{} = \set{\set{r_1}, \set{r_1, r_2}} &
      \positpos(\set{r_1}) &{} = \set{\set{r_1}} &
      \positpos(\set{r_2}) &{} = \set{\set{r_1, r_2}} &
      \positpos(\{\emptyset\}) &{} = \set{\{\emptyset\}} \\
      \positsop(\set{r_1, r_2}) &{} = \set{\set{r_1}, \set{r_1, r_2}} &
      \positsop(\set{r_1}) &{} = \set{\set{r_1}} &
      \positsop(\set{r_2}) &{} = \set{\set{r_1}, \set{r_2}} &
      \positsop(\{\emptyset\}) &{} = \set{\set{\emptyset}} \\
      \negat(\set{r_1, r_2}) &{} = \emptyset &
      \negat(\set{r_1}) &{} = \set{r_1,r_2} &
      \negat(\set{r_2}) &{} = \set{r_1} &
      \negat(\{\emptyset\}) &{} = \set{r_1, r_2}
      ~\claimqed
    \end{align*}
    \end{minipage}
    }
    \end{example}

    Next, we modify $\positsop$ into $\positsop'$ by removing regions that are
    incompatible with $\negat$ to obtain
    \begin{equation}
    \positsop'(\regionof o) = \set{x \in
    \positsop(\regionof o) \mid x \cap \negat(\regionof o) = \emptyset} .
    \end{equation}

    \begin{example}
    In the running example, we would obtain the following values of $\positsop'$:
    \begin{align*}
      \positsop'(\set{r_1, r_2}) &{} = \set{\set{r_1}, \set{r_1, r_2}} &
      \positsop'(\set{r_1}) &{} = \emptyset \\
      \positsop'(\set{r_2}) &{} = \set{\set{r_2}} &
      \positsop'(\{\emptyset\}) &{} = \set{\set{\emptyset}}
    \end{align*}
    Compare the results with the calculation in \cref{eq:rsa_emptiness_example_transfer}.
    \claimqed
    \end{example}

    Lastly, for every~$\regionof i \in 2^\regs$ such that $\positsop'(\regionof
    o) \in \regionof i$, we set $\transfer(\regionof i) = \regionof o$.

\begin{figure}[t]
\begin{center}
  \begin{tikzpicture}
  [
    >=stealth',
  ]

\node[place,
    tokens=0,
    label=$q$] (stateq) at (2.75,0) {};


\node[place,
    tokens=0,
    fill=red,
    fill opacity=0.2,
    label=below:$r_1 \cap r_2$] (place1) at (0,-4) {};

\node[place,
    tokens=0,
    fill=yellow,
    fill opacity=0.2,
    label=below:$\cmplof{r_1} \cap r_2$] (place2) at (0,-6) {};

\node[place,
    tokens=0,
    fill=cyan,
    fill opacity=0.2,
    label=below:$r_1 \cap \cmplof{r_2}$] (place3) at (6,-4) {};

\node[place,
    tokens=0,
    fill=lime,
    fill opacity=0.2,
    label=below:$\cmplof{r_1}\cap \cmplof{r_2}$] (place4) at (6,-6) {};

\node[transition,
    minimum height=1.5mm,
    minimum width=5mm,
    fill=red,
    label=left:\tiny $t_1$] (trans1) at (0.5,-2) {};

\node[transition,
    minimum height=1.5mm,
    minimum width=5mm,
    fill=cyan,
    label=left:\tiny $t_2$] (trans2) at (2,-2) {};

\node[transition,
    minimum height=1.5mm,
    minimum width=5mm,
    fill=blue!15,
    label=left:\tiny $t_3$] (trans3) at (3.5,-2) {};

\node[transition,
    minimum height=1.5mm,
    minimum width=5mm,
    fill=lime,
    label=left:\tiny $t_4$] (trans4) at (5,-2) {};


\path[->]
    (stateq) edge[bend left=10] node[above] {} (trans1);
\path[->]
    (trans1) edge[bend left=10] node[above] {} (stateq);
\path[->]
    (stateq) edge[bend left=10] node[above] {} (trans2);
\path[->]
    (trans2) edge[bend left=10] node[above] {} (stateq);
\path[->]
    (stateq) edge[bend left=10] node[above] {} (trans3);
\path[->]
    (trans3) edge[bend left=10] node[above] {} (stateq);
\path[->]
    (stateq) edge[bend left=10] node[above] {} (trans4);
\path[->]
    (trans4) edge[bend left=10] node[above] {} (stateq);

\path[->,thick,red]
    (place3) edge[bend left] node[below, pos=0.35,red] {\tiny $m$} (trans1);
\path[->,thick,red]
    (trans1) edge node[right,pos=0.2,red] {\tiny $m$} (place1);
\path[->]
    (place1) edge[bend left=50] (trans1);
\path[->]
    (trans1) edge[bend right=30] (place1);

\path[->]
    (place3) edge (trans2);
\path[->]
    (trans2) edge (place1);
\path[->,thick,cyan]
    (place3) edge[bend left] node[above, pos=0.4,cyan] {\tiny $n$} (trans2);
\path[->,thick,cyan]
    (trans2) edge[bend right=20] node[left,cyan] {\tiny $n$} (place1);

\path[->,thick,blue!15]
    (place3) edge node[above, pos=0.2] {\tiny $o$} (trans3);
\path[->,thick,blue!15]
    (trans3) edge[bend right=10] node[left, pos=0.5] {\tiny $o$} (place1);
\path[->]
    (place2) edge[bend right] (trans3);
\path[->]
    (trans3) edge (place1);

\path[->,thick,lime!80!black]
    (place3) edge[bend right] node[above] {\tiny $p$} (trans4);
\path[->,thick,lime!80!black]
    (trans4) edge node[above, pos=0.6] {\tiny $p$} (place1);
\path[->]
    (place4) edge[bend left] node[above] {} (trans4);
\path[->]
    (trans4) edge node[above] {} (place3);

\end{tikzpicture}
\end{center}
\caption{TPN for the transition
  $\trans q a {\emptyset, \emptyset,\set{r_1 \mapsto \set{r_1,\inp}, r_2
  \mapsto \set{r_1, r_2}}} q$
  of the $\rsa$ from
  \cref{fig:rsa_emptiness_example_rsa} (corresponding colours represent the
  source position of $\inp$ for the given transition).}
\label{fig:tpn_example}
\end{figure}

    \begin{example}
    Continuing in the running example, we obtain
    \begin{align*}
      \transfer(\set{r_1, r_2}) &{} = \set{r_1, r_2} &
      \transfer(\set{r_1}) &{} = \set{r_1, r_2} \\
      \transfer(\set{r_2}) &{} = \set{r_2} &
      \transfer(\{\emptyset\}) &{} = \{\emptyset\} ,
    \end{align*}
    which is the same result as in \cref{ex:rsa_emptiness_example_transfer}.
    \cref{fig:tpn_example} contains the TPN fragment for all TPN transitions
      constructed from~$\aut$'s transition $\trans q a {\emptyset,
      \emptyset,\set{r_1 \mapsto \set{r_1,\inp}, r_2 \mapsto \set{r_1, r_2}}}
      q$.
    \claimqed
    \end{example}

    The following claim shows that this construction is indeed well defined.

    \begin{claim}
    The function $\transfer$ is well defined.
    \end{claim}
    \begin{claimproof}
        It is necessary to show that no set of values will be
        duplicated and assigned to two distinct regions when $\transfer$ is calculated.
        According to the definition of $\transfer$,
        for all regions $\region' \in 2^\regs$ such that
        $\positsop'(\regionof
        o) \in \region'$, the value of $\transfer(\region')$ is set to be
        $\regionof o$, therefore we need to prove that for each pair of distinct
        regions $\regionof 1$ and $\regionof 2 \in 2^\regs$ it holds that
        $\positsop'(\regionof 1) \cap \positsop'(\regionof 2) = \emptyset$.
        We prove this by contradiction.

        Assume that there are two
        distinct regions $\regionof 1$ and $\regionof 2$ such that
        there exists a~region $\regionof 3 \in \positsop'(\regionof 1)
        \cap \positsop'(\regionof 2)$.
        According to the construction of $\positsop'$, it holds that
    \begin{align*}
    \regionof 3 \in \positsop(\regionof 1)\quad\land\quad&
    \regionof 3 \cap \negat(\regionof 1) = \emptyset \quad \text{and} \\
    \regionof 3 \in \positsop(\regionof 2)\quad\land\quad &
    \regionof 3 \cap \negat(\regionof 2) = \emptyset.
    \end{align*}

    Then, according to the construction of $\positsop$,
    \begin{align*}
    \regionof 3 \in \krugel \positpos(\regionof 1)\quad\land\quad&
    \regionof 3 \cap \negat(\regionof 1) = \emptyset \quad \text{and} \\
    \regionof 3 \in \krugel \positpos(\regionof 2)\quad\land\quad &
    \regionof 3 \cap \negat(\regionof 2) = \emptyset.
    \end{align*}

    Following the construction of $\positpos$:

    \hspace*{-10mm}
    \noindent
    \scalebox{0.95}{
    \begin{minipage}{1.2\textwidth}
    \begin{gather*}
    (\forall r \in \regionof 3 \exists P \in \positpos(\regionof 1)\colon r\in P) \land
    (\forall P' \in \positpos(\regionof 1)\exists r' \in \regionof 3\colon r' \in P') \land
    (\regionof 3 \cap \negat(\regionof 1) = \emptyset) \land {}\\
    (\forall r \in \regionof 3 \exists P \in \positpos(\regionof 2)\colon r\in P) \land
    (\forall P' \in \positpos(\regionof 2)\exists r' \in \regionof 3\colon r' \in P') \land
    (\regionof 3 \cap \negat(\regionof 2) = \emptyset).
    \end{gather*}
    \end{minipage}
    }

    \bigskip

    According to the construction of $\positpos$, it holds that
    if $P \in \positpos(\region)$ then there exists a~register $r \in \region$
      such that $P = \update(r_i)$.
    Therefore, for each $P \in \positpos(\region)$ there exists a~register $r'
      \in \region'$ such that $r \in P$.
    Then, continuing in the proof, we obtain

    \hspace*{-13mm}
    \noindent
    \scalebox{0.95}{
      \begin{minipage}{1.2\textwidth}
        \begin{gather*}
        (\forall r \in \regionof 3 \exists \update(r_i)\colon r_i \in \regionof 1 \land r\in \update(r_i)) \land
        (\forall r^\bullet \in \regionof 1 \exists r' \in \regionof 3 \colon r' \in \update(r^\bullet)) \land
        (\regionof 3 \cap \negat(\regionof 1) = \emptyset) \land {} \\
        (\forall r \in \regionof 3 \exists \update(r_i)\colon r_i \in \regionof 2 \land r\in \update(r_i)) \land
        (\forall r^\bullet \in \regionof 2 \exists r' \in \regionof 3 \colon r' \in \update(r^\bullet)) \land
        (\regionof 3 \cap \negat(\regionof 2) = \emptyset)
        \end{gather*}
      \end{minipage}
    }

    \bigskip

    From the construction of $\negat(\region)$, the formula $\regionof i \cap \negat(\regionof j) = \emptyset$
    is equivalent to the formula $\forall r \in \regionof i \neg \exists r^* \notin \regionof j\colon r \in \update(r^*)$.
    Therefore:
    \begin{gather*}
    (\forall r \in \regionof 3 \exists \update(r_i)\colon r_i \in \regionof 1 \land r\in \update(r_i)) \land
    (\forall r^\bullet \in \regionof 1 \exists r' \in \regionof 3 \colon r' \in
      \update(r^\bullet)) \land {}\\
    (\forall r \in \regionof 3 \neg \exists r^* \notin \regionof 1\colon r \in
      \update(r^*)) \land {}\\
    (\forall r \in \regionof 3 \exists \update(r_i)\colon r_i \in \regionof 2 \land r\in \update(r_i)) \land
    (\forall r^\bullet \in \regionof 2 \exists r' \in \regionof 3 \colon r' \in
      \update(r^\bullet)) \land {}\\
    (\forall r \in \regionof 3 \neg \exists r^* \notin \regionof 2\colon r \in \update(r^*))
    \end{gather*}

    Further, we only make use of
    \begin{align*}
    (\forall r \in \regionof 3 \neg \exists r^* \notin \regionof 1\colon r \in \update(r^*))\quad\land\quad&
    (\forall r^\bullet \in \regionof 2 \exists r' \in \regionof 3 \colon r' \in \update(r^\bullet)),
    \end{align*}
    and the fact that $\regionof 1$ and $\regionof 2$ are distinct. Therefore, $\exists r^\mathit{dist} \colon
    r^\mathit{dist} \in \regionof 2 \land r^\mathit{dist} \notin \regionof 1$ (or vice versa).

    By simplifying
    \begin{gather*}
    (\exists r^\mathit{dist} \colon r^\mathit{dist} \in \regionof 2 \land
      r^\mathit{dist} \notin \regionof 1) \land {} \\
    (\forall r \in \regionof 3 \neg \exists r^* \notin \regionof 1\colon r \in
      \update(r^*)) \land {} \\
    (\forall r^\bullet \in \regionof 2 \exists r' \in \regionof 3 \colon r' \in \update(r^\bullet)),
    \end{gather*}
    we obtain
    \begin{gather*}
    (\exists r^\mathit{dist} \colon r^\mathit{dist} \in \regionof 2 \land
      r^\mathit{dist} \notin \regionof 1) \land {} \\
    (\forall r \in \regionof 3 \forall r^* \colon r^* \notin \regionof 1
      \rightarrow r \notin \update(r^*)) \land {} \\
    (\exists r' \in \regionof 3 \colon r' \in \update(r^\mathit{dist})),
    \end{gather*}
    which is clearly a contradiction, since $r^\mathit{dist} \notin \regionof 1
    \land \exists r' \in \regionof 3 \colon r' \in \update(r^\mathit{dist})$.
    \end{claimproof}
\end{itemize}

Finally, the marking~$M_F$ to be covered is constructed as $M_F = \unitof{\fin}$.
We have finished the construction of~$\netof \aut$, now we need to show that it
preserves the answer.

\begin{claim}\label{lem:rsa_emptiness_preserves_answer}
$\langof \aut \neq \emptyset$ iff the marking $M_F$ is coverable in $\netof
\aut$.
\end{claim}

\begin{claimproof}

\begin{enumerate}
    \item[$(\Rightarrow)$]
      Let $w \in (\Sigma \times \datadom)^*$ such that $w \in \langof\aut$.
      Moreover, assume that
      \begin{equation*}
        \rho\colon c_0 \stepofusing{\wordof 1} {t_1} c_1 \stepofusing{\wordof 2} {t_2}
    \ldots \stepofusing{\wordof n} {t_n} c_n
      \end{equation*}
      is an accepting run of~$\aut$ on~$w$.
      We will show that there exists a~sequence of firings
      \begin{equation*}
        \rho'\colon
        M_{\init} \fire{t'_{\init}}~~
        M_0 \fire{t'_1}
        M_1 \fire{t'_2} \ldots \fire{t'_n}
        M_n ~~\fire{t'_{\fin}} M_{\fin}
      \end{equation*}
      in $\netof \aut$ such that~$M_{\fin}$ covers~$M_F$.
      In particular, we construct the markings and transitions as follows:
      \begin{itemize}
        \item  $M_{\init} = \unitof{\init}$ and $t'_{\init} =
          \tuple{\unitof{\init}, \unitof{q_0}, \identity}$ for $c_0 =
          (q_0, f_0)$.
        \item  For all $0 \leq i \leq n$ with $c_i = (q_i, f_i)$, we set
          $M_i$ as follows:
          \begin{align*}
            M_i = {}&\set{\init \mapsto 0, \fin \mapsto 0, q_i \mapsto 1} \cup
            \set{q \mapsto 0 \mid q \in Q \setminus \set{q_i}} \cup {} \\
            &\Big\{\region \mapsto x \mid \region \subseteq \regs, x =
            \big|\cap_{r \in \region} f_i(r)\big|\Big\}
          \end{align*}
          Furthermore, $t'_i = \gamma(t_i, \regionof \guard)$ where $\regionof
          \guard = \set{r \mid d_i \in f_{i-1}(r)}$.

        \item  $M_{\fin}$ is as follows:
          \begin{align*}
            M_{\fin} = {}&\set{\init \mapsto 0, \fin \mapsto 1} \cup
            \set{q \mapsto 0 \mid q \in Q} \cup
            \set{\region \mapsto M_n(\region) \mid \region \subseteq \regs}
          \end{align*}
          and $t'_{\fin} =
          \tuple{\unitof{q_n}, \unitof{\fin}, \identity}$ for $c_n =
          (q_n, f_n)$.
      \end{itemize}
      Note that $M_F$ is covered by~$M_{\fin}$.

      We can now show by induction that $\rho'$ is valid, i.e., all firings are
      enabled and respect the transition relation.

    \item[$(\Leftarrow)$]
    Let $\rho \colon M_\init \fire{t_0}~~M_0 \fire{t_1} M_1 \fire{t_2} \ldots
    \fire{t_n} M_n~~\fire{t_\fin} M_\fin$ be a~run of~$\netof \aut$
    such that $M_\fin \geq M_F$, where $M_F$ is the final marking.
    We show that there exists a~sequence of transitions
    \begin{equation*}
    \rho'\colon
    c_0 \stepofusing{\wordof 1} {t'_1} c_1 \stepofusing{\wordof 2} {t'_2}
    \ldots \stepofusing{\wordof n} {t'_n} c_n
    \end{equation*}
    in $\aut$ on $w = \wordof{1}\wordof{2}\ldots\wordof{n}$, such that $c_n$ is a final
    configuration, and, therefore, $w \in \langof{\aut}$.
    First, we notice the following easy to see invariant of~$\netof \aut$, which
    holds for every $M_i$:
    \begin{equation}\label{eq:net_sim_invariant}
      \sum_{p \in Q \cup \set{\init,\fin}}\hspace{-5mm} M_i(p) = 1
    \end{equation}
    i.e., there is always exactly one token in any of the places in $Q \cup
    \set{\init, \fin}$.

    Let us now construct~$\rho'$ as follows:
    \begin{itemize}
      \item  $c_0 = (q_0, f_0)$ is constructed such that $q_0$ is picked to be
        the state $q_0 \in Q$ with $M_0(q_0) = 1$ (this is well defined due
        to \cref{eq:net_sim_invariant}).
      \item  For all $1 \leq i \leq n$, the transition~$t'_i$ is picked to be
        the transition such that $t_i \in \gamma(t'_i, \regionof\guard)$ for
        some region~$\regionof\guard$.
        The data value~$d_i$ of~$\wordof i$ is then chosen to be compatible with the
        guard of~$\regionof \guard$, i.e., $d_i \in \cap_{r \in \regionof
        \guard} f_{i-1}(r)$ and $d_i \notin \cup_{r \in \regs \setminus
        \regionof \guard}f_{i-1}(r)$.
    \end{itemize}

    It can then be shown by induction that, for all $0 \leq i < n$, the
    following holds:
    \begin{enumerate}
      \item  $c_i = (q_i, f_i)$ where~$q_i$ is the (exactly one) state such that
        $M_i(q_i) = 1$ and
      \item  the transition $t_{i+1}$ is enabled.
    \end{enumerate}
    We can then conclude that, since the last firing in~$\rho$ was $M_n
    \fire{t_\fin} M_\fin$, then, from the construction of~$\netof \aut$, it
    holds that $M_n(q_f) = 1$ for $q_f \in F$ and so~$\rho'$ is accepting.
    \claimqed
\end{enumerate}
\end{claimproof}

\cref{lem:rsa_emptiness_preserves_answer} and the observation that $\netof \aut$
is single-exponentially larger than~$\aut$ conclude the proof ($\fomega$ is
closed under primitive-recursive reductions).
\end{proof}

\begin{lemma}\label{lem:rsa-emptiness-fomega-hard}
The emptiness problem for $\rsa$ is $\fomega$-hard.
\end{lemma}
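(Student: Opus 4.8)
The plan is to reduce coverability in transfer Petri nets, which is $\fomega$-hard by \cref{prop:tpn-coverability-fomega-complete}, to emptiness of $\rsa$s. Let $\net = (P,T,M_0)$ be a~TPN and $M_F$ a~marking. By a~standard preprocessing (add a~fresh source place $\init$ with a~transition consuming one token from $\init$ and producing $M_0$, and a~fresh sink place $\fin$ with a~transition consuming $M_F$ and producing one token in $\fin$) we may assume that $M_0 = \unitof{\init}$ and $M_F = \unitof{\fin}$. I would build an $\rsa$ $\autof\net$ with one register $\regof p$ for every place $p \in P$, a~constant number of auxiliary registers, and a~distinguished state $\qmain$ visited before and after the simulation of each firing of a~transition of $\net$. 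The encoding invariant is that at every $\qmain$-configuration $(\qmain, f)$ the registers $\set{\regof p \mid p \in P}$ are pairwise disjoint and $\lenof{f(\regof p)}$ is exactly the number of tokens in~$p$; individual tokens are thus represented by pairwise distinct data values. The goal is that $\langof{\autof\net} \neq \emptyset$ iff $\unitof{\fin}$ is coverable in~$\net$.

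First I would assemble three ``protogadgets'', each a~path of transitions reading arbitrary $\Sigma$-symbols whose labels are irrelevant:
\begin{inparaenum}[(i)]
  \item \emph{adding a~fresh token to $p$} is one transition reading a~value~$d$ with guard $d \notin \regof{p'}$ for all $p' \in P$ and update $\regof p \gets \regof p \cup \set\inp$, so $d$ is globally fresh and disjointness is preserved;
  \item \emph{transferring along $\transfer$} is one transition with the simultaneous update $\regof p \gets \bigcup\set{\regof{p'} \mid \transfer(p') = p}$ for all $p \in P$, which preserves disjointness because $\transfer$ is a~function, and matches the TPN rule (places outside the image of $\transfer$ get emptied, the case $\transfer = \identity$ is a~no-op); and
  \item \emph{lossily removing a~token from $p$} first clears an~auxiliary register $\regof\tmp$, then repeats (nondeterministically often) a~transition reading~$d$ with $d \in \regof p \land d \notin \regof\tmp$ and update $\regof\tmp \gets \regof\tmp \cup \set\inp$, and finally takes an~exit transition reading~$d$ with $d \in \regof p \land d \notin \regof\tmp$ and update $\regof p \gets \regof\tmp$; since this exit transition witnesses a~value of $\regof p$ that is not kept, the new $\regof p$ is a~\emph{proper} subset of the old one --- possibly ``one fewer'', possibly empty --- and the gadget can be traversed only if $\regof p$ is non-empty.
\end{inparaenum}
A~transition $t = \tuple{\intrans, \outrans, \transfer}$ of $\net$ is simulated by a~gadget forming a~loop on $\qmain$ that performs, in this order, $\intrans(p)$ lossy removals from $\regof p$ for each~$p$, then one transfer step, then $\outrans(p)$ fresh-token additions to $\regof p$ for each~$p$ (each also tested against values added earlier in the same phase), and returns to $\qmain$. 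The initial part of $\autof\net$ reads one fresh value into $\regof{\init}$ and enters $\qmain$; the final part leads from $\qmain$ to a~single accepting state by reading one~$d$ with $d \in \regof{\fin}$, i.e.\ it checks that $\regof{\fin}$ is non-empty, and $\langof{\autof\net} \neq \emptyset$ iff that state is reachable along some run.

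For correctness I would first observe that $\autof\net$ faithfully simulates the ``lossy'' TPN $\net^{\mathit{lossy}}$ obtained from $\net$ by declaring $t$ enabled exactly when $\intrans \leq M$ but letting it remove \emph{at least} $\intrans(p)$ tokens from each input place~$p$, the transfer-and-output part of the effect being unchanged: using the disjointness invariant, the gadget for $t$ started in $(\qmain, f)$ can reach $(\qmain, f')$ precisely when the marking $p \mapsto \lenof{f'(\regof p)}$ is a~$\net^{\mathit{lossy}}$-successor of $p \mapsto \lenof{f(\regof p)}$ via~$t$ (the removal gadget can attain any surviving size in the admissible range, and a~genuine TPN step is the special case where each lossy removal drops exactly one element). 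Hence $\langof{\autof\net} \neq \emptyset$ iff $\net^{\mathit{lossy}}$ has a~run from $M_0 = \unitof{\init}$ to some marking putting a~token in $\fin$. The final ingredient is monotonicity of TPNs: a~routine induction shows that for every $\net^{\mathit{lossy}}$ run $M_0 \fire{t_1} \cdots \fire{t_n} M'$ there is a~genuine TPN run $M_0 \fire{t_1} \cdots \fire{t_n} M$ with $M \geq M'$ pointwise (smaller markings only keep transitions enabled and keep outputs dominated, and transfer is monotone), while conversely every genuine run is already a~$\net^{\mathit{lossy}}$ run; therefore $\net^{\mathit{lossy}}$ can put a~token in $\fin$ iff $\unitof{\fin}$ is coverable in $\net$, iff (unwinding the preprocessing) $M_F$ is coverable in the original net. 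The reduction is elementary (polynomial if markings are given in unary), hence primitive recursive, and $\fomega$ is closed under primitive-recursive reductions, so $\rsa$ emptiness is $\fomega$-hard.

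The main obstacle is token removal: an $\rsa$ cannot delete an~individual value from a~register (indeed, adding such an~operation makes emptiness undecidable, cf.\ \cref{sec:rsa-extensions}), so the simulation must be lossy, and the real work is in certifying that this is harmless --- that coverability, not reachability, is the quantity preserved, and that the monotonicity argument still goes through in the presence of transfer arcs.
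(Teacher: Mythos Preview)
Your proposal is correct and follows essentially the same approach as the paper: reduce TPN coverability to $\rsa$ non-emptiness by simulating places with registers, tokens with fresh data values, and transitions via gadgets built from add/transfer/lossy-remove protogadgets, relying on TPN monotonicity so that lossiness preserves coverability. Your realisation of the transfer step as a single simultaneous update is a slight simplification over the paper, which routes tokens through auxiliary primed registers; this works because $\rsa$ updates read the old register contents, and your explicit ``lossy TPN plus monotonicity'' formulation of the correctness argument is cleaner than the paper's invariant-by-induction sketch.
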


\begin{proof}
The proof is based on a~reduction of coverability in TPNs (which is
$\fomega$-complete) to non-emptiness of $\rsa$s.
Intuitively, given a~TPN~$\net$, we will construct the $\rsa$ $\autof \net$
simulating~$\net$, which will have the following structure:
\begin{itemize}
  \item  There will be the state~$\qmain$, which will be active before and after
    simulating the firing of TPN transitions.
  \item  Each place of~$\net$ will be simulated by a~register of~$\autof \net$;
    every token of~$\net$ will be simulated by a~unique data value.
  \item  For every TPN transition, $\autof \net$ will contain a~\emph{gadget}
    that transfers data values between the registers representing the places
    active in the TPN transition.
    The gadget will start in~$\qmain$ and end also in~$\qmain$.
  \item  Coverability of a~marking will be simulated by another gadget connected
    to~$\qmain$ that will try to remove the number of tokens given in the
    marking from the respective places and arrive at the single final
    state~$\qfin$.
\end{itemize}

Formally, let $\net = (P, T, M_0)$ with $P = \{p_1, \ldots, p_n\}$ be a~TPN.
W.l.o.g.\ we can assume that $M_0$ contains a~single token in the place~$p_1$,
i.e., $M_0 = \{p_1 \mapsto 1, p_2 \mapsto 0, \ldots, p_n \mapsto 0\}$.
We will show how to construct the $\rsa$ $\autof \net = (Q, \regs, \Delta,
\{\qinit\}, \{\qfin\})$ over the unary alphabet $\Sigma = \{a\}$ such that
a~marking~$M_f$ is coverable in~$\net$ iff the language of~$\autof \net$ is
non-empty.
The set of registers of $\autof \net$ will be the set $\regs = \{\regof
\inp, \regof \tmp\} \cup \{\regof p, \regof{p'} \mid p \in P\}$.

\begin{figure}[t]
\centering
\begin{subfigure}[b]{0.43\linewidth}
\begin{center}
\scalebox{0.95}{
  \begin{tikzpicture}
  \tikzset{
    ->, 
    >=stealth',
    initial text=$ $, 
    node distance=30mm,
  }

  \node[initial,state] (q1) {$q_1$};
  \node[state] (q2) [right of=q1] {$q_2$};
  \node[accepting,state] (q3) [right of=q2] {$q_3$};

 \draw (q1) edge[below] node {$\pictrans a {\inp \in \regof p} {\regof \inp
  \gets \{\inp\} \\ \regof \tmp \gets \emptyset}$} (q2);
 \draw (q2) edge[loop above] node {$\pictrans a {\inp \in \regof p \\ \inp
  \notin \regof \inp} {\regof \tmp \gets \regof \tmp \cup \{\inp\}}$} (q2);
 \draw (q2) edge[below] node {$\pictrans a {} {\regof p \gets \regof \tmp}$} (q3);
\end{tikzpicture}
}
\end{center}
\caption{The $\lossyremgadgetof{p}$ protogadget.}
\label{fig:protogadget_lossyrem}
\end{subfigure}\\[4mm]
\hfill
\begin{subfigure}[b]{0.37\linewidth}
\begin{center}
\scalebox{0.95}{
  \begin{tikzpicture}
  \tikzset{
    ->, 
    >=stealth',
    initial text=$ $, 
    node distance=30mm,
  }

  \node[initial,state] (q1) {$q_1$};
  \node[accepting,state] (q2) [right of=q1] {$q_2$};

 \draw (q1) edge[below] node {$\pictrans a {} {\regof{p_1} \gets \emptyset \\
  \regof{p_2} \gets \regof{p_1} \cup \regof{p_2}}$} (q2);
\end{tikzpicture}
}
\end{center}
\caption{The $\movegadgetof{p_1, p_2}$ protogadget.}
\label{fig:protogadget_move}
\end{subfigure}
\hfill
\begin{subfigure}[b]{0.37\linewidth}
\begin{center}
\scalebox{0.95}{
  \begin{tikzpicture}
  \tikzset{
    ->, 
    >=stealth',
    initial text=$ $, 
    node distance=30mm,
  }

  \node[initial,state] (q1) {$q_1$};
  \node[accepting,state] (q2) [right of=q1] {$q_2$};

 \draw (q1) edge[below] node {$\pictrans a {\{\inp \notin r\}_{r \in \regs}}
  {\regof p \gets \regof p \cup \{\inp\}\\~}$} (q2);
\end{tikzpicture}
}
\end{center}
\caption{The $\newtokengadgetof{p}$ protogadget.}
\label{fig:protogadget_newtoken}
\end{subfigure}
\caption{Protogadgets used in the construction of $\rsaof \net$.}
\end{figure}

Let us now define the following \emph{protogadgets}, which we will later use
for creating a~\emph{gadget} for each TPN transition and a~gadget for doing the
coverability test.
We define the following protogadgets:
\begin{enumerate}
  \item  The \emph{Lossy Removal} protogadget, which simulates a~(lossy)
    removal of one token from a~place~$p$ is the $\rsa$ defined as
    $\lossyremgadgetof{p} = (\{q_1, q_2, q_3\}, \regs, \Delta', \{q_1\}, \{q_3\})$
    where~$\Delta'$ contains the following three transitions (cf.\
    \cref{fig:protogadget_lossyrem}):
    \begin{equation}
      \Delta' = \left\{
        \begin{array}{c}
        \trans{q_1}{a}{\{\regof p\},\emptyset,\{\regof \inp \mapsto \{\inp\},
          \regof \tmp \mapsto \emptyset\}}{q_2},\\
        \trans{q_2}{a}{\{\regof p\}, \{\regof \inp\}, \{\regof \tmp \mapsto
          \{\regof \tmp, \inp\}\}}{q_2},\\
        \trans{q_2}{a}{\emptyset, \emptyset, \{\regof p \mapsto \{\regof \tmp\}\}}{q_3}
        \end{array}
      \right\}
    \end{equation}

    Intuitively, the protogadget stores the data value
    to be removed from $p$ in a special register $r_\inp$.
    Next, it simulates the calculation of the
    difference of $r_p$ and $r_\inp$.
    This is done by accumulating the values
    which are present in $r_p$ and are not present in
    $r_\inp$ into $r_\tmp$.

    Since some values may get ``lost'', and disappear because of not being added to
    the accumulated difference, this protogadget is
    considered \emph{lossy}.

  \item  The \emph{Move} protogadget, which simulates \emph{moving} all tokens
    from a~place~$p_1$ to a~place~$p_2$, is the following $\rsa$ (also depicted
    in \cref{fig:protogadget_move}):
    \begin{equation*}
      \movegadgetof{p_1, p_2} =
      (\{q_1, q_2\}, \regs, \{\trans{q_1}{a}{\emptyset, \emptyset, \{\regof{p_1}
      \mapsto \emptyset, \regof{p_2} \mapsto \{\regof{p_1},
      \regof{p_2}\}\}}{q_2}\}, \{q_1\}, \{q_2\}).
    \end{equation*}
    Intuitively, the protogadet empties out the register
    which represents the place $p_1$. Its previous
    value is assigned to register representing the place
    $p_2$ in union with its value.

  \item  The \emph{New Token} protogadget, which simulates adding a~token to
    a~place~$p$, is defined as follows (depiction is in
    \cref{fig:protogadget_newtoken}):
    \begin{equation*}
      \newtokengadgetof{p} =
      (\{q_1, q_2\}, \regs, \{\trans{q_1}{a}{\emptyset, \regs, \{\regof p
      \mapsto \{\regof p, \inp\}\}}{q_2}\}, \{q_1\}, \{q_2\}).
    \end{equation*}
    Intuitively, the protogadget adds the unique data
    value from the input tape into the register
    representing the place $p$. The uniqueness
    of the data value is ensured by the $\guardnotin$,
    which requires that the $\inp$ does not belong to
    any of the registers from $\regs$.

\end{enumerate}

For convenience, we will use the following notation.
Let $\autof 1 = (Q_1, \regs, \Delta_1, \{q^I_1\}, \{q^F_1\})$ and $\autof 2 =
(Q_2, \regs, \Delta_2, \{q^I_2\}, \{q^F_2\})$ be a~pair of $\rsa$s with a~single
initial state and a~single final state.
We will use $\autof 1 \concat \autof 2$ to denote the $\rsa$ $(Q_1 \uplus
Q_2, \regs, \Delta_1 \cup \{\transeps{q^F_1}{q^I_2}\} \cup \Delta_2, \{q^I_1\},
\{q^F_2\})$.
Moreover, for $n \in \natzero$, we use $\iterof{\autof 1}{n}$ to denote the
$\rsa$ defined inductively as
\begin{align*}
  \iterof{\autof 1}{0} = {} & (\{q\}, \regs, \emptyset, \{q\}, \{q\}), \\
  \iterof{\autof 1}{i+1} = {} &\iterof{\autof 1}{i} \concat \autof 1.
\end{align*}
Intuitively, $\iterof{\autof 1}{n}$ is a~concatenation of~$n$ copies of~$\autof
1$.

For each TPN transition $t = \tuple{\intrans, \outrans, \transfer}$, we then create
the \emph{gadget} $\rsa$ $\autof t$ in several steps.
\begin{enumerate}
  \item  First, we transform~$\intrans$ into the $\rsa$ $\autof \intrans =
    \autof{\intrans(p_1)} \concat \ldots \concat \autof{\intrans(p_n)}$ where
    every~$\autof{\intrans(p_i)}$ is defined as $\autof{\intrans(p_i)} =
    \iterof{\lossyremgadgetof{p_i}}{\intrans(p_i)}$, i.e., it is a~concatenation
    of~$\intrans(p_i)$ copies of $\lossyremgadgetof{p_i}$.

  \item  Second, from~$\outrans$ we create the $\rsa$ $\autof \outrans =
    \autof{\outrans(p_1)} \concat \ldots \concat \autof{\outrans(p_n)}$
    with~$\autof{\outrans(p_i)}$ defined as $\autof{\outrans(p_i)}
    = \iterof{\newtokengadgetof{p_i}}{\outrans(p_i)}$, i.e., it is
    a~concatenation of~$\outrans(p_i)$ copies of $\newtokengadgetof{p_i}$.

  \item  Third, from~$\transfer$ we obtain the $\rsa$ $\autof \transfer =
    \autof{\transfer(p_1)} \concat \ldots \concat \autof{\transfer(p_n)} \concat
    \autof{\unprimeof{p_1}} \concat \ldots \concat \autof{\unprimeof{p_n}}$ such
    that $\autof{\transfer(p_i)} = \movegadgetof{\regof{p_i},
    \regof{p_j'}}$ with $p_j = \transfer(p_i)$ and $\autof{\unprimeof{p_i}} =
    \movegadgetof{p'_i, p_i}$.
    Intuitively, $\autof \transfer$ first moves the contents of all registers
    according to $\transfer$ to primed instances of the target registers (in
    order to avoid mix-up) and then unprimes the register names.

  \item  Finally, we combine the $\rsa$s created above into the single gadget
    $\autof t = \autof \intrans \concat \autof \transfer \concat \autof
    \outrans$.

\end{enumerate}

The initial marking will be encoded by a~gadget that puts one new data value in
the register representing the place~$p_1$.
For this, we construct the $\rsa$ $\autof{M_0} = \newtokengadgetof{p_1}$ and
rename its initial state to~$\qinit$.

The last ingredient we need is to create a~gadget that will encode the
marking~$M_f$, whose coverability we are checking.
For this, we construct the gadget $\autof{M_f} = \autof{M_f(p_1)} \concat \ldots
\concat \autof{M_f(p_n)}$ where every~$\autof{M_f(p_i)}$ is defined as
$\autof{M_f(p_i)} = \iterof{\lossyremgadgetof{p_i}}{M_f(p_i)}$, i.e., it is
a~concatenation of~$M_f(p_i)$ copies of $\lossyremgadgetof{p_i}$.
We rename the final state of~$\autof{M_f}$ to~$\qfin$.
W.l.o.g.\ we assume that the set of states of all constructed gadgets are
pairwise disjoint.

We can now finalize the construction:
$\autof \net$ is obtained as the union of the following $\rsa$s:
$\autof{M_0} = (Q_{M_0},
\regs, \Delta_{M_0}, \{\qinit\}, \{q^F_{M_0}\})$, $\autof{M_f} = (Q_{M_f},
\regs, \Delta_{M_f}, \{q^I_{M_f}\}, \{\qfin\})$,
and~$\autof t = (Q_t, \regs, \Delta_t, \{q^I_t\}, \{q^F_t\})$ for every $t \in
T$, connected to the state~$\qmain$, i.e., $\autof \net = (Q, \regs, \Delta,
\{\qinit\}, \{\qfin\})$ where
\begin{itemize}
  \item  $Q =\{\qmain\} \cup Q_{M_0} \cup Q_{M_f} \cup \bigcup_{t \in T} Q_t$ and
  \item  $\Delta = \Delta_{M_0} \cup \Delta_{M_f} \cup
     \{ \transeps{q^F_{M_0}}{\qmain}, \transeps{\qmain}{q^I_{M_f}}\}
    \cup {}$\\
    \hspace*{8mm}$\bigcup_{t \in T} (\Delta_t \cup \{\transeps \qmain {q^I_t},
    \transeps{q^F_t} \qmain\})$.
\end{itemize}

\begin{claim}\label{lem:tpn_coverabilityu_preserves_answer}
The marking~$M_F$ is coverable in $\net$ iff $\langof{\autof \net} \neq
\emptyset$.
\end{claim}

\begin{claimproof}
\begin{enumerate}
    \item[$(\Rightarrow)$]
      Let there be the following run of~$\net$:
      \begin{equation*}
        \rho\colon
        M_0 \fire{t_1}
        M_1 \fire{t_2} \ldots \fire{t_n}
        M_n
      \end{equation*}
      such that~$M_n$ covers~$M_F$.
      We will show that there exists a~word $w \in (\Sigma \times \datadom)^*$
      and a~run
      \begin{align*}
        \rho'\colon&
        c_{(\init,0)} \stepofusing{\wordof 1} {t'_{(\init,1)}}
        c_{(\init,1)} \stepofusing{\wordof 2} {t'_{(\init,2)}}
        \ldots
        c_{(\init, k_\init)} \stepofusing{\wordof{i_0}} {t'_{(0, 0)}}
        \hspace*{25mm}\expl{\textrm{initialization}}
        \\
        &c_{(0,0)}
        \stepofusing{\wordof {i_0 + 1}} {t'_{(0,1)}}
        c_{(0,1)}
        \ldots
        c_{(0,k_1)}
        \stepofusing{\wordof{i_1}} {t'_{(1, 0)}}
        c_{(1,0)}
        \stepofusing{\wordof {i_1 + 1}} {t'_{(1,1)}}
        c_{(1,1)}
        \ldots
        c_{(1,k_1)}
        \ldots
        c_{(n-1,k_{n-1})}
        \stepofusing{\wordof {i_n}} {t'_{(n,0)}} \\
        &c_{(n, 0)}
        \stepofusing{\wordof {i_n + 1}} {t'_{(\fin,1)}}
        c_{(\fin, 1)}
        \ldots
        \stepofusing{\wordof {i_{\fin}}} {t'_{(\fin,k_\fin)}}
        c_{(\fin, k_\fin)}
        \hspace*{41mm}\expl{\textrm{finalization}}
      \end{align*}
      of~$\autof \net$ on~$\word$ such that $q \in F$ for $c_{(\fin, k_\fin)} =
      (q, \cdot)$.
      The run~$\rho'$ will be constructed to preserve the following invariant
      for each $0 \leq i \leq n$:
      \begin{equation}\label{eq:net_invariant}
        c_{(i, 0)} = (\qmain, f_i) \quad \text{such that} \quad \forall p
          \in P\colon|f_i(r_p)| = M_i(p).
      \end{equation}
      Therefore, configurations with state~$\qmain$ represent the TPN's state
      after (or before) firing a~transition.
      Firing a~transition~$t$ is simulated by going to the gadget for~$t$
      in~$\autof \net$ and picking input data values such that the run returns
      to~$\qmain$ in as many steps as possible (this is to take the run through
      the $\lossyremgadget$ protogadgets that preserves the precise value of
      the marking).
      By induction on $0 \leq i \leq n$, we can show that the
      invariant in \cref{eq:net_invariant} is preserved (the base case is proved
      by observing that the \emph{initialization} part is correct).

    \item[$(\Leftarrow)$]
      Let $w \in \langof{\autof \net}$ and
      \begin{align*}
        \rho\colon&
        c_{(\init,0)} \stepofusing{\wordof 1} {t'_{(\init,1)}}
        c_{(\init,1)} \stepofusing{\wordof 2} {t'_{(\init,2)}}
        \ldots
        c_{(\init, k_\init)} \stepofusing{\wordof{i_0}} {t'_{(0, 0)}}
        \hspace*{25mm}\expl{\textrm{initialization}}
        \\
        &c_{(0,0)}
        \stepofusing{\wordof {i_0 + 1}} {t'_{(0,1)}}
        c_{(0,1)}
        \ldots
        c_{(0,k_1)}
        \stepofusing{\wordof{i_1}} {t'_{(1, 0)}}
        c_{(1,0)}
        \stepofusing{\wordof {i_1 + 1}} {t'_{(1,1)}}
        c_{(1,1)}
        \ldots
        c_{(1,k_1)}
        \ldots
        c_{(n-1,k_{n-1})}
        \stepofusing{\wordof {i_n}} {t'_{(n,0)}} \\
        &c_{(n, 0)}
        \stepofusing{\wordof {i_n + 1}} {t'_{(\fin,1)}}
        c_{(\fin, 1)}
        \ldots
        \stepofusing{\wordof {i_{\fin}}} {t'_{(\fin,k_\fin)}}
        c_{(\fin, k_\fin)}
        \hspace*{41mm}\expl{\textrm{finalization}}
      \end{align*}
      be an accepting run of~$\autof \net$ on~$w$ such that for each $0 \leq i
      \leq n$, it holds that $c_{(i,0)} = (\qmain, \cdot)$---this follows from
      the structure of~$\autof \net$.
      We will construct a~run
      \begin{equation*}
        \rho'\colon
        M_0 \fire{t_1}
        M_1 \fire{t_2} \ldots \fire{t_n}
        M_n
      \end{equation*}
      where each~$t_i$ is the TPN's transition corresponding to the gadget
      which the corresponding part of~$\rho$ traversed.
      For all $0 \leq i \leq n$, the following invariant will hold:
      \begin{equation}\label{eq:aut_invariant}
        \forall p \in P\colon|f_i(r_p)| \leq M_i(p).
      \end{equation}
      We note that the run~$\rho$ might not have been the ``\emph{most
      precise}'' run of~$\autof \net$, so the markings in the TPN run
      overapproximate the contents of $\autof \net$'s registers in~$\rho$.
      The invariant can be proved by induction.
      \claimqed

%
%
%
%
%
%
\end{enumerate}

\end{claimproof}

\cref{lem:tpn_coverabilityu_preserves_answer} and the observation that $\autof
\net$
is single-exponentially larger than~$\net$ (assuming binary encoding of the
numbers in~$\net$) conclude the proof ($\fomega$ is closed under primitive-recursive reductions).
\end{proof}

From \cref{lem:rsa-emptiness-in-fomega} and
\cref{lem:rsa-emptiness-fomega-hard}, we immediately obtain
\cref{thm:rsa-emptiness}.

\thmRsaEmptiness*

\vspace{-0.0mm}
\subsection{Proofs of Closure Properties}\label{sec:proofs_closure_properties}
\vspace{-0.0mm}

\thmRsaClosure*

\begin{proof}
  The proofs for closure under union and intersection are standard: for two
  $\rsa$s $\autof 1 = (Q_1, \regs_1, \Delta_1, I_1, F_1)$ and $\autof 2 = (Q_2,
  \regs_2, \Delta_2, I_2, F_2)$  with disjoint sets of states and registers, the
  $\rsa$ $\autof \cup$ accepting the union of their languages is obtained as
  $\autof \cup = (Q_1 \cup Q_2, \regs_1 \cup \regs_2, \Delta_1 \cup
  \Delta_2, I_1 \cup I_2, F_1 \cup F_2)$.
  Similarly, $\autof \cap$ accepting their intersection is constructed as
  the product
  $\autof \cap = (Q_1 \times Q_2, \regs_1 \cup \regs_2, \Delta', I_1 \times I_2,
  F_1 \times F_2)$ where
  $$
    \trans {(s_1, s_2)} a {\guardin_1 \cup \guardin_2,
    \guardnotin_1 \cup \guardnotin_2, \update_1 \cup \update_2} {(s'_1, s'_2)}
    \in \Delta'
  $$
  \begin{center}
    iff
  \end{center}
  $$
    \trans {s_1} a {\guardin_1, \guardnotin_1, \update_1} {s'_1} \in \Delta_1
    \quad\text{and}\quad
    \trans {s_2} a {\guardin_2, \guardnotin_2, \update_2} {s'_2} \in \Delta_2.
  $$
  Correctness of the constructions is clear.

  For showing the non-closure under complement, consider the
  language~$\langnegallrep$ from \cref{ex:lang_negallrep}, which can be accepted by
  $\rsa$.
  Let us show that for the complement of the language,
  namely, the language
  \begin{equation*}
    \langallrep = \set{\word \mid \forall i\exists j\colon i\neq j \land
    \projdataof{\wordof i} = \projdataof{\wordof j}}
  \end{equation*}
  where all data values appear at least twice, there is no $\rsa$ that can
  accept it.

  Our proof is a~minor modification of the proof of Proposition~3.2
  in~\cite{Figueira12}.
  In particular, we show that if~$\langallrep$ were expressible using an~$\rsa$,
  then we could construct an~$\rsa$ encoding accepting runs of a~Minsky
  machine.
  Since emptiness of an RsA is decidable (cf.~\cref{thm:rsa-emptiness}) and
  emptiness of a~Minsky machines is not, we would then obtain a~contradiction.

  Let us consider a~Minsky machine~$\minsky$ with two counters and instructions of the
  form $(q, \ell, q')$ where~$q$ and~$q'$ are states of~$\minsky$ and
  $\ell \in \set{\minskyinc, \minskydec, \minskyifzero} \times \set{1,2}$ is the
  corresponding counter operation.
  A~run of~$\minsky$ is a~sequence of instructions (which can be viewed as
  $\Sigma$-symbols) together with a~data value $d \in \datadom$ assigned to
  every symbol.
  The data values are used to match increments with decrements of the same
  counter (intuitively, we are trying to say that ``\emph{each increment is
  matched with a decrement}'', in order to express that the value of the counter
  is zero).
  For instance, consider the following run:
  \begin{equation}
    \begin{array}{c}
    (q_1, \minskyincof 1, q_2)\\
      12
    \end{array}
    \hspace*{-2mm}
    \begin{array}{c}
    (q_2, \minskyincof 1, q_3)\\
      42
    \end{array}
    \hspace*{-2mm}
    \begin{array}{c}
    (q_3, \minskydecof 1, q_3)\\
      12
    \end{array}
    \hspace*{-2mm}
    \begin{array}{c}
    (q_3, \minskyincof 2, q_2)\\
      17
    \end{array}
    \hspace*{-2mm}
    \begin{array}{c}
    (q_2, \minskydecof 1, q_1)\\
      42
    \end{array}
    \hspace*{-2mm}
    \begin{array}{c}
    (q_1, \minskyifzeroof 1, q_4)\\
      7
    \end{array}
  \end{equation}
  Here, the first increment of counter~1 is matched with the first decrement of
  the counter (both having data value~12) and the second increment of counter~1
  is matched with the second decrement of the counter (both having data
  value~42).
  Since all increments of the counter are uniquely matched with a~decrement, the
  test at the end is satisfied so~$\minsky$ would accept (we assume~$q_4$ is
  a~final state).
  To can accept such words, we can construct an automaton that checks the
  following properties of the input word:
  \begin{enumerate}
    \item  The first instruction is of the form~$(q_1, \cdot, \cdot)$ for~$q_1$
      being the initial state of~$\minsky$.
      \label{point:minsky:starting}
    \item  Each instruction of the form~$(\cdot, \cdot, q_i)$ is followed by an
      instruction of the form~$(q_i, \cdot, \cdot)$.
      \label{point:minsky:sequence}
    \item  All increments have different data values and all decrements have
      different data values.
      \label{point:minsky:different}
    \item  Between every two $(\cdot, \minskyifzeroof i, \cdot)$ instructions
      (or between the start and the first such an~$\minskyifzeroof i$ instruction),
      \begin{enumerate}
        \item  every $(\cdot, \minskydecof i, \cdot)$ needs to be preceded by
          an~$(\cdot, \minskyincof i, \cdot)$ instruction with the same data
          value and
          \label{point:minsky:dec_match}
        \item  every $(\cdot, \minskyincof i, \cdot)$ needs to be followed by
          a~$(\cdot, \minskydecof i, \cdot)$ instruction with the same data
          value.
          \label{point:minsky:inc_match}
      \end{enumerate}
  \end{enumerate}
  Properties~\ref{point:minsky:starting} and~\ref{point:minsky:sequence} can
  be easily expressed using an NFA and, therefore, also using a~$\drsa$.
  Property~\ref{point:minsky:different} is easily expressible using an~$\rsa$
  (in fact, using a~$\drsa$) that collects data values of increments and
  decrements of each counter in registers (we need two registers for every
  counter).
  Property~\ref{point:minsky:dec_match} is also expressible using an~$\rsa$
  (again, using a~$\drsa$) that collects the data values of decrements and
  whenever it reads an increment, it checks whether it has seen the increment's
  data value before.

  Let us now focus on Property~\ref{point:minsky:inc_match}.
  The negation of this property would be ``\emph{there is an increment not
  followed by a~decrement with the same data value}''.
  This negated property is essentially captured by the
  language~$\langnegallrep$ and so it is expressible using~$\rsa$ (in fact, it
  can be expressed by an~$\nra$ with guessing; or by a~simple $\nra$ provided
  that we change the accepted language to be prepended by a sequence of data
  values that will be used in the run, separated from the run by a~delimiter).
  Therefore, if an~$\rsa$ could accept the complement of~$\langnegallrep$, i.e.,
  the language~$\langallrep$, then we would be able to solve the emptiness
  problem of a~Minsky machine, which is a~contradiction.
\end{proof}

\thmRsanClosure*

\begin{proof}
  The proof of closure under union is the same as in the proof of
  \cref{thm:rsa-closure} with the exception that the results uses only registers
  $\regs_1$ (we assume $|\regs_1 | = n$): all references to registers
  $r \in \regs_2$ are changed to references to $f(r)$ where $f\colon \regs_2 \to
  \regs_1$ is an injection.

  To show non-closure under intersection, consider the two languages
  $$
  \lang^A_n = \set{w \mid \forall i < n\colon \projdataof{w_i} =
  \projdataof{w_{|w| - i + 1}}}
  ~
  \text{and}
  ~
  \lang^B_n = \set{w \mid \forall n \leq i < 2n\colon \projdataof{w_i} =
  \projdataof{w_{|w| - i + 1}}}
  $$
  Intuitively, $\lang^A_n$~is the language of words where the first~$n$ data
  values in the word are repeated (in the reverse order) at the end of the word
  and $\lang^B_n$ is the language of words where the $(n+1)$-th to $2n$-th data
  values are repeated (also in the reverse order) at the $2n$-th to $(n+1)$-th
  position from the end.
  Both languages can be expressed via~$\nraof n$, and therefore also via~$\rsaof n$.
  Their intersection is the language
  $\lang^{AB}_n = \set{w \mid \forall i < 2n\colon \projdataof{w_i} =
  \projdataof{w_{|w| - i + 1}}}$, which is the same as~$\lang^A_{2n}$ and clearly needs $2n$~registers.

Non-closure under complement follows from \cref{thm:rsa-closure} (its proof uses $\rsaof 1$).
\end{proof}

\thmDrsaClosure*

\begin{proof}
The proof of closure of $\drsa$ under union is standard.
Let $\autof 1$ and $\autof 2$ be two complete $\drsa$s, such that
$\autof 1 = (Q_1, \regs_1, \Delta_1, I_1, F_1)$ and $\autof 2 = (Q_2,
  \regs_2, \Delta_2, I_2, F_2)$, and their sets of states and registers
  are disjoint. The $\drsa$ $\autof \cup$ accepting the union of their
  languages is obtained as $\autof \cup = (Q_1 \times Q_2, \regs_1 \cup \regs_2, \Delta', I_1 \times I_2,
  F')$ where
  $$
    \trans {(s_1, s_2)} a {\guardin_1 \cup \guardin_2,
    \guardnotin_1 \cup \guardnotin_2, \update_1 \cup \update_2} {(s'_1, s'_2)}
    \in \Delta'
  $$
  \begin{center}
    iff
  \end{center}
  $$
    \trans {s_1} a {\guardin_1, \guardnotin_1, \update_1} {s'_1} \in \Delta_1
    \quad\text{and}\quad
    \trans {s_2} a {\guardin_2, \guardnotin_2, \update_2} {s'_2} \in \Delta_2,
  $$ and $F' = \{(q_1, q_2) \mid q_1 \in F_1 \lor q_2 \in F_2\}$.

The construction of the $\drsa$ $\autof \cap =$ $(Q_1 \times Q_2, \regs_1 \cup \regs_2, \Delta', I_1 \times I_2,
  F'_\cap)$ accepting the intersection of
$\langof{\autof 1}$ and $\langof{\autof 2}$ is similar to the construction
of $\autof \cup$, with the exception of~$F'_\cap$, which is obtained as
$F'_\cap = F_1 \times F_2$.

The complement of $\drsa$ is obtained in the standard way
by completing it and swapping final and non-final states.
Since the automaton is already deterministic, the
correctness of the construction is obvious.
\end{proof}

\thmDrsanClosure*

\begin{proof}
  The closure under complement is trivial (complete the $\drsa$ and swap final
  and non-final states; no new register is introduced).

  To show that $\drsaof n$ is not closed under intersection, we use languages
  $\lang^A_n$ and $\lang^B_n$ from the proof of \cref{thm:rsan-closure}.
  In particular, both these languages are in $\draof n$ (the $\dra_n$ needs more
  states than the corresponding $\nra_n$ because it cannot \emph{guess} where
  the final part of the word starts and needs to consider all posibilites,
  making the $\dra_n$ exponentially larger).
  Similarly as in the proof of \cref{thm:rsan-closure}, a~$\drsa$ for the
  intersection of the languages, the language~$\lang^{AB}_n$, would need at
  least~$2n$ registers.

  Non-closure under union follows from De Morgan's laws.
%
%
%
%
\end{proof}



\newcommand{
\begin{table}[t]
\caption{Distinguishing languages for a~selection of register automata models.
Grey cells denote that the result is implied from the class being a
  sub/super-class of another class where the result is established.
  $\draof{1}^{(=)}$ denotes both $\draof 1$ and $\draeqof 1$, similarly for
  $\uraof{1}^{(=)}$.
  $\drsaof{(1)}$ denotes both $\drsa$ and $\drsaof 1$.
  None of the languages is accepted by~$\dra$.
}
\label{tab:languages}
\begin{center}
\scalebox{0.9}{
\begin{tabular}{l|lllllll}
  \toprule
  Language          & $\nraof{1}^{(=)}$                   & $\uraof{1}^{(=)}$                   & $\booleanof{\nraeqof 1}$             & $\araof 1$               & $\drsaof{(1)}$                         & $\rsaof 1$                          & $\araguessspreadshortof 1$ \\
  \midrule
  $\langexrep$      & \tabyes~Ex.~\ref{ex:ra_example}     & \tabno~Ex.~\ref{ex:ra_example}      & \greycell \tabyes                    & \greycell \tabyes        & \tabyes~Ex.~\ref{ex:drsa_langexrep}    & \greycell \tabyes                   & \greycell \tabyes     \\
  $\langnegexrep$   & \tabno~Ex.~\ref{ex:ra_example}      & \tabyes~Ex.~\ref{ex:ra_example}     & \greycell \tabyes                    & \greycell \tabyes        & \tabyes~Ex.~\ref{ex:drsa_langnegexrep} & \greycell \tabyes                   & \greycell \tabyes     \\
  $\langexnegexrep$ & \tabno~Ex.~\ref{ex:lang_exnegexrep} & \tabno~Ex.~\ref{ex:lang_exnegexrep} & \tabyes~Ex.~\ref{ex:lang_exnegexrep} & \greycell \tabyes        & \greycell \tabyes                      & \greycell \tabyes                   & \greycell \tabyes     \\
  $\langallrep$     & \greycell \tabno                    & \greycell \tabno                    & \greycell \tabno                     & \tabno~\cite{Figueira12} & \tabno~Thm.~\ref{thm:drsa_lt_rsa}      & \tabno~Thm.~\ref{thm:rsa-closure}   & \tabno~\cite{Figueira12}     \\
  $\langnegallrep$  & \greycell \tabno                    & \greycell \tabno                    & \greycell \tabno                     & \tabno~\cite{Figueira12} & \tabno~Thm.~\ref{thm:drsa_lt_rsa}      & \tabyes~Ex.~\ref{ex:lang_negallrep} & \tabyes~\cite{Figueira12}    \\
  $\langexanob$     & \greycell \tabno                    & \greycell \tabno                    & \greycell \tabno                     & \greycell \tabno         & \tabyes~Ex.~\ref{ex:lang_exanob}       & \greycell \tabyes                   & \tabyes~\cite{Figueira12} \\
  $\langnegexanob$  & \greycell \tabno                    & \greycell \tabno                    & \greycell \tabno                     & \greycell \tabno         & \tabyes~Ex.~\ref{ex:lang_exanob}       & \greycell \tabyes                   & \tabno~\cite{Figueira12}  \\
  $\langallaexb$    & \tabno~Ex.~\ref{ex:lang_allaexb}    & \tabno~Ex.~\ref{ex:lang_allaexb}    & \greycell \tabno                     & \tabyes~\cite{DemriL09}  & \greycell \tabno                       & \tabno~Ex.~\ref{ex:lang_allaexb}    & \greycell \tabyes \\
  \bottomrule
\end{tabular}

}
\end{center}
\end{table}
}[0]{
\begin{table}[t]
\caption{Distinguishing languages for a~selection of register automata models.
Grey cells denote that the result is implied from the class being a
  sub/super-class of another class where the result is established.
  $\draof{1}^{(=)}$ denotes both $\draof 1$ and $\draeqof 1$, similarly for
  $\uraof{1}^{(=)}$.
  $\drsaof{(1)}$ denotes both $\drsa$ and $\drsaof 1$.
  None of the languages is accepted by~$\dra$.
}
\label{tab:languages}
\begin{center}
\scalebox{0.9}{
\begin{tabular}{l|lllllll}
  \toprule
  Language          & $\nraof{1}^{(=)}$                   & $\uraof{1}^{(=)}$                   & $\booleanof{\nraeqof 1}$             & $\araof 1$               & $\drsaof{(1)}$                         & $\rsaof 1$                          & $\araguessspreadshortof 1$ \\
  \midrule
  $\langexrep$      & \tabyes~Ex.~\ref{ex:ra_example}     & \tabno~Ex.~\ref{ex:ra_example}      & \greycell \tabyes                    & \greycell \tabyes        & \tabyes~Ex.~\ref{ex:drsa_langexrep}    & \greycell \tabyes                   & \greycell \tabyes     \\
  $\langnegexrep$   & \tabno~Ex.~\ref{ex:ra_example}      & \tabyes~Ex.~\ref{ex:ra_example}     & \greycell \tabyes                    & \greycell \tabyes        & \tabyes~Ex.~\ref{ex:drsa_langnegexrep} & \greycell \tabyes                   & \greycell \tabyes     \\
  $\langexnegexrep$ & \tabno~Ex.~\ref{ex:lang_exnegexrep} & \tabno~Ex.~\ref{ex:lang_exnegexrep} & \tabyes~Ex.~\ref{ex:lang_exnegexrep} & \greycell \tabyes        & \greycell \tabyes                      & \greycell \tabyes                   & \greycell \tabyes     \\
  $\langallrep$     & \greycell \tabno                    & \greycell \tabno                    & \greycell \tabno                     & \tabno~\cite{Figueira12} & \tabno~Thm.~\ref{thm:drsa_lt_rsa}      & \tabno~Thm.~\ref{thm:rsa-closure}   & \tabno~\cite{Figueira12}     \\
  $\langnegallrep$  & \greycell \tabno                    & \greycell \tabno                    & \greycell \tabno                     & \tabno~\cite{Figueira12} & \tabno~Thm.~\ref{thm:drsa_lt_rsa}      & \tabyes~Ex.~\ref{ex:lang_negallrep} & \tabyes~\cite{Figueira12}    \\
  $\langexanob$     & \greycell \tabno                    & \greycell \tabno                    & \greycell \tabno                     & \greycell \tabno         & \tabyes~Ex.~\ref{ex:lang_exanob}       & \greycell \tabyes                   & \tabyes~\cite{Figueira12} \\
  $\langnegexanob$  & \greycell \tabno                    & \greycell \tabno                    & \greycell \tabno                     & \greycell \tabno         & \tabyes~Ex.~\ref{ex:lang_exanob}       & \greycell \tabyes                   & \tabno~\cite{Figueira12}  \\
  $\langallaexb$    & \tabno~Ex.~\ref{ex:lang_allaexb}    & \tabno~Ex.~\ref{ex:lang_allaexb}    & \greycell \tabno                     & \tabyes~\cite{DemriL09}  & \greycell \tabno                       & \tabno~Ex.~\ref{ex:lang_allaexb}    & \greycell \tabyes \\
  \bottomrule
\end{tabular}

}
\end{center}
\end{table}
}

\newcommand{
\begin{figure}[t]
\begin{minipage}{7.5cm}
  \centering
\begin{tikzpicture}
[
  node distance=20mm,
  on grid,
  auto
]
  \newcommand{\scalefactor}[0]{0.8}

  \tikzstyle{autclass}=[
    draw,
    rectangle,
    minimum height=6mm,
    minimum width=12mm
  ]

  \newcommand{\yshift}[0]{8mm}

  \coordinate (start);
  \node[autclass] (rsa)    [right of=start] {$\rsa$};
  \node[autclass] (drsa)   [below of=start,yshift=\yshift] {$\drsa$};
  \node[autclass] (drsa1)  [below of=drsa,xshift=0mm,yshift=\yshift] {$\drsaof 1$};
  \node[autclass] (dra1)   [below of=drsa1,xshift=0mm,yshift=\yshift] {$\draof 1$};
  \node[autclass] (dra1eq) [below of=dra1,yshift=\yshift] {$\draeqof 1$};

  \coordinate[right of=rsa] (tmp1);
  \node[autclass] (ara1guessspread) [right of=rsa] {$\araguessspreadshortof 1$};
  \node[autclass] (ara1)   [below of=ara1guessspread, yshift=\yshift] {$\araof 1$};

  \node[autclass] (nra1eq) [right of=dra1,yshift=0mm] {$\nraeqof 1$};
  \node[autclass] (ura1eq) [right of=nra1eq,yshift=-0mm] {$\uraeqof 1$};

  \node[autclass] (nra1)   [right of=drsa1,xshift=-0mm] {$\nraof 1$};
  \node[autclass] (boolean)[right of=nra1,yshift=-0mm] {$\booleanof{\nraeqof 1}$};


  \draw (drsa)    edge (rsa);
  \draw (ura1eq)  edge (boolean);
  \draw (drsa1)   edge (drsa);
  \draw (nra1eq)  edge (boolean);
  \draw (boolean) edge (drsa);
  \draw (boolean) edge (ara1);
  \draw (dra1eq)  edge (nra1eq);
  \draw (dra1eq)  edge (ura1eq);
  \draw (ara1)    edge (ara1guessspread);
  \draw (dra1)    edge (drsa1);
  \draw (dra1eq)  edge (dra1);
  \draw (nra1eq)  edge (nra1);
  \draw (nra1)    edge (ara1);
\end{tikzpicture}
\caption{Hasse diagram comparing the expressive power of a~selection of register
  automata models with decidable emptiness problem.
  All inclusions are strict.
  Languages distinguishing the different models can be found in \cref{tab:languages}.
}
\label{fig:expressive_power}
\end{minipage}
\hfill
\begin{minipage}{6cm}
  \begin{center}
\scalebox{1}{
  \begin{tikzpicture}
  \tikzset{
    ->, 
    >=stealth',
    initial text=$ $, 
    node distance=30mm,
  }

  \node[initial,state] (q) {$q$};
  \node[state, accepting] (s) [right of=q] {$s$};

 \draw (q) edge[loop above] node[yshift=-4mm] {$\pictrans a {in \in r_b} {}$} (q);
 \draw (q) edge[loop below] node {$\pictrans b {} {r_b \gets r_b \cup \{\inp\}}$} (q);
 \draw (q) edge node[above,yshift=-4mm] {$\pictrans a {\inp \notin r_b} {}$} (s);
 \draw (s) edge[loop above] node[yshift=-4mm] {$\pictrans \Sigma {} {}$} (s);

\end{tikzpicture}
}
  \end{center}
  \vspace{-5mm}
  \caption{A $\drsaof 1$ recognizing $\langexanob$}
  \label{fig:drsa_langexanob}

  \begin{center}
  \scalebox{1}{
    \begin{tikzpicture}
  \tikzset{
    ->, 
    >=stealth',
    initial text=$ $, 
    node distance=30mm,
  }

  \node[initial,state,accepting] (q) {$q$};

 \draw (q) edge[loop above] node[yshift=-4mm] {$\pictrans a {in \in r_b} {}$} (q);
 \draw (q) edge[loop below] node {$\pictrans b {} {r_b \gets r_b \cup \{\inp\}}$} (q);

\end{tikzpicture}
  }
  \end{center}
  \vspace{-5mm}
  \caption{A $\drsaof 1$ recognizing $\langnegexanob$}
  \label{fig:drsa_langnegexanob}
\end{minipage}

\end{figure}
}[0]{
\begin{figure}[t]
\begin{minipage}{7.5cm}
  \centering
\begin{tikzpicture}
[
  node distance=20mm,
  on grid,
  auto
]
  \newcommand{\scalefactor}[0]{0.8}

  \tikzstyle{autclass}=[
    draw,
    rectangle,
    minimum height=6mm,
    minimum width=12mm
  ]

  \newcommand{\yshift}[0]{8mm}

  \coordinate (start);
  \node[autclass] (rsa)    [right of=start] {$\rsa$};
  \node[autclass] (drsa)   [below of=start,yshift=\yshift] {$\drsa$};
  \node[autclass] (drsa1)  [below of=drsa,xshift=0mm,yshift=\yshift] {$\drsaof 1$};
  \node[autclass] (dra1)   [below of=drsa1,xshift=0mm,yshift=\yshift] {$\draof 1$};
  \node[autclass] (dra1eq) [below of=dra1,yshift=\yshift] {$\draeqof 1$};

  \coordinate[right of=rsa] (tmp1);
  \node[autclass] (ara1guessspread) [right of=rsa] {$\araguessspreadshortof 1$};
  \node[autclass] (ara1)   [below of=ara1guessspread, yshift=\yshift] {$\araof 1$};

  \node[autclass] (nra1eq) [right of=dra1,yshift=0mm] {$\nraeqof 1$};
  \node[autclass] (ura1eq) [right of=nra1eq,yshift=-0mm] {$\uraeqof 1$};

  \node[autclass] (nra1)   [right of=drsa1,xshift=-0mm] {$\nraof 1$};
  \node[autclass] (boolean)[right of=nra1,yshift=-0mm] {$\booleanof{\nraeqof 1}$};


  \draw (drsa)    edge (rsa);
  \draw (ura1eq)  edge (boolean);
  \draw (drsa1)   edge (drsa);
  \draw (nra1eq)  edge (boolean);
  \draw (boolean) edge (drsa);
  \draw (boolean) edge (ara1);
  \draw (dra1eq)  edge (nra1eq);
  \draw (dra1eq)  edge (ura1eq);
  \draw (ara1)    edge (ara1guessspread);
  \draw (dra1)    edge (drsa1);
  \draw (dra1eq)  edge (dra1);
  \draw (nra1eq)  edge (nra1);
  \draw (nra1)    edge (ara1);
\end{tikzpicture}
\caption{Hasse diagram comparing the expressive power of a~selection of register
  automata models with decidable emptiness problem.
  All inclusions are strict.
  Languages distinguishing the different models can be found in \cref{tab:languages}.
}
\label{fig:expressive_power}
\end{minipage}
\hfill
\begin{minipage}{6cm}
  \begin{center}
\scalebox{1}{
  \begin{tikzpicture}
  \tikzset{
    ->, 
    >=stealth',
    initial text=$ $, 
    node distance=30mm,
  }

  \node[initial,state] (q) {$q$};
  \node[state, accepting] (s) [right of=q] {$s$};

 \draw (q) edge[loop above] node[yshift=-4mm] {$\pictrans a {in \in r_b} {}$} (q);
 \draw (q) edge[loop below] node {$\pictrans b {} {r_b \gets r_b \cup \{\inp\}}$} (q);
 \draw (q) edge node[above,yshift=-4mm] {$\pictrans a {\inp \notin r_b} {}$} (s);
 \draw (s) edge[loop above] node[yshift=-4mm] {$\pictrans \Sigma {} {}$} (s);

\end{tikzpicture}
}
  \end{center}
  \vspace{-5mm}
  \caption{A $\drsaof 1$ recognizing $\langexanob$}
  \label{fig:drsa_langexanob}

  \begin{center}
  \scalebox{1}{
    \begin{tikzpicture}
  \tikzset{
    ->, 
    >=stealth',
    initial text=$ $, 
    node distance=30mm,
  }

  \node[initial,state,accepting] (q) {$q$};

 \draw (q) edge[loop above] node[yshift=-4mm] {$\pictrans a {in \in r_b} {}$} (q);
 \draw (q) edge[loop below] node {$\pictrans b {} {r_b \gets r_b \cup \{\inp\}}$} (q);

\end{tikzpicture}
  }
  \end{center}
  \vspace{-5mm}
  \caption{A $\drsaof 1$ recognizing $\langnegexanob$}
  \label{fig:drsa_langnegexanob}
\end{minipage}

\end{figure}
}

\newcommand{
\begin{center}
  \begin{minipage}{6cm}
\scalebox{1}{
  \begin{tikzpicture}
  \tikzset{
    ->, 
    >=stealth',
    initial text=$ $, 
    node distance=30mm,
  }

  \node[initial,state] (q) {$q$};
  \node[state, accepting] (s) [right of=q] {$s$};

 \draw (q) edge[loop above] node[yshift=-4mm] {$\pictrans a {in \in r_b} {}$} (q);
 \draw (q) edge[loop below] node {$\pictrans b {} {r_b \gets r_b \cup \{\inp\}}$} (q);
 \draw (q) edge node[above,yshift=-4mm] {$\pictrans a {\inp \notin r_b} {}$} (s);
 \draw (s) edge[loop above] node[yshift=-4mm] {$\pictrans \Sigma {} {}$} (s);

\end{tikzpicture}
}
  \centering

   (a) A $\drsaof 1$ recognizing $\langexanob$.
  \end{minipage}
\hfill
  \begin{minipage}{6cm}
\scalebox{1}{
  \begin{tikzpicture}
  \tikzset{
    ->, 
    >=stealth',
    initial text=$ $, 
    node distance=30mm,
  }

  \node[initial,state,accepting] (q) {$q$};

 \draw (q) edge[loop above] node[yshift=-4mm] {$\pictrans a {in \in r_b} {}$} (q);
 \draw (q) edge[loop below] node {$\pictrans b {} {r_b \gets r_b \cup \{\inp\}}$} (q);

\end{tikzpicture}
}
  \centering

   (b) A $\drsaof 1$ recognizing $\langnegexanob$.
  \end{minipage}
\end{center}
}[0]{
\begin{center}
  \begin{minipage}{6cm}
\scalebox{1}{
  \begin{tikzpicture}
  \tikzset{
    ->, 
    >=stealth',
    initial text=$ $, 
    node distance=30mm,
  }

  \node[initial,state] (q) {$q$};
  \node[state, accepting] (s) [right of=q] {$s$};

 \draw (q) edge[loop above] node[yshift=-4mm] {$\pictrans a {in \in r_b} {}$} (q);
 \draw (q) edge[loop below] node {$\pictrans b {} {r_b \gets r_b \cup \{\inp\}}$} (q);
 \draw (q) edge node[above,yshift=-4mm] {$\pictrans a {\inp \notin r_b} {}$} (s);
 \draw (s) edge[loop above] node[yshift=-4mm] {$\pictrans \Sigma {} {}$} (s);

\end{tikzpicture}
}
  \centering

   (a) A $\drsaof 1$ recognizing $\langexanob$.
  \end{minipage}
\hfill
  \begin{minipage}{6cm}
\scalebox{1}{
  \begin{tikzpicture}
  \tikzset{
    ->, 
    >=stealth',
    initial text=$ $, 
    node distance=30mm,
  }

  \node[initial,state,accepting] (q) {$q$};

 \draw (q) edge[loop above] node[yshift=-4mm] {$\pictrans a {in \in r_b} {}$} (q);
 \draw (q) edge[loop below] node {$\pictrans b {} {r_b \gets r_b \cup \{\inp\}}$} (q);

\end{tikzpicture}
}
  \centering

   (b) A $\drsaof 1$ recognizing $\langnegexanob$.
  \end{minipage}
\end{center}
}

\vspace{-0.0mm}
\section{Expressivity of Register Set Automata}\label{sec:expressivity-rsa}
\vspace{-0.0mm}

In this section, we position $\rsa$s in the landscape of automata over data
words.
For this, we use the languages introduced previously and some other languages
defined below; these languages are then used to distinguish RSAs from various register
automata models.
We focus on distinguishing RSAs from their closest siblings: NRAs/URAs, alternating
RAs (ARAs)~\cite{DemriL09}, and alternating RAs with guess and
spread ($\araguessspreadshort{}$)~\cite{Figueira12} (due to space constraints, this listing is by no means
exhaustive).

\begin{example}\label{ex:lang_exanob}
  First, we define the language over $\Sigma = \set{a,b}$

\begin{equation*}
  \hlmathbox{
\langexanob = \set{w \mid \exists i\colon \projof a {w_i} \land \nexists j <
i\colon \projof b {w_j} \land \projdataof{w_j} = \projdataof{w_i}}
  }
\end{equation*}

\noindent
from~\cite[Proof of Proposition~3.2]{Figueira12}
and its complement $\langnegexanob$.
Intuitively, $\langexanob$~is the language of words~$w$ such that there exists
an input element~$\pair a d$ that is not preceded by an occurrence of a~$\pair b
d$ element.
Neither~$\langexanob$ nor~$\langnegexanob$ can be accepted by $\araof 1$ while
$\araguessspreadshortof 1$~accepts~$\langexanob$ but cannot accept
$\langnegexanob$.
On the other hand, as shown
in \cref{fig:drsa_langexanob,fig:drsa_langnegexanob}, $\drsaof 1$~can accept
both~$\langexanob$ and~$\langnegexanob$.

\begin{table}[t]
\caption{Distinguishing languages for a~selection of register automata models.
Grey cells denote that the result is implied from the class being a
  sub/super-class of another class where the result is established.
  $\draof{1}^{(=)}$ denotes both $\draof 1$ and $\draeqof 1$, similarly for
  $\uraof{1}^{(=)}$.
  $\drsaof{(1)}$ denotes both $\drsa$ and $\drsaof 1$.
  None of the languages is accepted by~$\dra$.
}
\label{tab:languages}
\begin{center}
\scalebox{0.9}{

}
\end{center}
\end{table}

  (It might seem suspicious that $\drsaof 1$ can express the
  language~$\langnegexanob$, since according to~\cite[Proof of
  Proposition~3.2]{Figueira12}, if an~$\araof 1$~$\aut$ could accept the language,
  $\aut$~could be used to decide language-emptiness of a~Minsky machine.
  So how come that we can express~$\langnegexanob$ using~$\drsaof 1$, which have
  a~decidable emptiness problem (cf.\ \cref{thm:rsa-emptiness})?
  The reason is that in the construction of the automaton representing the accepting
  runs of a~Minsky machine from~\cite{Figueira12}, apart from~$\langnegexanob$,
  we also need to be able to express the property ``\emph{every counter
  increment is matched with its decrement}'', which is not expressible by $\rsa$
  (cf.\ the proof of \cref{thm:rsa-closure}).)
  \qed
\end{example}

\begin{example}\label{ex:lang_allaexb}
  Moreover, let us define the following language
  from~\cite[Example~2.2]{DemriL09}:


  \begin{center}
  \hlbox{
    \begin{minipage}{7cm}
      \vspace*{-2mm}
    \begin{align*}
      \langallaexb = \big\{w \mid \forall i \colon \projof a {w_i} \implies
      \big( &(\forall j \neq i \colon \projof a {w_j} \implies
      \projdataof{w_i} \neq \projdataof{w_j}) \land {} \\
      & (\exists k > i \colon \projof b {w_k} \land \projdataof{w_i} = \projdataof{w_k})\big)
      \big\}
    \end{align*}
    \end{minipage}
}
  \end{center}

\noindent
Intuitively, $\langallaexb$ denotes the language of words where no two
$a$-positions contain the same data value and every $a$-position is followed by
a~matching $b$-position.
$\langallaexb$~is recognizable by~$\araof 1$~\cite[Example~2.6]{DemriL09}, but
  is not recognizable by any $\ura$, $\nra$, or $\rsa$.
\qed
\end{example}

The Hasse diagram comparing the expressive power of a~selection of register
automata models with decidable emptiness is in
\cref{fig:expressive_power} and languages distinguishing the various classes
are in \cref{tab:languages}.

\begin{remark}
  $\rsa$s are also incomparable to the class of \emph{pebble
  automata}~\cite{NevenSV04}, since $\drsa$s generalize $\dra$s, they can accept
  a~language not expressible by $\pa$s (as shown by Tan in~\cite{Tan13}).
  On the other hand, $\rsa$s~cannot express~$\langallrep$, which is expressible
  by~$\pa$s.
  \qed
\end{remark}

\begin{figure}[t]
\begin{minipage}{7.5cm}
  \centering
\begin{tikzpicture}
[
  node distance=20mm,
  on grid,
  auto
]
  \newcommand{\scalefactor}[0]{0.8}

  \tikzstyle{autclass}=[
    draw,
    rectangle,
    minimum height=6mm,
    minimum width=12mm
  ]

  \newcommand{\yshift}[0]{8mm}

  \coordinate (start);
  \node[autclass] (rsa)    [right of=start] {$\rsa$};
  \node[autclass] (drsa)   [below of=start,yshift=\yshift] {$\drsa$};
  \node[autclass] (drsa1)  [below of=drsa,xshift=0mm,yshift=\yshift] {$\drsaof 1$};
  \node[autclass] (dra1)   [below of=drsa1,xshift=0mm,yshift=\yshift] {$\draof 1$};
  \node[autclass] (dra1eq) [below of=dra1,yshift=\yshift] {$\draeqof 1$};

  \coordinate[right of=rsa] (tmp1);
  \node[autclass] (ara1guessspread) [right of=rsa] {$\araguessspreadshortof 1$};
  \node[autclass] (ara1)   [below of=ara1guessspread, yshift=\yshift] {$\araof 1$};

  \node[autclass] (nra1eq) [right of=dra1,yshift=0mm] {$\nraeqof 1$};
  \node[autclass] (ura1eq) [right of=nra1eq,yshift=-0mm] {$\uraeqof 1$};

  \node[autclass] (nra1)   [right of=drsa1,xshift=-0mm] {$\nraof 1$};
  \node[autclass] (boolean)[right of=nra1,yshift=-0mm] {$\booleanof{\nraeqof 1}$};


  \draw (drsa)    edge (rsa);
  \draw (ura1eq)  edge (boolean);
  \draw (drsa1)   edge (drsa);
  \draw (nra1eq)  edge (boolean);
  \draw (boolean) edge (drsa);
  \draw (boolean) edge (ara1);
  \draw (dra1eq)  edge (nra1eq);
  \draw (dra1eq)  edge (ura1eq);
  \draw (ara1)    edge (ara1guessspread);
  \draw (dra1)    edge (drsa1);
  \draw (dra1eq)  edge (dra1);
  \draw (nra1eq)  edge (nra1);
  \draw (nra1)    edge (ara1);
\end{tikzpicture}
\caption{Hasse diagram comparing the expressive power of a~selection of register
  automata models with decidable emptiness problem.
  All inclusions are strict.
  Languages distinguishing the different models can be found in \cref{tab:languages}.
}
\label{fig:expressive_power}
\end{minipage}
\hfill
\begin{minipage}{6cm}
  \begin{center}
\scalebox{1}{
  \begin{tikzpicture}
  \tikzset{
    ->, 
    >=stealth',
    initial text=$ $, 
    node distance=30mm,
  }

  \node[initial,state] (q) {$q$};
  \node[state, accepting] (s) [right of=q] {$s$};

 \draw (q) edge[loop above] node[yshift=-4mm] {$\pictrans a {in \in r_b} {}$} (q);
 \draw (q) edge[loop below] node {$\pictrans b {} {r_b \gets r_b \cup \{\inp\}}$} (q);
 \draw (q) edge node[above,yshift=-4mm] {$\pictrans a {\inp \notin r_b} {}$} (s);
 \draw (s) edge[loop above] node[yshift=-4mm] {$\pictrans \Sigma {} {}$} (s);

\end{tikzpicture}
}
  \end{center}
  \vspace{-5mm}
  \caption{A $\drsaof 1$ recognizing $\langexanob$}
  \label{fig:drsa_langexanob}

  \begin{center}
  \scalebox{1}{
    \begin{tikzpicture}
  \tikzset{
    ->, 
    >=stealth',
    initial text=$ $, 
    node distance=30mm,
  }

  \node[initial,state,accepting] (q) {$q$};

 \draw (q) edge[loop above] node[yshift=-4mm] {$\pictrans a {in \in r_b} {}$} (q);
 \draw (q) edge[loop below] node {$\pictrans b {} {r_b \gets r_b \cup \{\inp\}}$} (q);

\end{tikzpicture}
  }
  \end{center}
  \vspace{-5mm}
  \caption{A $\drsaof 1$ recognizing $\langnegexanob$}
  \label{fig:drsa_langnegexanob}
\end{minipage}

\end{figure}


We also considered several extensions of the $\rsa$ model, such as RSAs
\begin{inparaenum}[(i)]
  \item  with register emptiness test,
  \item  with register equality test,
  \item  with removal, and 
  \item  with removal and register emptiness test.
\end{inparaenum}
Since they are not directly relevant to regex matching, we give details in~\cref{sec:app_extensions}.

\section{Proof of \cref{thm:alg_soundness}}\label{sec:det-proof}

In this section, we prove the correctness of \cref{alg:det}. First, we introduce the notion 
of normalized copyless RAs simplifying the RA transition structure. 

\begin{lemma}\label{lem:norm-single-val}
  Let $\aut = (\cdot, \{ r_1, \dots, r_k \}, \Delta, \cdot, \cdot)$ be a copyless RA. $\aut$ is equivalent to a RA obtained from $\aut$
  removing transitions from $\Delta$ s.t. 
  \begin{enumerate}[(i)]
    \item $\trans \cdot a {\guardeq, \guardneq, \update} \cdot$ where $|\guardeq| > 1$,
    \item $\trans \cdot a {\guardeq, \guardneq, \update} \cdot$ where $\update(r_i) = \inp$ and $\update(r_j) \in \guardeq$ for $i \neq j$, or
    \item $\trans \cdot a {\guardeq, \guardneq, \update} \cdot$ where $\update(r_i) = \update(r_j)$ for $i \neq j$.
  \end{enumerate}
\end{lemma}
\begin{proof}
  From the definition of copyless RAs we have that there is no reachable configuration $(q,f)$ s.t. $f(r_1) = f(r_2)$ for 
  registers $r_1 \neq r_2$. Therefore, transitions with $|\guardeq| > 1$ are never used. Similarly, the cases (ii) and (iii) 
  lead to a situation where the output configuration violates the condition of copyless RA.
\end{proof}

\begin{figure}[t]
  \begin{tikzpicture}
    \node (r) at (0, 1) {$r_1$};
    \node (s) at (1, 1) {$r_2$};
    \node (t) at (2, 1) {$r_3$};
    \node (u) at (3, 1) {$r_4$};

    \node (v) at (0.5, 0) {$r_1'$};
    \node (w) at (1.5, 0) {$r_2'$};
    \node (in) at (2.5, 0) {$\inp$};

    \node (up) at (-0.5, 0.5) {$\update:$};

    \draw[->, out=270, in=110] (r) to (v);
    \draw[->, out=270, in=110] (s) to (w);
    \draw[->, out=270, in=110] (t) to (in);
    \draw[->, red, dashed, out=250, in=90] (u) to (w);
  \end{tikzpicture}
  \label{fig:single-val-up}
  \caption{Example of a transition update $\update$ of a copyless RA. The red dashed arrow is forbidden in normalized copyless RAs.}
\end{figure}

A copyless RA is called \emph{normalized} if (i) it does not contain 
transitions mentioned in \cref{lem:norm-single-val}, and (ii) for each transition $t$ and the corresponding guard $\guardeq$ 
and update function $\update$, $\img(\update) \cap \guardeq = \emptyset$. Note that arbitrary copyless RA can be normalized by 
removing transitions of \cref{lem:norm-single-val} and replace each transition update to $\update[\guardeq/\inp]$.
Note that for the sake of simplicity we assume that the transition updates do not contain $\bot$ values (this means that the 
transition updates are partial functions instead of functions assigninng $\bot$ to unused registers). Example of an 
update function violating the normalization conditions is shown in \cref{fig:single-val-up} (dashed red arrow). 
Let $\aut = (Q, \{ r_1, \dots, r_k \}, \cdot, \cdot, \cdot)$ be a copyless RA. 
For a state $q$, we use $\sngconf[q]$ to denote the set of copyless configurations corresponding to the state $q$.
In particular, $\sngconf[q] = \{ h \mid \dom(h) = \regs[q], h(r) \in \datadom, h(r_1) \neq h(r_2) \text{ for all } r_1 \neq r_2 \}$.
We say that a set of configurations $\confs$ corresponding to a state $q$ is $\beta$-\emph{cartesian-representable}
if $\dom(\beta) = \regsof{q}$ and $\confs = \{ (q,f) \mid f(r) \in \beta(r) \text{ for } r \in \regs, f(r_i) \neq 
f(r_j) \text{ for } r_i\neq r_j \}$. Further, for a set of registers $R$ we say that $\confs$ is
\emph{$R$-singleton} cartesian-representable if $|\beta(r)| \leq 1$ for each $r \in R$.

\begin{lemma}\label{lem:up-inv}
  Let $\aut$ be a normalized copyless RA and let $t = \trans \cdot a {\guardeq, \guardneq, \update} \cdot$ be a 
  transition. Then $\update^{-1}$ is a function.
\end{lemma}
\begin{proof}
 Follows directly from \cref{lem:norm-single-val}.
\end{proof}

Let $\aut (Q, \{ r_1, \dots, r_k \}, \cdot, \cdot, \cdot)$ be a copyless RA. For a transition $t$ of $\aut$, symbol $\pair a d$, 
and a set of configurations $C$, we use $\post$ to denote the 
set of successor configurations of $C$ over $t$. In particular, 
$
  \post(C, t, \pair a d) = \{ (q', h') \mid (q, h) \in C, (q,h) \stepofusing{\pair {a} {d}}{t} (q', h') \}
$
We lift the defnition to a set of transitions in an usual way.
Moreover, for a partial function $f: \regs\cup \{ \inp \} \to 2^\datadom$ and a value $d \in \datadom$ we define $f^d$ as 
$$
f^d(r) =
\begin{cases}
  \{ d \} & \text{if } r = \inp,\\
  f(r) & \text{otherwise } .
\end{cases}
$$

\begin{lemma}\label{lem:cart-repr}
  Let $\aut = (Q, \regs, \Delta, \cdot, \cdot)$ be a normalized copyless RA, $t = \trans q a {\guardeq, \guardneq, \update} q' \in \Delta$ be a 
  transition, and let $\confs$ be a $\guardneq$-singleton $\beta$-cartesian-representable set of configurations s.t. $d\in \beta(r)$ for 
  all $r \in \guardeq$ and $d \notin \beta(r)$ for all $r \in \guardneq$. Then the set $\post(C, t, \pair a d)$ is
  $\theta$-cartesian-representable for each $d\in\datadom$ where $\theta(r) = \beta^d(\update(r))$.
\end{lemma}
\begin{proof}
  Since $\aut$ is normalized, from \cref{lem:norm-single-val} we have that 
  $\guardeq = \{ r_i \}$ or $\guardeq = \emptyset$. First, we investigate the case $\guardeq = \{ r_i \}$.
  As $\aut$ is normalized, we have that $r \notin \img(\update)$, which means that $t$ is enabled (given also by the assumptions of $\guardneq$ values)
  but the concrete set of $\beta(r)$ is not important (the values of $\beta(r)$ are not used in the update).
  From \cref{lem:up-inv} we have that $\update^{-1}$ is a function meaning that each $r$ is copied to 
  at most one (possible other) register $r'$ or $\inp$. Therefore the register values of configurations 
  from $\post(C, t, \pair a d)$ contain all copied values given by $\beta$. Therefore, the set 
  is $\theta$-cartesian-representable where $\theta(r) = \beta^d(\update(r))$.
\end{proof}


\begin{lemma}\label{lem:det-fst-incl}
  Assume that \cref{alg:det} does not return $\bot$. Then $\langof{\aut'} \subseteq \langof{\aut}$.
\end{lemma}
\begin{proof}
  Let $w$ be a dataword s.t. $w \in \langof{\aut'}$. We show that $w \in \langof{\aut}$. 
  Since $w \in \langof{\aut'}$, there is a run over $w$ of the form 
  $$
    ((S_1, c_1), f_1) \stepofusing{\pair {a_1} {d_1}}{t_1} ((S_2, c_2), f_2)\stepofusing{\pair {a_2} {d_2}}{t_2} \dots  \stepofusing{\pair {a_{k-1}} {d_{k-1}}}{t_{k-1}} ((S_k, c_k), f_k)
  $$
  where $S_i \subseteq Q$, $c$ is the register-cardinality mapping from \cref{alg:det}, and $t_i$ is 
  a transition of the form $\trans {(S_i, c_i)} {a_i} {\guardeq, \regs \setminus \guardeq, \update} {(S_{i+1}, c_{i+1})}$. 
  Since the macrostates encode a set of configuration of the original RA $\aut$, we use 
  $\confsenc$ to denote the set of configuration covered in the given macrostate. Formally,
  $$
    \confsenc((S, c), f) = \{ (q, h) \mid q \in S, h \in \sngconf[q], h(r) \in f(r) \}
  $$
  Intuitively, $\confsenc$ contains all copyless configurations obtained from picking arbitrary register values from 
  the register-sets of the macrostate. Note that, the value of $c$ does not affect the set of configurations as 
  it only expresses some properties of $f$. 

  \begin{claim}\label{claim:prop-c}
    For each $((S_i, c_i), f_i)$ we have that $c_i(r) = 0$ then $|f_i(r)| = 0$ and 
    $c_i(r) = 1$ then $|f_i(r)| = 1$.
  \end{claim}
  \begin{claimproof}
    Follows from \lnref{ln:update_collapse} and \lnref{ln:counter_update} of \cref{alg:det}.
  \end{claimproof}

  \begin{claim}\label{claim:confs-repr}
    $\confsenc((S, c), f)$ is $f_{|\regs[q]}$-cartesian-representable for each $q \in S$.
  \end{claim}
  \begin{claimproof}
    Follows directly from the definition of $\confsenc((S, c), f)$.
  \end{claimproof}

  \begin{claim}\label{claim:conf-post}
    Fix a configuration $((S_i, c_i), f_i)$ and let $T^\bullet$ be a set of transitions of $\aut$ used to construct 
    $t_i$ in \cref{alg:det}. Then, the following equality holds:
    $$
      \confsenc((S_{i+1}, c_{i+1}), f_{i+1}) = \post(\confsenc((S_i, c_i), f_i), T^\bullet, \pair {a_i}{d_i})
    $$
  \end{claim}
  \begin{claimproof}
    We start with the following reasoning
    \begin{align}
      \post(\confsenc((S_i, c_i), f_i), T^\bullet, \pair {a_i}{d_i}) = \bigcup_{t' \in T^\bullet} \post(\confsenc((S_i, c_i), f_i), t', \pair {a_i}{d_i})
    \end{align}
    From the definition of $\confsenc$ we have that $\confsenc((S_i, c_i), f_i)$ is $f_i$-cartesian-representable for $S_i$.
    Assume that $t_i = \trans {(S_i, c_i)} {a_i} {g, \regs \setminus g, \update} {(S_{i+1}, c_{i+1})}$. Since $t_i$ is enabled 
    (assumption of existence of the run), $c_i(r) \leq 1$ for each $r \in \regs \setminus g$ meaning that 
    $|f_i(r)| \leq 1$ for each $r \in \regs \setminus g$ (\cref{claim:prop-c}). Therefore, $\confsenc((S_i, c_i), f_i)$ is $\regs\setminus g$-singleton 
    cartesian-representable set of configurations. Now assume $t' = \trans q {a_i} {\guardeq, \guardneq, \update'} {q'}$. 
    From \cref{alg:det} we have that $\guardeq \subseteq g$ and $g \cap \guardneq \subseteq \regs \setminus g$. 
    From \cref{claim:confs-repr} we have that $\confsenc((S_i, c_i), f_i)$ is $(f_i)_{|\regs[q]}$-cartesian-representable and 
    moreover from $g \cap \guardneq \subseteq \regs \setminus g$ we get that this set is also $\guardneq$-singleton 
    cartesian-representable.
    From \cref{lem:cart-repr} we hence have that 
    $\post(\confsenc((S_i, c_i), f_i), t', \pair {a_i}{d_i})$ is $\theta$-cartesian-representable where $\theta(r) = f_i^{d_i}(\update'(r))$.
    Now set
    $$
      V(r) = \bigcup_{t' \in T^\bullet, t' = \trans \cdot {a_i} {\cdot, \cdot, \update'} {\cdot}} f_i^{d_i}(\update'(r))
    $$
    We show that $\confsenc((S_{i+1}, c_{i+1}), V) = \post(\confsenc((S_i, c_i), f_i), T^\bullet, \pair {a_i}{d_i})$. 
    The direction right to left is trivial. Now we prove the opposite direction. Let $(q', h') \in \confsenc((S_{i+1}, c_{i+1}), V)$. 
    From the definition of $\sngconf[q]$ we have that $\dom(h') = \regs[q']$. We also have that for each $r \in \regs[q']$ there is 
    a transition $t_r \in T^{\bullet}$ with the update function $\update_{r}$ s.t. $h'(r) \in f_i^{d_i}(\update_r(r))$. Now assume that $\regs[q'] = \{ r_1, \dots, r_n \}$. 
    We set $(x_1, \dots, x_n) = (\update_{r_1}(r_1), \dots, \update_{r_n}(r_n))$. From the definition of $\mathit{op}_{r_i}$ we have 
    that $(x_1, \dots, x_n) \in P$ ($P$ is given by \lnref{ln:cartesian}). Then since the algorithm does not return $\bot$ (assumption on the existence of $w$),
    there is a single transition $t \in T^\bullet$ with the target $q'$ and having the update function $\update_t$ s.t. $h'(r) \in f_i^{d_i}(\update_t(r))$ for each $r \in \regs[q']$.
    Hence $(q', h') \in \post(\confsenc((S_i, c_i), f_i), T^\bullet, \pair {a_i}{d_i})$.

    The last tile to the puzzle is the observation that $V = f_{i+1}$. Indeed $\update'$ constructed on \lnref{ln:update} gathers updates of all transitions 
    from $T^\bullet$ and then $f_{i+1}$ unions the register values from $f_i$ according to $\update'$, which mathes the definition of $V$.
  \end{claimproof}

  Now we can finish the original claim of the lemma. We construct the run of $\aut$ as follows. We pick a configuration from $\confsenc((S_k, c_k), f_k)$. 
  Then, from \cref{claim:conf-post} we have that there is a configuration of $\aut$ from $\confsenc((S_{k-1}, c_{k-1}), f_{k-1})$. s.t. 
  these two configurations are connected with a transition over $\pair {a_{k-1}} {d_{k-1}}$. Using this approach we iteratively construct an 
  accepting run of $\aut$ (recall that $\confsenc((S_1, c_1), f_1)$ contains only initial configurations).
\end{proof}

\begin{lemma}\label{lem:det-snd-incl}
  Assume that \cref{alg:det} does not return $\bot$. Then $\langof{\aut} \subseteq \langof{\aut'}$.
\end{lemma}
\begin{proof}
  Assume that $w \in \langof{\aut}$. Then, there is a run over $w$
  $$
    (q_1, h_1) \stepofusing{\pair {a_1} {d_1}}{t_1} (q_1, h_1)\stepofusing{\pair {a_2} {d_2}}{t_2} \dots \stepofusing{\pair {a_{k-1}} {d_{k-1}}}{t_{k-1}} (q_k, h_k)
  $$
  We construct an accepting run of $\aut'$. Since \cref{alg:det} does not return $\bot$, we can choose 
  almost-arbirary set $g$ on \lnref{ln:guards}. Note that picking of the set $g_i$ for each step then uniquely 
  identifies the run of $\aut'$. We set $g_i = \guardeq_i \cup \{ r \mid h_i(r) = d_i \}$ where $\guardeq$
  is the guard of the transition $t_i$. 
  Using $g_i$s we can construct the run of $\aut'$:
  $$
    ((S_1, c_1), f_1) \stepofusing{\pair {a_1} {d_1}}{T_1} ((S_2, c_2), f_2)\stepofusing{\pair {a_2} {d_2}}{T_2} \dots \stepofusing{\pair {a_{k-1}} {d_{k-1}}}{T_{k-1}} ((S_k, c_k), f_k)
  $$
  (all items except the choice of $g_i$ are given implicitly by the construction in \cref{alg:det}). 
  From the construction of transitions $T_i$ of $\aut'$ we have that 
  the set $T^\bullet$ will contain $t_i$ during the $i$-the iteration and $q_k \in S_k$ meaning that this run is accepting 
  and $w \in \langof{\aut'}$.
\end{proof}

\begin{corollary}
  The \cref{alg:det} is correct.
\end{corollary}
\begin{proof}
  Follows directly from \cref{lem:det-fst-incl} and \cref{lem:det-snd-incl}.
\end{proof}



\vspace{-0.0mm}
\section{Proof of \cref{thm:nra1-to-drsa}}\label{sec:proof-determinisability}
\vspace{-0.0mm}

\begin{figure}[t]
  \centering
  \begin{tikzpicture}
  \tikzset{
    ->, 
    >=stealth',
    initial text=$ $, 
    node distance=25mm,
  }

  \node[initial,state] (q1) {$1$};
  \node[state] (q2) [right of=q1, yshift=5mm] {$2$};
  \node[state] (q4) [right of=q1, yshift=-5mm] {$4$};
  \node[state,accepting] (q3) [right of=q2] {$3$};
  \node[state,accepting] (q5) [right of=q4] {$5$};


 \draw (q1) edge[above] node[yshift=-1mm] {$\pictrans a {\top} {r \gets \inp}$} (q2);
 \draw (q1) edge[below] node[yshift=-1mm] {$\pictrans a {\top} {\emptyset}$} (q4);
 \draw (q2) edge[above] node[yshift=-1mm] {$\pictrans a {\top} {r \gets r}$} (q3); 
 \draw (q4) edge[above] node[yshift=-1mm] {$\pictrans a {\top} {r \gets \inp}$} (q5); 
\end{tikzpicture}
  \caption{Counterexample for determinization of $\nraeqof 1$ without localization.}
  \label{fig:ra_nodet}
\end{figure}

\thmNraOneToDrsa*

\begin{proof}
  Consider $\nraeqof 1$ $\aut = (Q, \regs, \Delta, I, F)$ with a single register $r$ (for $\aut$ having no register 
  the theorem coincides with determinisation of NFAs).
  Further, consider $\aut_{\bullet} = (Q, \regs_\bullet, \Delta_\bullet, I, F)$ to
  be the register localized automaton using the procedure from \cref{sec:reg-locality}. 
  Note that the localization step is necessary for the proof as it is shown on the following example. Consider NRA 
  from \cref{fig:ra_nodet} over singleton alphabet 
  with a single register $r$ as input of the determinisation (single valued is trivially satisfied). When generating 
  successors of the macrostate $\{ 1, 2 \}$ with $g = \emptyset$, we 
  have $S'= \{ 3,5 \}$ and $\mathit{op}_r = \{ r, \inp \}$. However, for the state $5$, there is 
  no transition with the target $5$ and update $\update(r) = r$. Therefore, the condition \cref{ln:abort} is 
  satisfied and the algorithm returns $\bot$.
  
  Continuing with the proof, we first prove the auxiliary claims that are used in the proof of the main theorem:
  \begin{claim}\label{claim:loc-reg}
   $\regs_\bullet[q] \subseteq v_q(\regs)$ where $v_q$ is defined in \cref{sec:reg-locality}.
  \end{claim}
  \begin{claimproof}
    Follows directly from the definition of $\update_\bullet$ in the localization procedure.
  \end{claimproof}

  \begin{claim}\label{claim:single-val}
    $\aut_{\bullet}$ is copyless.
  \end{claim}
  \begin{claimproof}
    Assume that there is a configuration $(s,f)$ s.t. $f(r_p) = f(r_q) \neq\bot$ and $r_p \neq r_q$. 
    From the localization procedure we have that for registers that are not active 
    in the particular state, its value is $\bot$ in every reachable configuration.
    Hence, from \cref{claim:loc-reg} we have that the only possibility when 
    $f(r_p) = f(r_q) \neq\bot$ could happen is when $r_p = r_q = r_s$, which is a contradiction.
  \end{claimproof}

  For the proof of the main theorem, we apply determinization on the localized automaton $\aut_{\bullet}$.
  From \cref{claim:single-val} we have that $\aut_{\bullet}$ is copyless.
  We prove that the algorithm does not return $\bot$. 
  Since the automaton has no disequality guards, the only way how \cref{alg:det} could fail is at \lnref{ln:abort}.
  Due to the fact that $\regs_\bullet = \{ r_q \mid q\in Q, \regs = \{ r \} \}$, 
  in the loop on \cref{ln:aggregate} we can assume that $r_i = r_q$ for some $q \in Q$.
  Further, from the definition of register localization, we have that if
  $x \in \mathit{op}_{r_q}$, then $x$ was added to $\mathit{op}_{r_q}$ using a transition with the 
  target state $q$ on \cref{ln:trans-iter}. Now consider the approximation-test loop on \cref{ln:approx_test_loop}.
  From \cref{claim:loc-reg} we have that $\regs_\bullet[q'] \subseteq \{ r_{q'} \}$. If this 
  set is empty, we are done. Otherwise, consider some $x \in \mathit{op}_{r_{q'}}$. From the reasoning above,
  we have that $x$ was added to $\mathit{op}_{r_{q'}}$ using a transition with the target state $q'$ satisfying 
  $\update(r_{q'}) = x$. Therefore, the condition on \cref{ln:abort} is always false.
\end{proof}

\vspace{-0.0mm}
\section{Other Examples}\label{sec:app-examples}
\vspace{-0.0mm}

\begin{example}
Consider the language $\langdisjoint$~\cite[Example~2.5]{NevenSV04} of words $w
= uv$ where~$u$ and~$v$ are non-empty and the data values in~$u$ and~$v$ are
disjoint, i.e.,
\begin{equation*}
  \langdisjoint = \big\{w \mid \exists u,v \in \Sigma \times \datadom\colon w = uv \land u \neq \epsilon
  \land v \neq \epsilon \land
  \forall i,j\colon \projdataof{u_i} \neq \projdataof{v_j}\big\}
\end{equation*}
For instance, the language contains the word $ddee$, but does not contain the
word $dede$ (we only consider the data values here).
An $\rsaof 1$ accepting this language looks as follows:
\begin{center}
\begin{tikzpicture}
  \tikzset{
    ->, 
    initial text=$ $, 
    >=stealth',
    node distance=30mm,
  }

  \node[initial,state] (q) {$q$};
  \node[state, accepting] (s) [right of=q] {$s$};

 \draw (q) edge[loop above] node {$\pictrans a {} {r \gets r \cup \{\inp\}}$} (q);
 \draw (q) edge node[above,yshift=-4mm] {$\pictrans a {\inp \notin r} {}$} (s);
 \draw (s) edge[loop above] node {$\pictrans a {} {\inp \notin r}$} (s);

\end{tikzpicture}
\end{center}
Intuitively, the $\rsa$ stays in state~$q$, accumulating the so-far seen values
in register~$r$, and at some point, it nondeterministically moves to
state~$s$, where it checks that no data value from the first part of the word
appears in the second part.

We note that~$\langdisjoint$ can be accepted neither by any~$\ara$ (intuitively,
the different threads in an $\ara$ cannot synchronize on the transition from
the first part to the second part of the word), nor by any $\araguessspread$
(which are strictly more expressive than $\ara$s).
It also cannot be accepted by a~$\drsa$.
\qed
\end{example}

\begin{example}\label{ex:cartesian-overapp}
	Consider the NRA shown in \cref{fig:postproc_example}. It is in
	the register-local form, having multiple copies for registers $r_1$, $r_2$
	(updates of the type $r \gets \bot$ are implicit here). During determinisation
	of this NRA, when generating transitions from the macrostate $\sigma = (\set{t,u},
	\set{r_1^s \colon 0, r_1^t \colon 1, r_1^u \colon 1, r_2^u \colon 1})$,
	registers $r_1^u, r_2^u$ get the updates $r_1^u \gets r_1^t \cup r_1^u$
	and $r_2^u \gets r_2^u \cup \set{\inp}$. This would cause \cref{alg:det}
	to return $\bot$ on Line~\ref{ln:abort} when attempting to find one of the update
	combinations (i) $r_1^u \gets r_1^t, r_2^u \gets r_2^u$,
	(ii) $r_1^u \gets r_1^u, r_2^u \gets \inp$, neither of which exists
	in the original NRA. However, in the macrostate $\sigma$, registers $r_1^t$
	and $r_1^u$ hold the same value, and update combination (i) is thus equivalent
	to $r_1^u \gets r_1^u, r_2^u \gets r_2^u$, and update combination (ii) is
	equivalent to $r_1^u \gets r_1^t, r_2^u \gets \inp$. Both of these combinations
	can be found in the original NRA and, therefore, the detected
	overapproximation was a false~positive.
\end{example}

\begin{figure}[t]
	\centering
	\begin{tikzpicture}
  \tikzset{
    ->, 
    >=stealth',
    initial text=$ $, 
    node distance=30mm,
  }

  \node[state,initial] (q) {$q$};
  \node[state] (s) [right of=q] {$s$};
  \node[state](t) [right of=s] {$t$};
  \node[state,accepting](u) [right of=t] {$u$};
  \path[->]

  (q) edge[above] node {$\pictrans{a}{}{r^s_1 \gets \inp}$} (s)
  (s) edge[above] node {$\pictrans{b}{}{r^t_1 \gets r^s_1}$} (t)
  (t) edge[loop above] node {$\pictrans{a}{}{r^t_1 \gets r^t_1}$} (t)
  (t) edge[above] node {$\pictrans{a}{}{r^u_1 \gets r^t_1 \\ r^u_2 \gets \inp}$} (u)
  (u) edge[loop above] node {$\pictrans{a}{}{r^u_1 \gets r^u_1 \\ r^u_2 \gets r^u_2}$} (u);
\end{tikzpicture}
	\caption{A register-local NRA that falsely triggers the
	overapproximation detection of \cref{alg:det}}\label{fig:postproc_example}
\end{figure}

\vspace{-0.0mm}
\section{Extensions of Register Set Automata}\label{sec:app_extensions}
\vspace{-0.0mm}

\vspace{-0.0mm}
\subsection{RsAs with Register Emptiness Test}\label{sec:rsa-empty}
\vspace{-0.0mm}
An \emph{$\rsa$ with register emptiness test} ($\rsaempty$) is a~tuple $\aut_E
= (Q, \regs, \Deltaempty, I, F)$ where
$Q, \regs, I, F$ are the same as for $\rsa$s and the transition relation $\Deltaempty$ for $\rsaempty$ is defined as
$\Deltaempty \subseteq Q \times \Sigma \times 2^\regs \times 2^\regs \times 2^\regs \times (\regs \to 2^{\regs \cup \set{\inp}}) \times Q$.
The semantics of a~transition $\trans q a {\guardin, \guardnotin, \guardempty, \update}{s}$
is such that~$\aut_E$ can move from state~$q$ to state~$s$ if the $\Sigma$-symbol at
the current position of the input word is~$a$ and the $\datadom$-symbol at the
current position is in all registers from~$\guardin$ and in no register
from~$\guardnotin$ and all registers from~$\guardempty$ are empty; the content of the
registers is updated so that $r_i \gets \bigcup \set{x \mid x \in \update(r_i)}$.

\begin{lemma}\label{lem:todo}
For every $\rsaempty$ $\aut$, there exists an $\rsa$ $\aut'$ with
the same language.
\end{lemma}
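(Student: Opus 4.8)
The plan is to carry out a standard product construction that records inside the finite control which registers currently hold the empty set. Once this information is tracked, the emptiness test $\guardempty$ becomes a purely syntactic check on the state component, so no new register operations are needed. Concretely, I would take $\aut' = (Q \times 2^\regs, \regs, \Delta', I \times \set{\regs}, F \times 2^\regs)$, where the second component of a state names the set of registers known to be empty; it starts as $\regs$ because a run of an $\rsa$ begins with every register equal to $\emptyset$, and the acceptance condition ignores it.

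The key observation making this work is that emptiness of a register \emph{after} an update is determined solely by the update function and by which registers were empty \emph{before} the step — never by the concrete data values involved. Indeed, from the $\rsa$ step rule, which sets $f_2(r)$ to the union of all $f_1(r')$ with $r' \in \update(r) \cap \regs$, together with $\set d$ when $\inp \in \update(r)$, one sees that $f_2(r) = \emptyset$ precisely when $\inp \notin \update(r)$ and $\update(r) \cap \regs \subseteq \set{r' \mid f_1(r') = \emptyset}$. So, for each transition $t\colon \trans q a {\guardin, \guardnotin, \guardempty, \update}{s}$ of $\aut$ and each $E \subseteq \regs$ with $\guardempty \subseteq E$, I would add to $\Delta'$ the transition $\trans{(q,E)}{a}{\guardin, \guardnotin, \update}{(s, E')}$ with
$$E' = \set{r \in \regs \mid \inp \notin \update(r) \text{ and } \update(r) \cap \regs \subseteq E}.$$
Transitions of $\aut$ whose emptiness test fails under $E$ (i.e.\ $\guardempty \not\subseteq E$) are simply dropped; one may also drop those with $\guardin \cap E \neq \emptyset$, although the invariant below shows such transitions can never fire in $\aut'$ anyway. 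The constraint $\guardin \cap \guardnotin = \emptyset$ is inherited, so $\Delta'$ is a legitimate $\rsa$ transition relation, and $|Q \times 2^\regs| = 2^{|\regs|}\cdot|Q|$.

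It then remains to prove the invariant that $\aut'$ reaches $((q,E), f)$ on a prefix $u$ iff $\aut$ reaches $(q, f)$ on $u$ with $E = \set{r \in \regs \mid f(r) = \emptyset}$; both directions follow by an easy induction on the length of the run. In the forward step, a move of $\aut'$ uses a transition of $\Delta'$ that by construction comes from some $t$ of $\aut$ with $\guardempty \subseteq E$; the induction hypothesis says $f$ is the same assignment in both runs and $E$ lists exactly its empty registers, so $t$'s membership, non-membership \emph{and} emptiness guards are all genuinely satisfied, and the displayed formula for $E'$ matches the new set of empty registers computed by the $\rsa$ step rule. The backward step is symmetric. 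Since $q$ is final in $\aut$ iff $(q,E)$ is final in $\aut'$ for every $E$, accepting runs correspond on both sides and $\langof{\aut'} = \langof{\aut}$. I do not anticipate a genuine obstacle: the only point needing a little care is verifying the formula for $E'$ against the $\rsa$ update semantics, and the rest is routine bookkeeping.
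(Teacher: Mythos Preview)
Your proposal is correct and follows essentially the same approach as the paper: both encode in the finite control which registers are currently empty (the paper via a binary vector $\vectof\emptyset$ with $\vectof\emptyset[r_i]=0$ meaning empty, you via the set $E$ of empty registers), drop transitions whose emptiness guard is not satisfied by this bookkeeping, and update the emptiness information deterministically from $\update$. If anything, your formula for $E'$ is stated more carefully than the paper's update rule for $\vectof\emptyset^2$, which as written only covers singleton updates.
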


\begin{proof}
Proof is done by showing the construction of modified $\rsa$ $\aut'$
corresponding to any given $\rsaempty$ $\aut^{=\emptyset}$.

The modification of $\aut'$ appears in the the structure of states,
where we code the information about the empty registers into the states themselves,
and modify the transition relation accordingly.
The information about the emptiness of registers is kept in a~form of
binary vector, denoted $\vectof {\emptyset}$, such that $\vectof {\emptyset}[r_i] = 0$ iff
$r_i = \emptyset$, and $\vectof {\emptyset}[r_i] = 1$ otherwise.
s
Let $\rsaempty$ be a~tuple $\aut^{=\emptyset} = (Q, \regs, \Delta, I, F)$
then the equivalent $\rsa$ is a~tuple
$\aut' = (Q', \regs, \Delta', I', F')$ where
$Q' = Q \times \vectof {\emptyset}$,
$I' = \{(q_i, \vectof {\emptyset}^0) \mid q_i \in I \land \forall r_i \in \regs: \vectof {\emptyset}[r_i]=0)\}$,
$F' = \{(q_f, \vectof {\emptyset})\mid q_f \in F\}$,
and $\Delta' =
  \{\trans{(q_1, \vectof {\emptyset}^1)}{a}{\guardin, \guardnotin, \update}{(q_2, \vectof {\emptyset}^2)} \mid
  \trans{q_1}{a}{\guardin, \guardnotin, S, \update}{q_2} \in \Delta \land \forall r \in S: \vectof {\emptyset}^1[r] = 0\}$, such that
  \begin{equation}
    \vectof {\emptyset}^2[r_i] =
    \begin{cases}
    1 & \text{if } (r_i \mapsto \{r_j\} \in \update \land \vectof {\emptyset}^1[r_j]
        \neq 0)~\lor \\
      &  (r_i \mapsto \{\inp\} \in \update) \text{ and} \\
    0 & \text{otherwise}.
\end{cases}
\end{equation}
\end{proof}

\vspace{-0.0mm}
\subsection{RsAs with Register Equality Test}\label{sec:label}
\vspace{-0.0mm}

An \emph{$\rsa$ with register equality test} ($\rsaeqreg$) is a~tuple $\aut_{Eq}
= (Q, \regs, \Deltaeqreg, I, F)$ where
$Q, \regs, I, F$ are the same as for $\rsa$s and the transition relation $\Deltaeqreg$ for $\rsaeqreg$ is defined as
$\Deltaeqreg \subseteq Q \times \Sigma \times 2^\regs \times 2^\regs \times (\regs \to 2^{\regs}) \times (\regs \to 2^{\regs \cup \set{\inp}}) \times Q$.
The semantics of a~transition $\trans q a {\guardin, \guardnotin, \guardeq, \update}{s}$
is such that~$\aut_E$ can move from state~$q$ to state~$s$ if the $\Sigma$-symbol at
the current position of the input word is~$a$ and the $\datadom$-symbol at the
current position is in all registers from~$\guardin$ and in no register
from~$\guardnotin$ and for all $ r_i \in \guardeq(r)$ it holds that $r_i = r$. The content of the
registers is updated so that $r_i \gets \bigcup \set{x \mid x \in \update(r_i)}$.

\begin{theorem}\label{thm:rsa-eq-emptiness}
 The emptiness problem for $\rsaeqreg$ is undecidable.
\end{theorem}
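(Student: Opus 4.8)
The plan is to reduce the halting problem for two-counter Minsky machines---equivalently, the reachability problem for Petri nets with inhibitor arcs---to non-emptiness of $\rsaeqreg$; since the former problem is undecidable, so is the latter. First I would fix a Minsky machine $\minsky$ with two counters and instructions of the standard form (increment a counter, or test a counter for zero and otherwise decrement it), and build an $\rsaeqreg$ $\autof \minsky$ over the unary alphabet $\Sigma = \{a\}$ whose states are exactly the states of $\minsky$, whose single initial state is the initial state of $\minsky$ (with all registers starting empty, encoding both counters at zero), and whose single final state is the halting state of $\minsky$. The key idea is to represent the value of counter $i$ \emph{not} by the cardinality of a single register, but by the difference $|r_i| - |r_i'|$ of two registers kept so that $r_i' \subseteq r_i$ throughout any run; the register-equality test is what makes this representation usable.

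The three counter operations are then simulated as follows. An increment of counter~$i$ is a transition that reads a data value $d$ fresh with respect to every register (guard $\inp \notin r$ for all $r \in \regs$) and performs $r_i \gets r_i \cup \set\inp$, growing $|r_i|$ by one. A decrement of counter~$i$ reads $d$ with $\inp \in r_i$ and $\inp \notin r_i'$ and performs $r_i' \gets r_i' \cup \set\inp$; this transition is enabled precisely when $r_i \setminus r_i' \neq \emptyset$, i.e.\ when the counter is positive, and it grows $|r_i'|$ by one. A zero-test of counter~$i$ is a transition whose only nontrivial guard is the register-equality guard $r_i = r_i'$, which is enabled precisely when the counter is zero. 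A test-and-decrement instruction therefore yields two transitions out of the corresponding state---the equality-guarded ``zero'' branch and the membership-guarded ``decrement'' branch---and these are mutually exclusive, so the simulation is deterministic up to irrelevant choices (which fresh value to take on an increment, which element of $r_i \setminus r_i'$ to move on a decrement).

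For correctness I would establish, by induction on the length of a run, the invariant that whenever $\autof \minsky$ is in a state $q$ of $\minsky$, for each $i$ we have $r_i' \subseteq r_i$ and $|r_i| - |r_i'|$ equals the value of counter~$i$ in the corresponding configuration of $\minsky$ (the $r_i$'s are moreover pairwise disjoint, since increments draw fresh values and no transition ever copies a value from one counter's registers to another's). From the invariant both directions follow: a halting computation of $\minsky$ is turned into an accepting run of $\autof \minsky$ by picking a fresh data value on each increment, any element of $r_i \setminus r_i'$ on each decrement, and an arbitrary value on each zero-test; conversely, any accepting run traces a sequence of $\minsky$-instructions that forms a legal halting computation, because decrement transitions can be taken only from positive counters and equality-guarded transitions only from zero counters. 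Hence $\langof{\autof \minsky} \neq \emptyset$ iff $\minsky$ halts, and since the construction is effective, decidability of $\rsaeqreg$ emptiness would decide the Minsky halting problem---a contradiction. The one point I would argue with care, and the main obstacle, is that the simulation cannot be ``cheated'': because registers are never cleared or mutually copied during the simulation and increments always use values fresh for all registers, a decrement genuinely moves exactly one element from $r_i \setminus r_i'$ into $r_i'$ and a zero-test genuinely requires the full equality $r_i = r_i'$, so the invariant holds with exact equality $|r_i| - |r_i'| = c_i$ rather than merely as an over-approximation (contrast with the lossy-removal trick needed for plain $\rsa$s in \cref{thm:rsa-emptiness}). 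Everything else is routine.
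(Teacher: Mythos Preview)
Your proposal is correct and takes a genuinely different route from the paper. The paper reduces from coverability in Petri nets with inhibitor arcs, reusing the scaffolding of the $\fomega$-hardness proof (\cref{lem:rsa-emptiness-fomega-hard}): each place becomes a register holding one distinct data value per token, and the register-equality test is used for two purposes---simulating an inhibitor arc (by comparing $r_p$ against an explicitly emptied auxiliary register $r_e$) and upgrading the lossy-remove protogadget to a \emph{non-lossy} one (by accumulating a second copy $r_s$ during removal and checking $r_p = r_s$ at the end to certify that no tokens were dropped). Your construction sidesteps removal entirely: by encoding counter~$i$ as the difference $|r_i| - |r_i'|$ under the invariant $r_i' \subseteq r_i$, both increment and decrement become pure set-insertions, and the zero-test is a single equality guard $r_i = r_i'$. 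This is more elementary and self-contained---it does not depend on the earlier TPN machinery, and the invariant is almost immediate---whereas the paper's non-lossy-remove gadget is the delicate part of its argument. The paper's approach, on the other hand, makes explicit the link to the plain-$\rsa$ hardness proof and isolates exactly which added capability (turning lossy removal into exact removal via an equality check) pushes the model across the decidability boundary.
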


\begin{proof}
The proof is done by reduction from coverability in \emph{Petri nets} with
\emph{inhibitor arcs}, which is an undecidable problem~\cite{reinhardt2008reachability}.
Given a~$\text{PN}_I$ $\net_I$ with inhibitor arcs, we construct a~corresponding
$\rsaeqreg$ $\aut^{=r}_I$.
The process of construction of $\aut^{=r}_I$ follows the
reduction of TPN to $\rsa$ in the proof of
~\cref{lem:rsa-emptiness-fomega-hard} of \cref{sec:proof-rsa-emptiness}.
The structure of the resulting automaton differs in
\emph{protogadgets} whose concatenation is used for construction of \emph{gadgets}
which make up the reduced $\rsaeqreg$.

The following \emph{protogadgets} are used in the reduction:
\begin{enumerate}
    \item The \emph{EmptyEq} protogadget (depicted in~\cref{fig:protogadget_emptyeq}),
    which simulates the inhibitor arc leading from place $p$ is an $\rsaeqreg$ defined
    as:
    \begin{multline*}
      \emptyeqgadgetof {p} =
      (\{q_1, q_2, q_3\}, \regs, \\
      \{\trans{q_1}{a}{\emptyset, \emptyset, \emptyset, \{r_e \mapsto \emptyset\}}{q_2},
      \trans{q_2}{a}{\emptyset, \emptyset, \{r_p \mapsto \{r_e\}\}, \emptyset}{q_3}\},
      \{q_1\}, \{q_3\}).
    \end{multline*}
    Intuitively, the $\emptyeqgadget$ simulates a register
    emptiness test. At first, the protogadget explicitly
    assigns the value of empty set to the register $r_e$
    and then it compares its equality with the content of the register
    representing the place in which the inhibitor arc
    originates.

    \item The \emph{New Token} (depicted in \cref{fig:protogadget_new_token_eq})
    which simulates adding a~token to a~place \emph{p}, is an $\rsaremempty$ defined in the following way:
    $$
      \newtokengadgetof {p} =
      (\{q_1, q_2\}, \regs,
      \{\trans{q_1}{a}{\emptyset, \regs, \emptyset, \{\regof{p} \mapsto \{r_p, \inp\}\}\}}{q_2}\}, \{q_1\}, \{q_2\}).
    $$
    Intuitively, for each arc originating in the transition and ending in a~particular place,
    a token is added to the register representing the destination.
    The guard ensures that the added value is not already present within the register,
    so that the number of values actually increases.

    \item The \emph{Non-lossy Remove Token} (depicted in \cref{fig:protogadget_nonlossy_tokenrem})
    which simulates removal of a~token from a~place $p$ is an $\rsaeqreg$ defined in the following way:
    \begin{multline*}
      \nonlossyrmgadgetof {p} =
      (\{q_1, q_2, q_3\}, \regs,
      \{\trans{q_1}{a}{\{r_p\}, \emptyset, \emptyset, \{r_\inp \mapsto \{\inp\}, r_s \mapsto \{\inp\}\}}{q_2}, \\
      \trans{q_2}{a}{\{r_p\}, \{r_\inp\}, \emptyset, \{r_{\mathit{tmp}} \mapsto \{r_{\mathit{tmp}}, inp\},
      r_s \mapsto \{r_{\mathit{tmp}}, r_\inp, \inp\}\}}{q_2}, \\
      \trans{q_2}{a}{\emptyset, \emptyset, \{r_p \mapsto \{r_s\}\}, \{r_p \mapsto \{r_{\mathit{tmp}}\}\}}{q_3}
      \}, \{q_1\}, \{q_2\}).
    \end{multline*}
    Intuitively, for each arc originating in a~place $p$ and terminating in transition, respective number
    of values has to be removed from the register representing place $p$.
    Therefore on each \emph{protogadget} of this kind, one value is removed from a~register
    representing the source place in a~lossless manner. The quality of being lossless is necessary in order
    to sustain the semantics of source PN. Otherwise, some transitions may be enabled even though
    they were not in the source PN.
\end{enumerate}

\begin{figure}[t]
\centering
\begin{subfigure}[b]{0.4\linewidth}
\begin{center}
\scalebox{0.8}{
  \begin{tikzpicture}
  \tikzset{
    ->, 
    >=stealth',
    initial text=$ $, 
    node distance=30mm,
  }

  \node[initial,state] (q1) {$q_1$};
  \node[state] (q2) [right of=q1] {$q_2$};
  \node[accepting,state] (q3) [right of=q2] {$q_3$};

 \draw (q1) edge[below] node {$\pictrans a {} {\regof e \gets \emptyset}$} (q2);
 \draw (q2) edge[below] node {$\pictrans a {\regof p = \regof e} {}$} (q3);
\end{tikzpicture}
}
\end{center}
\caption{The $\emptyeqgadgetof {p}$ protogadget.}
\label{fig:protogadget_emptyeq}
\end{subfigure}
\hfill
\begin{subfigure}[b]{0.4\linewidth}
\begin{center}
\scalebox{0.8}{
  \begin{tikzpicture}
  \tikzset{
    ->, 
    >=stealth',
    initial text=$ $, 
    node distance=30mm,
  }

  \node[initial,state] (q1) {$q_1$};
  \node[accepting,state] (q2) [right of=q1] {$q_2$};

 \draw (q1) edge[below] node {$\pictrans a {\{\inp \notin r\}_{r \in \regs}}
  {\regof p \gets \regof p \cup \{\inp\}\\~}$} (q2);
\end{tikzpicture}
}
\end{center}
\caption{The $\newtokengadgetof {p}$ protogadget.}
\label{fig:protogadget_new_token_eq}
\end{subfigure}\\[4mm]
\begin{subfigure}[b]{0.5\linewidth}
\begin{center}
\scalebox{0.8}{
  \begin{tikzpicture}
  \tikzset{
    ->, 
    >=stealth',
    initial text=$ $, 
    node distance=30mm,
  }

  \node[initial,state] (q1) {$q_1$};
  \node[state] (q2) [right of=q1] {$q_2$};
  \node[accepting,state] (q3) [right of=q2] {$q_3$};

 \draw (q1) edge[below] node {$\pictrans a {\inp \in \regof p} {\regof \inp
  \gets \{\inp\} \\ \regof s \gets \{\inp\}}$} (q2);
 \draw (q2) edge[loop above] node {$\pictrans a {\inp \in \regof p \\ \inp
  \notin \regof \inp} {\regof \tmp \gets \regof \tmp \cup \{\inp\} \\
  \regof s \gets \regof \tmp \cup \regof \inp \cup \{\inp\}}$} (q2);
 \draw (q2) edge[below] node {$\pictrans a {} {\regof p \gets \regof \tmp \\
  \regof p = \regof s}$} (q3);
\end{tikzpicture}
}
\end{center}
\caption{The $\nonlossyrmgadgetof {p}$ protogadget.}
\label{fig:protogadget_nonlossy_tokenrem}
\end{subfigure}
\caption{Protogadgets used in the construction of the $\rsaeqreg$ for $\net_{I}$.}
\end{figure}

\end{proof}

\vspace{-0.0mm}
\subsection{RsAs with Removal}\label{sec:rsa_rem}
\vspace{-0.0mm}

An $\rsa$ with \emph{removal} ($\rsarem$) is a~tuple
$\aut_{R}
= (Q, \regs, \Deltarem, I, F)$,
where $Q, \regs, I, F$ are the same as for $\rsa$s and the transition relation
$\Deltarem$ is defined as
$\Deltarem \subseteq Q \times \Sigma \times 2^\regs \times 2^\regs \times 2^\regs \times (\regs \to 2^{\regs \cup \set{\inp}}) \times Q$.
The semantics of a~transition $\trans q a {\guardin, \guardnotin, \remove, \update}{s}$
is such that~$\aut_{R}$ can move from state $q$ to state $s$ if the
$\Sigma$-symbol at the current position of the input word is~$a$ and the
$\datadom$-symbol at the current position is in all registers from~$\guardin$
and in no register from~$\guardnotin$; with respect to the~$\remove$ and~$\update$, the content of the
registers is updated so that for all~$r_i \in \regs$, if $r_i \in \remove$, then
$r_i \gets \bigcup \set{x \mid x \in \update(r_i)} \setminus \{\inp\}$, else $r_i \gets \bigcup \set{x \mid x \in \update(r_i)}$.

\begin{theorem}\label{thm:}
The emptiness problem for $\rsarem$ is $\fomega$-complete.
\end{theorem}

\begin{proof}
$\fomega$-hardness follows trivially from \cref{thm:rsa-emptiness}.
To prove membership in~$\fomega$, we can reuse the proof of
  \cref{lem:rsa-emptiness-in-fomega}, with the only difference being in the
  definition of~$\outrans$ for a~transition $\trans q a {\guardin, \guardnotin,
  \remove, \update}{s}$ as follows:
  \begin{itemize}
  \item  $\outrans = \unitof{\dst} + \unitof s$ where
    $\dst = \set{r_i \in \regs \mid (\set{\inp} \cup \regionof g) \cap
      \update(r_i) \neq \emptyset \land r_i \notin \remove}$.
  \end{itemize}
  Everything else can stay the same.
\end{proof}

\vspace{-0.0mm}
\subsection{RsAs with Removal and Register Emptiness Test}\label{sec:rsa_rem_empty}
\vspace{-0.0mm}

An $\rsa$ with \emph{removal and register emptiness test} ($\rsaremempty$) is a~tuple
$\aut_{RE}
= (Q, \regs, \Deltaremempty, I, F)$,
where $Q, \regs, I, F$ are the same as for $\rsa$s and the transition relation
$\Deltaremempty$ is defined as
$\Deltaremempty \subseteq Q \times \Sigma \times 2^\regs \times 2^\regs \times 2^\regs \times 2^\regs \times (\regs \to 2^{\regs \cup \set{\inp}}) \times Q$.
The semantics of a~transition $\trans q a {\guardin, \guardnotin, \guardempty, \remove, \update}{s}$
is such that~$\aut_{RE}$ can move
from state $q$ to state $s$ if the
$\Sigma$-symbol at the current position of the
input word is~$a$
and the $\datadom$-symbol at the
current position is in all registers from~$\guardin$ and in no register
from~$\guardnotin$ and all registers from~$\guardempty$ are empty; with respect
to the $\remove$ and $\update$,
the content of the
registers is updated so that for all~$r_i \in \regs$, if $r_i \in \remove$, then
$r_i \gets \bigcup \set{x \mid x \in \update(r_i)} \setminus \{\inp\}$, else $r_i \gets \bigcup \set{x \mid x \in \update(r_i)}$.

\begin{theorem}
 The emptiness problem for $\rsaremempty$ is undecidable.
\end{theorem}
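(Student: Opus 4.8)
The plan is to reduce the \emph{reachability} problem for transfer Petri nets (TPNs), which is undecidable~\cite{DufourdFS98}, to non-emptiness of $\rsaremempty$. (The statement is in fact also an immediate consequence of $\rsaremempty$ subsuming $\rsarem$, whose emptiness problem is already undecidable; we nevertheless give the reduction explicitly, since it is short and shows clearly how the two new features are used together.) Given a TPN $\net = (P, T, M_0)$ and a~target marking $M_f$, we construct an $\rsaremempty$ $\autof \net$ such that $M_f$ is reachable in $\net$ iff $\langof{\autof \net} \neq \emptyset$.

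The construction reuses almost verbatim the one from the proof of \cref{lem:rsa-emptiness-fomega-hard}: a~central state $\qmain$ visited before and after simulating each transition firing, one register $\regof p$ per place $p \in P$, tokens encoded as pairwise distinct data values stored in the place-registers (distinctness enforced, as there, by only ever inserting into a~register a~value $d$ with $d \notin r$ for every $r \in \regs$), one gadget per TPN transition obtained by concatenating protogadgets, an initialisation gadget encoding $M_0$, and a~final gadget leading to the single final state $\qfin$. We make exactly two changes. First, the lossy-removal protogadget $\lossyremgadget$ is replaced by a~\emph{non-lossy} one that uses the $\remove$ operation directly: on input $\pair a d$ with guard $d \in \regof p$ it performs $\regof p \gets \regof p \setminus \set \inp$, which removes \emph{exactly one} token (exactly one data value, by distinctness) from $\regof p$; this makes the simulation of each TPN transition's input effect and of its transfer function precise rather than lossy. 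Second, the final gadget, which in \cref{lem:rsa-emptiness-fomega-hard} performed a~lossy coverability check, is replaced by an \emph{exact} reachability check: for each $p \in P$ it removes (non-lossily) exactly $M_f(p)$ tokens from $\regof p$, and then takes one transition whose emptiness guard $\guardempty$ equals $\set{\regof p \mid p \in P}$, moving to $\qfin$. The $\movegadget$ and $\newtokengadget$ protogadgets, the handling of the transfer function (first moving contents to primed copies of the target registers and then unpriming the names, to avoid clashes), and the wiring of the gadgets around $\qmain$ are taken over unchanged from \cref{lem:rsa-emptiness-fomega-hard}.

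The correctness argument would mirror that of \cref{lem:rsa-emptiness-fomega-hard}, except that the simulation invariant is strengthened from an inequality to an equality: whenever the run of $\autof \net$ is in a~configuration $(\qmain, f)$, the corresponding TPN marking $M$ satisfies $|f(\regof p)| = M(p)$ for every $p \in P$; this equality is preserved by each transition gadget precisely because removals are now non-lossy and each fresh-token insertion adds exactly one new value. For the forward direction, a~firing sequence $M_0 \fire{t_1} \cdots \fire{t_n} M_f$ is mimicked step by step, ending in $(\qmain, f)$ with $|f(\regof p)| = M_f(p)$; the final gadget then removes all $M_f(p)$ tokens from each $\regof p$, leaving every place-register empty, so the $\guardempty$ test succeeds and $\qfin$ is reached. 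For the backward direction, an accepting run induces a~firing sequence to some marking $M$ with $|f(\regof p)| = M(p)$ at the last visit of $\qmain$; arriving at $\qfin$ forces, via the emptiness guard, that $M(p) = M_f(p)$ for all $p$, i.e.\ $M = M_f$. The main obstacle is inherited from \cref{lem:rsa-emptiness-fomega-hard}, namely faithfully simulating the transfer function through the $\movegadget$ protogadgets and carrying out the bookkeeping that verifies the invariant along the gadgets; the genuinely new ingredient is the interplay of the two extra features --- $\remove$ turns ``remove at least one'' into ``remove exactly one'', and $\guardempty$ lets the automaton certify that the \emph{exact} target marking (not merely a~covering one) has been reached, which is precisely what lifts the reduction from coverability (decidable) to reachability (undecidable).
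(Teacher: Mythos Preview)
Your proof is correct but takes a genuinely different route from the paper's own appendix proof. The paper reduces from \emph{reachability in Petri nets with inhibitor arcs}: each inhibitor arc from a place~$p$ is simulated by an $\emptygadget$ protogadget that uses the $\guardempty$ feature directly on~$\regof p$, ordinary input/output arcs are handled by $\tokenremgadget$ (using $\remove$) and $\newtokengadget$, and there is no need for transfer at all. You instead reduce from \emph{reachability in transfer Petri nets}, reusing the construction of \cref{lem:rsa-emptiness-fomega-hard} almost wholesale and patching only the two places where lossiness was introduced: $\lossyremgadget$ becomes an exact single-value removal via $\remove$, and the final coverability gadget becomes an exact reachability check via a single $\guardempty$ test on all place registers after removing the target counts.

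Both reductions are sound. Your approach buys maximal reuse of the already-proved \cref{lem:rsa-emptiness-fomega-hard} and makes very transparent which of the two new primitives fixes which defect of the coverability reduction; the paper's approach avoids the $\movegadget$/priming machinery for the transfer function entirely (since ordinary Petri nets with inhibitor arcs need no transfers) and instead exercises the $\guardempty$ feature inside the transition gadgets themselves, not only at the end. One small remark: your parenthetical that the result ``is an immediate consequence of $\rsaremempty$ subsuming $\rsarem$'' is a bit delicate, since in \cref{sec:rsa-extensions} the model called $\rsarem$ is already described as including the emptiness test; in any case your explicit reduction stands on its own.
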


\begin{proof}
The proof is done by showing the reducibility from \emph{reachability}
in \emph{Petri nets} with \emph{inhibitor arcs}, which is an undecidable problem.

Given a~$\text{PN}_I$ $\net_I$ with inhibitor arc, we construct the $\rsaremempty$ $\autof {\net_I}$.
The structure of the $\rsaremempty$ $\autof {\net_I}$ is similar to the structure of $\rsa$ $\autof {\net}$
in the proof of the $\fomega$-hardness of emptiness in $\rsa$,
which can be seen in \cref{lem:rsa-emptiness-fomega-hard} of \cref{sec:proof-rsa-emptiness}.

The only difference is in \emph{gadgets} used for simulation of respective transition in $\text{PN}_\mathit{I}$
and gadget for doing the reachability test.
These are created by concatenation of respective \emph{protogadgets}, which are defined in the following way:

\begin{figure}[t]
\centering
\begin{subfigure}[b]{0.4\linewidth}
\begin{center}
\scalebox{1}{
  \begin{tikzpicture}
  \tikzset{
    ->, 
    >=stealth',
    initial text=$ $, 
    node distance=30mm,
  }

  \node[initial,state] (q1) {$q_1$};
  \node[accepting,state] (q2) [right of=q1] {$q_2$};

 \draw (q1) edge[below] node {$\pictrans a {\regof{p} = \emptyset \\~} {}$} (q2);
\end{tikzpicture}
}
\end{center}
\caption{The $\emptygadgetof {p}$ protogadget.}
\label{fig:protogadget_empty}
\end{subfigure}
\hfill
\begin{subfigure}[b]{0.4\linewidth}
\begin{center}
\scalebox{1}{
  \begin{tikzpicture}
  \tikzset{
    ->, 
    >=stealth',
    initial text=$ $, 
    node distance=30mm,
  }

  \node[initial,state] (q1) {$q_1$};
  \node[accepting,state] (q2) [right of=q1] {$q_2$};

 \draw (q1) edge[below] node {$\pictrans a {\{\inp \notin r\}_{r \in \regs}}
  {\regof p \gets \regof p \cup \{\inp\}\\~}$} (q2);
\end{tikzpicture}
}
\end{center}
\caption{The $\newtokengadgetof {p}$ protogadget.}
\label{fig:protogadget_new_token}
\end{subfigure}\\[6mm]
\begin{subfigure}[b]{0.4\linewidth}
\begin{center}
\scalebox{1}{
  \begin{tikzpicture}
  \tikzset{
    ->, 
    >=stealth',
    initial text=$ $, 
    node distance=30mm,
  }

  \node[initial,state] (q1) {$q_1$};
  \node[accepting,state] (q2) [right of=q1] {$q_2$};

 \draw (q1) edge[below] node {$\pictrans a {\inp \in \regof p}
  {\regof p \gets \regof p \setminus \{\inp\}\\~}$} (q2);
\end{tikzpicture}
}
\end{center}
\caption{The $\tokenremgadgetof {p}$ protogadget.}
\label{fig:protogadget_tokenrem}
\end{subfigure}
\caption{Protogadgets used in the construction of the $\rsaremempty$ for $\net_I$.}
\end{figure}

\begin{enumerate}
    \item The \emph{Empty} protogadget (depicted in \cref{fig:protogadget_empty}), which simulates the inhibitor
    arc leading from the place \emph{p} is an $\rsaremempty$ defined as:
    $$
      \emptygadgetof {p} =
      (\{q_1, q_2\}, \regs,
      \{\trans{q_1}{a}{\emptyset, \emptyset, \{r_p\},
      \emptyset, \{\regof{p} \mapsto \emptyset\}\}}{q_2}\}, \{q_1\}, \{q_2\}).
    $$
    Intuitively, since the inhibitor arc enables its transition only if the source place is empty,
    the respective \emph{protogadget} checks on its guard whether
    the register representing the source place is
    empty as well.

    \item The \emph{New Token} protogadget (depicted in \cref{fig:protogadget_new_token})
    which simulates adding a~token to a~place \emph{p}, is an $\rsaremempty$ defined in the following way:
    $$
      \newtokengadgetof {p} =
      (\{q_1, q_2\}, \regs,
      \{\trans{q_1}{a}{\emptyset, \regs, \emptyset, \emptyset, \{\regof{p} \mapsto \{r_p, \inp\}\}\}}{q_2}\}, \{q_1\}, \{q_2\}).
    $$
    Intuitively, for each arc originating in the transition and ending in a~particular place,
    a token is added to the register representing the destination.
    The guard ensures that the added value is not already present within the register,
    so that the number of values actually increases.

    \item The \emph{Remove Token} protogadget (depicted in \cref{fig:protogadget_tokenrem})
    which simulates the removal of a~token from a~place \emph{p}, is an $\rsaremempty$ defined
    in the following way:
    $$
      \tokenremgadgetof {p} =
      (\{q_1, q_2\}, \regs,
      \{\trans{q_1}{a}{\emptyset, \{r_p\}, \emptyset,
      \{r_p\}, \{\regof{p} \mapsto \{r_p\}\}\}}{q_2}\}, \{q_1\}, \{q_2\}).
    $$
    Intuitively, for each token removed from the source place, one value is removed
    from the register which represents that place.
    \qedhere
\end{enumerate}

\end{proof}

\vspace{-0.0mm}
\section{Relationship with HRAs}\label{sec:app-hra}
\vspace{-0.0mm}

In this section, we address the relationship with history register automata~\cite{tzevelekos-hra}.
First, we give a definition of a HRA. We slightly adjust the definition 
of~\cite{tzevelekos-hra} in order to better match the settings of the paper. 
In particular, we modify the HRA to run over data words, we do not consider 
initial assignment of registers, and use only set-registers (instead of RA-like 
registers and histories as it is defined in~\cite{tzevelekos-hra}). Note that 
the absence of RA-like registers can be simulated by set-registers with an additional 
removal to keep the set singleton.

A \emph{history register automaton} (HRA)
over $\datadom \times \Sigma$ is a tuple
$\aut_H = (Q, \regs, \Delta, I, F)$, where $Q, \regs, I, F$ is
the same as in RsAs, and $\Delta \subseteq Q \times \Sigma \times ((2^\regs \times
2^\regs) \cup 2^\regs) \times Q$. In this model, there are two types of transitions:
\begin{enumerate}
  \item Updating transitions $t_{up} = \trans q {a} {R_g, R_{up}} {q'}$.
    $\aut_H$ can use such a $t_{up}$ if the current state is $q$, 
    the current $\Sigma$-symbol is $a$ and $\inp$ is in all the registers in $R_g$
    and in none of the registers in $\regs \setminus R_g$. The contents of 
    the registers are then updated so that $\inp$ is in all the registers in 
    $R_{up}$ and in none of the registers in $\regs \setminus R_{up}$.
  \item Resetting transitions, $t_{res} = \transsimple q {R_{clr}} {q'}$.
    $\aut_H$ can use such a $t_{res}$ if the $q$ is the current state.
    They do not consume any input symbols (they are $\varepsilon$-transitions).
    The register contents are then updated so that all the registers in $R_{clr}$
    are emptied (the other registers are left as they are).  
\end{enumerate}
Definitions for a \emph{configuration} and an \emph{initial configuration}
are the same as for an RsA. Let $c_1 = (q_1, f_1), c_2 = (q_2, f_2)$
be two configurations of $\aut_H$. We say that $\aut_H$ can make a step
from $c_1$ to $c_2$ over $(a,d)$ using an update transition
$t_{up} = \trans {q_1} {a} {R_g, R_{up}} {q_2}$,
denoted as $c_1 \stepofusing{(a, d)}{t_{up}} c_2$, iff
\begin{enumerate}
  \item $\forall r \in R_g\colon d \in f_1(r) $, 
  \item $\forall r \in \regs \setminus R_g \colon d \notin f_1(r) $, 
  \item $\forall r \in R_{up} \colon f_2(r) = f_1(r) \cup \set{d} $, and
  \item $\forall r \in \regs \setminus R_{up} \colon f_2(r) = f_1(r)
    \setminus \set{d} $. 
\end{enumerate}
$\aut_H$ can also make a step from $c_1$ to $c_2$ using a reset transition
$t_{res} = \transsimple {q_1} {R_{clr}} {q_2}$, consuming no input, denoted as
$c_1 \stepofusing{\varepsilon}{t_{res}} c_2$, iff
\begin{enumerate}
  \item $\forall r \in R_{clr} \colon f_2(r) = \emptyset $, and
  \item $\forall r \in \regs \setminus R_{clr} \colon f_2(r) = f_1(r) $.
\end{enumerate}
The run on $\aut_H$ and the language accepted by $\aut_H$ have the same definitions as for RsAs.
We say that $\aut_H$ is \emph{deterministic} if for all states $q_1 \in Q$
it holds that either
\begin{enumerate}
  \item there is only one transition originating in $q_1$, or
  \item there are no reset transitions originating in $q_1$, and for all $a \in \Sigma, q_2 \in Q$
  	it holds that there are no distinct transitions $\trans {q_1} a {R^1_{g}, R^1_{up}} {q_2}, 
    \trans {q_1} a {R^2_{g}, R^2_{up}} {q_2} \in \Delta$
    such that $R^1_{g} = R^2_{g}$.
\end{enumerate}

Since the HRAs are using removal semantics, it is natural to compare them with RsA$^{rm}$.

\begin{proposition}\label{proposition:hra_subseteq_rsa}
  HRA $\subseteq$ RsA$^{rm}$
\end{proposition}
\begin{proof}
	We show that any HRA $\aut_H = (Q_H, \regs_H, \Delta_H, I_H, F_H)$
	can be converted to an RsA$^{rm}$ $\aut_R = (Q_R, \regs_R, \Delta_R, I_R, F_R)$ such that
	$\langof{\aut_H} = \langof{\aut_R}$.
	We keep the states, the initial states, and registers the same, i.e., $Q_H=Q_R, I_H = I_R,
	\regs_H = \regs_R$. We convert all update
	transitions of $\aut_H$ $\trans {q_1} {a} {R_g, R_{up}} {q_2} \in \Delta_H$ to
	$\trans {q_1} a {\guardin, \guardnotin, \update, \remove} {q_2} \in \Delta_R$, where
	$\guardin = R_g, \guardnotin = \regs_H \setminus R_g, \remove = R_g \setminus R_{up}$, and
	for all $r \in \regs_H$ if $r \in R_{up}$, then $\updateof{r} = \set{r, \inp}$, otherwise
	$\updateof{r} = \set{r}$.

	To deal with reset transitions, we find will sequences of transitions in $\aut_H$,
	such that the last transition is an update transition, and all the previous transitions
	are reset transitions. We also make sure the transitions form a path in $\aut_H$,
	and that no transition is in the sequence more than once. We then create a transition in
	$\aut_R$ that executes the transition sequence in one step.

	We do so by first finding all sequences of transitions
	$t_1 = \transsimple {q_1} {R^1_{clr}} {q_2}, t_2 =\transsimple {q_2} {R^2_{clr}} {q_3},
	\ldots, t_n = \trans {q_n} a {R_g, R_{up}} {q_{n+1}}$, where
	$\forall 1 \leq i \leq n \colon t_i \in \Delta_H$, and $\forall 1\leq i < j \leq n \colon
	t_i \neq t_j$. For each such sequence, where
	$R_g \cap \bigcup_{i=1}^{n-1} R^i_{clr} = \emptyset$, we add a transition
	$t = \trans {q_1} a {\guardin, \guardnotin, \update, \remove} {q_n}$ to $\Delta_R$, where
	$\guardin = R_g, \guardnotin = \regs \setminus R_g, \remove = R_g \setminus R_{up}$, and
	for all $r \in \regs_H$ if $r \in \bigcup_{i=1}^{n-1} R^i_{clr}$, then $\updateof{r} = y$,
	otherwise $\updateof{r} = \set{r} \cup y$, where $y = \set{\inp}$ if $r \in R_{up}$ and
	$y = \emptyset$ otherwise. 

	The final states $F_R$ of $\aut_R$ will then be the states $F_H$ along with states that can
	reach any state $q_f \in F_H$ by a sequence of reset transitions.
\end{proof}

Notice that the same can be done for deterministic HRAs without introducing any non-determinism,
which will be useful when relating the models' deterministic variants.
\begin{corollary}\label{corollary:dhra_subseteq_drsa}
  DHRA $\subseteq$ DRsA$^{rm}$ 
\end{corollary}

The other direction of \cref{proposition:hra_subseteq_rsa} was left as an open problem.

\subsection{Relating DRsAs and DHRAs}

To compare the expressive power of DRsA$^{rm}$s and DHRAs we will use the language
$\langtransact$ of
the DRsA$^{rm}$ $\aut$ shown in \cref{fig:DRsA_no_DHRA}. $\langtransact$ is a language over
the alphabet $\Sigma = \{a, co, rb, \#\}$ and the data domain $\datadom$.
The semantics of $\langtransact$ is as follows --- data values of the symbol $a$ are
\emph{committed} if the next non-$a$ symbol is $co$ or \emph{rolled back}
if the next non-$a$ symbol is $rb$. Words are only part of $\langtransact$
if their last symbol is $\#$ and its data value was committed earlier in the word
($\#$ may only appear as the last symbol of a word).

\begin{figure}
	\centering
	\begin{tikzpicture}
  \tikzset{
    ->, 
    >=stealth',
    initial text=$ $, 
    node distance=30mm,
  }

  \node[initial,state] (q) {$q$};
  \node[state, accepting] (f) [right of=q] {$f$};


 \draw (q) edge[loop above] node {$\pictrans {a} {} {r_2 \gets r_2 \cup \{\inp\}}$} (q);
 \draw (q) to [out=330,in=300,looseness=8]  node[yshift=-7mm, xshift=3mm] {$\pictrans {co} {} {r_1 \gets r_1 \cup r_2\\ r_2 \gets \emptyset}$} (q);
 \draw (q) to [out=240,in=210,looseness=8] node[yshift=-5mm, xshift=-3mm] {$\pictrans {rb} {} {r_2 \gets \emptyset}$} (q);

 \draw (q) edge node[above,yshift=-3mm] {$\pictrans \# {\inp \in r_1} {}$} (f);



\end{tikzpicture}
	\caption{A DRsA$^{rm}$ $\aut$ accepting the language $\langtransact$}\label{fig:DRsA_no_DHRA}
\end{figure}

\begin{lemma}\label{lemma:drsa_notsubseteq_dhra}
	DRsA$^{rm}$ $\not\subseteq$ DHRA
\end{lemma}

\begin{proof}\sloppy
	Let us take the language $\langtransact$ as defined above and assume there is
	a deterministic HRA $\aut_H$ with $n$ registers accepting it.
	Now let us look at the word $w = \pair a
	{d_1} \pair {co} {\cdot} \pair a {d_2} \pair {co} {\cdot} \ldots
	\pair a {d_n} \pair {co} {\cdot} \pair a {d_{n+1}} \pair {co} {\cdot} \pair \# d$,
	where $\forall i,j \colon i \neq j \implies d_i~\neq~d_j$.
	The word $w$ belongs in $\langtransact$ iff $\exists i \in \{1, \ldots ,n+1\} \colon d = d_i$.
	
	Because $\aut_H$ is deterministic (therefore there is only one possible configuration
	at any given point in the input word),
	$\aut_H$ must store every data value of $a$ in $w$ until a $\#$ appears.
	If the data value of the $k$-th $a$ in $w$ were not stored in
	some register of $\aut_H$ (or the register was emptied before $\#$ was reached),
	there is no way for $\aut_H$ to distinguish
	between $w$ where $d = d_k$, which belongs in $\langtransact$,
	and $w$ where $d$ is a value not equal to any other
	data value in $w$, which does not belong in $\langtransact$.

	Using the pigeonhole principle we can then deduce that at
	least two data values of $a$~in $w$ must be stored in the same register.
	Let $l,m \in \mathbb{N}$, such that $l<m$ and $d_l$ and $d_m$ are the first
	two data values of $a$ stored
	in one register. We then look at the word $w'$ that
	is the same as $w$, except the pair $\pair {co} \cdot$ following
	the $m$-th $a$ has been replaced with the pair $\pair {rb} \cdot$.
	As $\aut_H$ is deterministic and words $w$ and $w'$ are the same until
	after $\pair a {d_m}$ appears, we know that when reading $w'$, $d_l$ and $d_m$
	will be stored in the same register, but $d_l$ was committed, whereas $d_m$ was rolled back.
	This means that $\aut_H$ loses the distinction between $d_l$ and $d_m$ and would either
	accept $w'$ where $d = d_m$ or reject $w'$ where $d = d_l$. This is a~contradiction
	with the assumption that $\aut_H$ accepts $\langtransact$ and we can conclude that
	no DHRA can accept $\langtransact$.
\end{proof}

Thus, we have shown that the deterministic variant of RsA$^{rm}$ is strictly more
expressive than the deterministic variant of HRA.
\begin{proposition}
	DHRA $\subsetneq$ DRsA$^{rm}$
\end{proposition}
\begin{proof}
	Follows from \cref{corollary:dhra_subseteq_drsa} and
	\cref{lemma:drsa_notsubseteq_dhra}.
\end{proof}

\fi

\end{document}